\newtheorem{thm}{Theorem}[section]
\newtheorem{lem}[thm]{Lemma}
\newtheorem{prop}[thm]{Proposition}
\newtheorem*{prob*}{Problem}
\newtheorem*{thm*}{Theorem}
\theoremstyle{definition}
\newtheorem{defn}[thm]{Definition}
\newtheorem*{defn*}{Definition}
\newtheorem{rem}[thm]{Remark}
\numberwithin{equation}{section}
\newcommand{\tx}{\tilde{x}}
\newcommand{\ty}{\tilde{y}}
\newcommand{\C}{\mathbb C}
\newcommand{\Y}{\mathbb Y}
\newcommand{\Z}{\mathbb Z}
\newcommand{\Zp}{\mathbb Z_{\geq 0}}
\newcommand{\Mf}{\mathfrak{M}}
\newcommand{\tMf}{\widetilde{\mathfrak{M}}}
\newcommand{\DC}{\mathcal{D}}
\newcommand{\HC}{\mathcal{H}}
\DeclareMathOperator{\Conf}{Conf}
\DeclareMathOperator{\Meixner}{Meixner}
\DeclareMathOperator{\const}{const}
\DeclareMathOperator{\Prob}{Prob} 
\DeclareMathOperator{\Pf}{Pf} 
\DeclareMathOperator{\diag}{diag}
\begin{document}
\title[The $z$-measures on partitions and Pfaffian processes]
 {\bf{The $z$-measures on partitions, Pfaffian point processes, and  the matrix
hypergeometric kernel}}

\author{Eugene Strahov}
\address{Department of Mathematics, The Hebrew University of
Jerusalem, Givat Ram, Jerusalem
91904}\email{strahov@math.huji.ac.il}

\begin{abstract}
We consider a point process on one-dimensional lattice originated
from the harmonic analysis on the infinite symmetric group, and
defined by the $z$-measures with the deformation (Jack) parameter
2. We derive an exact Pfaffian formula for the correlation
function of this process. Namely, we prove that the correlation
function is given as a Pfaffian with a $2\times 2$ matrix kernel.
The kernel is given in terms of the Gauss hypergeometric
functions, and can be considered as a matrix analogue of the
Hypergeometric kernel introduced by A. Borodin and G. Olshanski
\cite{Borodin-4}. Our  result holds for all values of admissible
complex parameters.
$$
$$
 \textsc{Keywords}. Random partitions, Young
diagrams, correlation functions, Pfaffian point processes, the
Meixner orthogonal polynomials
\end{abstract}
\thanks{
Supported by US-Israel Binational Science Foundation (BSF) Grant
No. 2006333,
 and by Israel Science Foundation (ISF) Grant No. 0397937.\\
}
\maketitle \tableofcontents
\section{Introduction}
It is well-known that determinantal point processes appear in
different areas of mathematical physics, probability theory and
statistical mechanics. The theory of random Hermitian matrices
(see, for example, Deift \cite{deift}), random growth models
(Johansson \cite{johansson,johansson1}), the theory of random
power series (Peres and Vir$\acute{\mbox{a}}$g \cite{peres}) are
among numerous topics of current research where the main problems
are reduced to investigation of determinantal point processes. We
refer the reader to surveys by Soshnikov \cite{soshnikov}, and by
Hough, Krishnapur,  Peres, and Vir$\acute{\mbox{a}}$g \cite{hough}
for definitions and for different properties of determinatal point
processes.

Representation theory and the harmonic analysis on the infinite
symmetric and the infinite-dimensional unitary groups is yet
another area of mathematics where the determinantal processes play
a crucial role. The relation between determinantal processes and
representation theory of such groups was discovered by Borodin and
Olshanski in the series of papers, see Refs.
\cite{Borodin-1,Borodin-2,Borodin-3,olshanski2003}. Let us briefly
describe this relation.

 Let $S(\infty)$ denote the group
whose elements are finite permutations of $\{1,2,3,\ldots\}$. The
group $S(\infty)$ is called the infinite symmetric group, and it
is a model example of a "big" group. Set
$$ G=S(\infty)\times S(\infty),
$$
$$
K=\diag S(\infty)=\left\{(g,g)\in G \mid g\in
S(\infty)\right\}\subset G.
$$
Then $(G,K)$ is an infinite dimensional Gelfand pair in the sense
of Olshanski \cite{olshanskiGelfandPairs}. It can be shown that
the biregular spherical representation of $(G,K)$ in the space
$\ell^2\left(S(\infty)\right)$ is irreducible. Thus the
conventional scheme of noncommutative harmonic analysis is not
applicable to the case of the infinite symmetric group.

In 1993, Kerov, Olshanski and Vershik \cite{KOV1} (Kerov,
Olshanski and Vershik \cite{KOV2} contains the details)
 constructed a family $\{T_z: z\in\C\}$ of unitary representations
 of the bisymmetric infinite group $G=S(\infty)\times
 S(\infty)$. Each representation $T_z$ acts in the Hilbert space
 $L^2(\mathfrak{S},\mu_t)$, where $\mathfrak{S}$ is a certain compact space called the space of
 virtual permutations, and $\mu_t$ is a distinguished $G$-invariant probability measure on
 $\mathfrak{S}$ (here $t=|z|^2$). The representations $T_z$
 (called the generalized regular representations) are reducible.
 Moreover, it is possible to extend the definition of $T_z$ to the
 limit values $z=0$ and $z=\infty$, and it turns out that
 $T_{\infty}$ is equivalent to the biregular representation
 of $S(\infty)\times S(\infty)$. Thus, the family $\{T_z\}$ can be
 viewed as a deformation of the biregular representation.
 Once the representations $T_z$
 are constructed, the main problem of the harmonic analysis on the
 infinite symmetric group is in decomposition of the generalized
 regular representations $T_z$ into irreducible ones.

 One of the initial steps in this direction can be described as
 follows. Let $\textbf{1}$ denote the function on $\mathfrak{S}$
 identically equal to 1. Consider this function as a vector of
 $L^2(\mathfrak{S},\mu_t)$. Then $\textbf{1}$ is a spherical
 vector, and the pair $(T_z,\textbf{1})$ is a spherical
 representation of the pair $(G,K)$, see, for example, Olshanski
 \cite{olshanski2003}, Section 2. The spherical function of
 $(T_z,\textbf{1})$ is the matrix coefficient
 $(T_z(g_1,g_2)\textbf{1},\textbf{1})$, where $(g_1,g_2)\in
 S(\infty)\times S(\infty)$. Set
 $$
 \chi_z(g)=\left(T_z(g,e)\textbf{1},\textbf{1}\right),\; g\in
 S(\infty).
 $$
 The function $\chi_z$ can be understood as a character of the
 group $S(\infty)$ corresponding to $T_z$. Kerov, Olshanski and
 Vershik \cite{KOV1,KOV2} found  the restriction
 of
 $\chi_z$ to $S(n)$ in terms of irreducible characters of $S(n)$.
 Namely, let $\Y_n$ be the set of Young diagrams with $n$ boxes.
 For $\lambda\in\Y_n$ denote by $\chi^{\lambda}$ the corresponding
 irreducible character of the symmetric group $S(n)$ of degree
 $n$. Then for any $n=1,2,\ldots$ the following formula holds true
 \begin{equation}\label{EquationHiDecomposition}
 \chi_z\biggl|_{S(n)}=\sum\limits_{\lambda\in\Y_n}M^{(n)}_{z,\bar{z}}(\lambda)\frac{\chi^{\lambda}}{\chi^{\lambda}(e)}.
 \end{equation}
In this formula $M^{(n)}_{z,\bar{z}}$ is a probability measure
(called the $z$-measure) on the set of Young diagrams with $n$
boxes, or on the set of integer partitions of $n$. Formula
(\ref{EquationHiDecomposition}) defines the $z$-measure
$M^{(n)}_{z,\bar{z}}$ as a weight attached to the corresponding
Young diagram in the decomposition of the restriction of $\chi_z$
to $S(n)$ in irreducible characters of $S(n)$. Expression
(\ref{EquationHiDecomposition}) enables to reduce the problem of
decomposition of $T_z$ into irreducible components to the problem
on the computation of spectral counterparts of
$M^{(n)}_{z,\bar{z}}$.

Using a distribution on $\{0,1,2,\ldots\}$ defined by
$$
\Prob\{n\}=(1-\xi)^{z\bar z}\frac{(z\bar z)_n}{n!}\xi^n,\;\;\xi>0
$$
(where $(a)_n$ stands for $a(a+1)\ldots (a+n-1)$) it is possible
to mix distributions $M^{(n)}_{z,\bar{z}}$, and to obtain a
distribution $M_{z,\bar{z},\xi}$ on the set of all Young diagrams.

 It was shown by Borodin
and Olshanski in Ref. \cite{Borodin-4}, that $M_{z,\bar{z},\xi}$
defines a determinantal point process on one-dimensional lattice.
The kernel of this process has the integrable form in the sense of
Its, Izergin, Korepin, and Slavnov \cite{its}, and can be written
in terms of the Gauss hypergeometric functions. This fact was
proved in many ways in a variety of papers (see, for example,
Okounkov \cite{okounkov5}, Borodin, Olshanski, and Strahov
\cite{borodin4}, Borodin and Olshanski \cite{BO}, and references
therein). The relation between representation theory of big groups
and determinantal point processes  gave rise to numerous
applications from enumerative combinatorics and random growth
models to the theory of Painlev$\acute{\mbox{e}}$ equations, see
Borodin and Deift \cite{BorodinDeift}.

It is known that if $z, z'\rightarrow \infty$ and
$\xi=\frac{\eta}{zz'}\rightarrow 0$, where $\eta>0$ is fixed, then
$M_{z,z',\xi}$ tends to Poissonized Plancherel distribution
studied in many papers (see, for example, Baik, Deift and
Johansson \cite{baik}). In particular, it was demonstrated that
the Poissonized Plancherel distribution is  similar to the
Gaussian unitary ensemble (GUE) of random matrix theory, which is
an example of an ensemble from the $\beta=2$ symmetry class. On
the other hand, in addition to ensembles of $\beta=2$ symmetry
class, random matrix theory deals with ensembles of $\beta=1$ and
$\beta=4$ symmetry classes.  Note that ensembles from both
$\beta=1$ and $\beta=4$ symmetry classes (in contrast to those
from $\beta=2$ symmetry class) lead to Pfaffian point processes,
and analogy between random partitions and random matrices
naturally motivates a search for Pfaffian point processes
originated from the representation theory of the infinite
symmetric group.

It is the  purpose of the present paper to construct and to
investigate Pfaffian point processes  relevant for the
representation theory and for the harmonic analysis on the
infinite symmetric group. It turns out  that such processes are
determined by $z$-measures with the Jack parameters $\theta=2$ and
$\theta=1/2$. The fact that these measures play a role in the
harmonic analysis was established by Olshanski
\cite{olshanskiletter}, and the detailed explanation of this
representation-theoretic aspect can be found in Strahov
\cite{strahov1}.  Due to the fact that $z$-measures with the Jack
parameters $\theta=2$ and $\theta=1/2$ are related to each other
in a very simple way (see Proposition
\ref{PropositionMSymmetries}), it is enough to consider a point
process defined by the $z$-measure with the Jack parameter
$\theta=2$. The main new result of the present paper is in
explicit computation of the correlation functions for this
measure.  We prove that the correlation functions of the processes
are given by Pfaffian formulas  with $2\times 2$ matrix valued
kernel. The kernel is constructed in terms of the Gauss
hypergeometric functions. Our  result holds for all values of
admissible complex parameters $z, z'$.

Once the relation with the harmonic analysis on the infinite
symmetric group is the main motivation behind this work, we expect
different applications of our results in enumerative combinatorics
and statistical physics similar to the case of the $z$-measures
with the Jack parameter $\theta=1$ studied by Borodin and
Olshanski.

\textbf{Acknowledgements} I am very  grateful to Alexei Borodin
and Grigori Olshanski for numerous discussions at different stages
of this work, and for many valuable comments.
\section{Definitions and the main result}
\subsection{The $z$-measures on partitions with the general
parameter $\theta>0$}\label{Sectionztheta} We use Macdonald
\cite{macdonald} as a basic reference for the notations related to
integer partitions and to symmetric functions. In particular,
every decomposition
$$
\lambda=(\lambda_1,\lambda_2,\ldots,\lambda_l):\;
n=\lambda_1+\lambda_2+\ldots+\lambda_{l},
$$
where $\lambda_1\geq\lambda_2\geq\ldots\geq\lambda_l$ are positive
integers, is called an integer partition. We identify integer
partitions with the corresponding Young diagrams.  The set of
Young diagrams with $n$ boxes  is denoted by $\Y_n$.

Following Borodin and Olshanski \cite{BO1}, Section 1, and Kerov
\cite{kerov} let $M_{z,z',\theta}^{(n)}$ be a complex measure on
$\Y_n$ defined by
\begin{equation}\label{EquationVer4zmeasuren}
M_{z,z',\theta}^{(n)}(\lambda)=\frac{n!(z)_{\lambda,\theta}(z')_{\lambda,\theta}}{(t)_nH(\lambda,\theta)H'(\lambda,\theta)},
\end{equation}
where $n=1,2,\ldots $, and where we use the following notation
\begin{itemize}
    \item $z,z'\in\C$ and $\theta>0$ are parameters, the parameter
    $t$ is defined by
    $$
    t=\frac{zz'}{\theta}.
    $$
    \item $(t)_n$ stands for the Pochhammer symbol,
    $$
    (t)_n=t(t+1)\ldots (t+n-1)=\frac{\Gamma(t+n)}{\Gamma(t)}.
    $$
    \item
    $(z)_{\lambda,\theta}$ is a multidemensional analogue of the
    Pochhammer symbol defined by
    $$
    (z)_{\lambda,\theta}=\prod\limits_{(i,j)\in\lambda}(z+(j-1)-(i-1)\theta)
    =\prod\limits_{i=1}^{l(\lambda)}(z-(i-1)\theta)_{\lambda_i}.
    $$
     Here $(i,j)\in\lambda$ stands for the box in the $i$th row
     and the $j$th column of the Young diagram $\lambda$, and we
     denote by $l(\lambda)$ the number of nonempty rows in the
     Young diagram $\lambda$.
    \item
    $$
    H(\lambda,\theta)=\prod\limits_{(i,j)\in\lambda}\left((\lambda_i-j)+(\lambda_j'-i)\theta+1\right),
   $$
   $$
     H'(\lambda,\theta)=\prod\limits_{(i,j)\in\lambda}\left((\lambda_i-j)+(\lambda_j'-i)\theta+\theta\right),
   $$
      where $\lambda'$ denotes the transposed diagram.
\end{itemize}
\begin{prop}\label{PropositionHH}
The following symmetry relations hold true
$$
H(\lambda,\theta)=\theta^{|\lambda|}H'(\lambda',\frac{1}{\theta}),\;\;(z)_{\lambda,\theta}
=(-\theta)^{|\lambda|}\left(-\frac{z}{\theta}\right)_{\lambda',\frac{1}{\theta}}.
$$
Here $|\lambda|$ stands for the number of boxes in the diagram
$\lambda$.
\end{prop}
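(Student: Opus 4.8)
The plan is to derive both symmetry relations from one combinatorial fact: transposition of Young diagrams is realized by the involution $(i,j)\mapsto(j,i)$ on boxes, which puts the boxes of $\lambda$ in bijection with the boxes of $\lambda'$, interchanges the part sequences $\lambda_i\leftrightarrow\lambda'_i$, and satisfies $(\lambda')'=\lambda$. Every product over the boxes of $\lambda'$ occurring on the right-hand sides can then be turned into a product over the boxes of $\lambda$ by this substitution, after which one only has to pull a box-independent scalar out of each factor.

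For the first identity, I would expand $H'(\lambda',1/\theta)=\prod_{(i,j)\in\lambda'}\bigl((\lambda'_i-j)+((\lambda')'_j-i)\tfrac1\theta+\tfrac1\theta\bigr)$, replace $(\lambda')'$ by $\lambda$, and apply $(i,j)\mapsto(j,i)$ so that the product now runs over $(i,j)\in\lambda$ with generic factor $(\lambda'_j-i)+(\lambda_i-j)\tfrac1\theta+\tfrac1\theta=\tfrac1\theta\bigl((\lambda_i-j)+(\lambda'_j-i)\theta+1\bigr)$, which is exactly $\tfrac1\theta$ times the generic factor of $H(\lambda,\theta)$. Since $\lambda$ has $|\lambda|$ boxes, this gives $H'(\lambda',1/\theta)=\theta^{-|\lambda|}H(\lambda,\theta)$, i.e.\ the asserted relation. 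The second identity is handled the same way: from $\bigl(-z/\theta\bigr)_{\lambda',1/\theta}=\prod_{(i,j)\in\lambda'}\bigl(-\tfrac z\theta+(j-1)-(i-1)\tfrac1\theta\bigr)$ and the substitution $(i,j)\mapsto(j,i)$ the generic factor becomes $-\tfrac z\theta+(i-1)-(j-1)\tfrac1\theta=-\tfrac1\theta\bigl(z+(j-1)-(i-1)\theta\bigr)$, so factoring $-\tfrac1\theta$ out of each of the $|\lambda|$ boxes yields $\bigl(-z/\theta\bigr)_{\lambda',1/\theta}=(-\theta)^{-|\lambda|}(z)_{\lambda,\theta}$.

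I do not expect a genuine obstacle: the content is bookkeeping. The only points requiring care are keeping track of how the indices transform under transposition --- in particular that $\lambda'_j$ for a box $(i,j)$ of $\lambda$ plays the role of $\lambda_i$ for the corresponding box of $\lambda'$, and conversely --- and verifying that the scalar extracted from each factor is uniformly $\theta^{-1}$ (respectively $-\theta^{-1}$), so that it accumulates to the clean power $\theta^{-|\lambda|}$ (respectively $(-\theta)^{-|\lambda|}$). One could instead phrase the first relation through arm and leg lengths, observing that transposition swaps them while $H$ and $H'$ differ only in whether the added constant is $1$ or $\theta$, but the direct substitution above is the most economical route.
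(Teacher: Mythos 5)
Your proof is correct and is exactly the argument the paper intends: the paper's own proof just says the relations "follow immediately from the definitions," and your box-transposition computation supplies precisely those details, with the factor-by-factor extraction of $\theta^{-1}$ (resp. $-\theta^{-1}$) verified correctly.
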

\begin{proof}
These relations follow immediately from definitions of
$H(\lambda,\theta)$ and $(z)_{\lambda,\theta}$.
\end{proof}
\begin{prop}\label{PropositionMSymmetries}
We have
$$
M_{z,z',\theta}^{(n)}(\lambda)=M_{-z/\theta,-z'/\theta,1/\theta}^{(n)}(\lambda').
$$
\end{prop}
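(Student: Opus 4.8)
The plan is to reduce everything to the two symmetry relations of Proposition~\ref{PropositionHH}, together with a short bookkeeping check that the normalizing constant is unchanged. First I would record a companion to the first relation in Proposition~\ref{PropositionHH}: applying $H(\mu,\vartheta)=\vartheta^{|\mu|}H'(\mu',1/\vartheta)$ with $\mu=\lambda'$ and $\vartheta=1/\theta$ yields $H'(\lambda,\theta)=\theta^{|\lambda|}H(\lambda',1/\theta)$. Multiplying this with the original relation gives
$$
H(\lambda,\theta)\,H'(\lambda,\theta)=\theta^{2|\lambda|}\,H(\lambda',1/\theta)\,H'(\lambda',1/\theta).
$$

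Next I would apply the second relation of Proposition~\ref{PropositionHH} to both $z$ and $z'$; since $|\lambda|=n$ and $(-1)^{2n}=1$,
$$
(z)_{\lambda,\theta}\,(z')_{\lambda,\theta}=(-\theta)^{2n}\left(-\tfrac{z}{\theta}\right)_{\lambda',1/\theta}\left(-\tfrac{z'}{\theta}\right)_{\lambda',1/\theta}=\theta^{2n}\left(-\tfrac{z}{\theta}\right)_{\lambda',1/\theta}\left(-\tfrac{z'}{\theta}\right)_{\lambda',1/\theta}.
$$
Then I would verify the normalization: the auxiliary parameter $t$ attached to the right-hand measure is $\bigl(-\tfrac{z}{\theta}\bigr)\bigl(-\tfrac{z'}{\theta}\bigr)\big/\bigl(\tfrac{1}{\theta}\bigr)=\tfrac{zz'}{\theta}=t$, so the factor $(t)_n$ (and of course $n!$) is the same on both sides.

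Substituting the two displays above into the definition \eqref{EquationVer4zmeasuren}, the factors $\theta^{2n}$ in the numerator and denominator cancel, and what remains is precisely $M^{(n)}_{-z/\theta,-z'/\theta,1/\theta}(\lambda')$. I do not expect any genuine obstacle here; the only points requiring a little care are deriving the $H'$-relation from the stated $H$-relation rather than re-deriving it from the definitions, and checking the invariance of $t$, which is what makes the Pochhammer factor $(t)_n$ match across the identity.
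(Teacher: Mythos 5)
Your proposal is correct and follows exactly the route the paper intends: its one-line proof simply says to combine the definition \eqref{EquationVer4zmeasuren} with Proposition \ref{PropositionHH}, and your write-up fills in precisely those details (the companion relation $H'(\lambda,\theta)=\theta^{|\lambda|}H(\lambda',1/\theta)$, the cancellation of $\theta^{2n}$, and the invariance of $t$ under $(z,z',\theta)\mapsto(-z/\theta,-z'/\theta,1/\theta)$). All steps check out; there is nothing to add.
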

\begin{proof}
Use definition of $M_{z,z',\theta}^{(n)}(\lambda)$, equation
(\ref{EquationVer4zmeasuren}), and apply Proposition
\ref{PropositionHH}.
\end{proof}
\begin{prop}\label{Prop1.3}
We have
$$
\sum\limits_{\lambda\in\Y_n}M_{z,z',\theta}^{(n)}(\lambda)=1.
$$
\end{prop}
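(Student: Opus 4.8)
The plan is to observe that, after dividing (\ref{EquationVer4zmeasuren}) by $n!/(t)_n$ and summing over $\lambda\in\Y_n$, the proposition is equivalent to the combinatorial identity
\[
\sum_{\lambda\in\Y_n}\frac{(z)_{\lambda,\theta}\,(z')_{\lambda,\theta}}{H(\lambda,\theta)\,H'(\lambda,\theta)}=\frac{(t)_n}{n!},\qquad t=\frac{zz'}{\theta}
\]
(equivalently, after multiplying by $\xi^n$ and summing over $n$, to $\sum_{\lambda}\xi^{|\lambda|}(z)_{\lambda,\theta}(z')_{\lambda,\theta}/(H(\lambda,\theta)H'(\lambda,\theta))=(1-\xi)^{-t}$, which is precisely the normalization constant entering the mixed measure $M_{z,z',\xi}$ of the Introduction). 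I would prove this through Jack symmetric functions with Jack parameter $\alpha=1/\theta$, relying on three standard facts (Macdonald \cite{macdonald}, Ch.~VI; see also Kerov \cite{kerov}): (i) if $J_\lambda$ denotes the integral-form Jack polynomial and $\langle\,\cdot\,,\cdot\,\rangle$ the Jack scalar product normalized by $\langle p_\mu,p_\nu\rangle=\delta_{\mu\nu}z_\mu\alpha^{\ell(\mu)}$, then $\langle J_\lambda,J_\lambda\rangle=\theta^{-2|\lambda|}H(\lambda,\theta)H'(\lambda,\theta)$ (this is Macdonald's $c_\lambda c'_\lambda$ rewritten in arm--leg coordinates, the rescaling being exactly the $\theta\leftrightarrow1/\theta$ dictionary of Proposition \ref{PropositionHH}); (ii) the algebra homomorphism $\phi_z\colon\Lambda\to\C$ determined on power sums by $\phi_z(p_k)=z\,\theta^{k-1}$ — the principal specialization at a single variable equal to $\theta$ with multiplicity $z/\theta$, analytically continued in $z$ — sends $J_\lambda$ to $(z)_{\lambda,\theta}$; (iii) the Cauchy identity $\sum_{\lambda}J_\lambda(x)J_\lambda(y)/\langle J_\lambda,J_\lambda\rangle=\prod_{i,j}(1-x_iy_j)^{-1/\alpha}$.

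Granting these, I would argue as follows. Extracting the bidegree $(n,n)$ component of the Cauchy identity, and using that both $\{J_\lambda/\langle J_\lambda,J_\lambda\rangle^{1/2}\}_{|\lambda|=n}$ and $\{p_\mu/(z_\mu\alpha^{\ell(\mu)})^{1/2}\}_{|\mu|=n}$ are orthonormal bases of the degree-$n$ component of $\Lambda$, so that the two reproducing kernels coincide, one gets
\[
\sum_{|\lambda|=n}\frac{J_\lambda(x)J_\lambda(y)}{\langle J_\lambda,J_\lambda\rangle}=\sum_{|\mu|=n}\frac{p_\mu(x)\,p_\mu(y)}{z_\mu\,\alpha^{\ell(\mu)}}.
\]
Now apply $\phi_z$ in the $x$-variables and $\phi_{z'}$ in the $y$-variables. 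By (ii) the left side becomes $\sum_{|\lambda|=n}(z)_{\lambda,\theta}(z')_{\lambda,\theta}/\langle J_\lambda,J_\lambda\rangle$, and by (i) this equals $\theta^{2n}\sum_{|\lambda|=n}(z)_{\lambda,\theta}(z')_{\lambda,\theta}/(H(\lambda,\theta)H'(\lambda,\theta))$. On the right side $\phi_z(p_\mu)=z^{\ell(\mu)}\theta^{\,|\mu|-\ell(\mu)}$, so it collapses to $\theta^{2n}\sum_{|\mu|=n}(zz'/\theta)^{\ell(\mu)}/z_\mu$. The powers of $\theta$ cancel, and the computation finishes with the classical identity $\sum_{|\mu|=n}c^{\ell(\mu)}/z_\mu=(c)_n/n!$, obtained by comparing coefficients of $\xi^n$ in $\exp\bigl(c\sum_{k\ge1}\xi^k/k\bigr)=(1-\xi)^{-c}$, evaluated at $c=zz'/\theta=t$.

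The routine part is the algebra just described (the powers of $\theta$ do cancel, but one must keep careful track of the conventions linking $H,H'$ to Macdonald's $c_\lambda,c'_\lambda$ and of where the factor $1/\theta$ in $t=zz'/\theta$ comes from). The one genuine input is (ii): the evaluation/principal-specialization formula asserting that the principal specialization of the integral-form Jack polynomial $J_\lambda$ is the shifted-factorial product $(z)_{\lambda,\theta}$ — this is where the whole identity really lives. It is classical (Stanley; Macdonald \cite{macdonald}), and a perfectly legitimate alternative is to cite directly the normalization of the $z$-measures with Jack parameter established in Kerov \cite{kerov} and Borodin--Olshanski \cite{BO1}, from which the proposition is immediate.
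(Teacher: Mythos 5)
Your argument is correct: with $\alpha=1/\theta$ one indeed has $c_\lambda(\alpha)c'_\lambda(\alpha)=\theta^{-2|\lambda|}H(\lambda,\theta)H'(\lambda,\theta)$ under Macdonald's conventions, the specialization $p_k\mapsto z\theta^{k-1}$ (i.e.\ $\epsilon_{z/\theta}$ composed with the scaling $x\mapsto\theta x$) sends $J_\lambda$ to $\prod_{(i,j)\in\lambda}(z+(j-1)-(i-1)\theta)=(z)_{\lambda,\theta}$, and the bookkeeping of powers of $\theta$ works out so that the degree-$(n,n)$ piece of the Cauchy kernel yields $\sum_{\lambda\in\Y_n}(z)_{\lambda,\theta}(z')_{\lambda,\theta}/\bigl(H(\lambda,\theta)H'(\lambda,\theta)\bigr)=(t)_n/n!$ with $t=zz'/\theta$; I checked each of these computations. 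The comparison with the paper is, however, somewhat degenerate: the paper offers no argument at all for this proposition, its proof being the single line ``See Kerov, Borodin and Olshanski,'' so what you have written is a self-contained reconstruction of the standard proof that lives inside those references (Kerov's treatment via Jack symmetric functions and the Cauchy identity is essentially the computation you perform). What your version buys is transparency about exactly which three facts about Jack polynomials are being used and where the factor $1/\theta$ in $t$ comes from; what the paper's citation buys is brevity and the reassurance that the positivity and convergence issues (irrelevant here, since the identity is one of polynomials in $z,z'$ for each $n$) have been dealt with elsewhere. One cosmetic remark: rather than invoking orthonormal bases (which requires square roots of the norms), it is cleaner to say that both sums are, by definition of dual bases, equal to the degree-$(n,n)$ homogeneous component of $\prod_{i,j}(1-x_iy_j)^{-1/\alpha}$; this avoids any sign or branch issues and is what you in fact use.
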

\begin{proof}
See Kerov \cite{kerov}, Borodin and Olshanski
\cite{BO1,BOHARMONICFUNCTIONS}.
\end{proof}
\begin{prop}\label{PropositionSeries}
If parameters $z, z'$ satisfy one of the three conditions listed
below, then the measure $M_{z,z',\theta}^{(n)}$ defined by
expression (\ref{EquationVer4zmeasuren}) is a probability measure
on $Y_n$. The conditions are as follows.\begin{itemize}
    \item Principal series: either
$z\in\C\setminus(\Z_{\leq 0}+\Zp\theta)$ and $z'=\bar z$.
    \item The complementary series: the parameter $\theta$ is a rational number, and both $z,z'$
are real numbers lying in one of the intervals between two
consecutive numbers from the lattice $\Z+\Z\theta$.
    \item The degenerate series: $z,z'$ satisfy one of the
    following conditions\\
    (1) $(z=m\theta, z'>(m-1)\theta)$ or $(z'=m\theta,
    z>(m-1)\theta)$;\\
    (2) $(z=-m, z'<-m+1)$ or $(z'=-m,
    z<m-1)$.
\end{itemize}

\end{prop}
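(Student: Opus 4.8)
The plan is to reduce everything to a sign count. Proposition \ref{Prop1.3} already gives $\sum_{\lambda\in\Y_n}M_{z,z',\theta}^{(n)}(\lambda)=1$, so it suffices to verify that $M_{z,z',\theta}^{(n)}(\lambda)\ge 0$ for every $\lambda\in\Y_n$ in each of the three regimes. First I would observe that several factors in (\ref{EquationVer4zmeasuren}) are automatically positive: $n!>0$; for a box $(i,j)\in\lambda$ one has $\lambda_i\ge j$ and $\lambda_j'\ge i$, so each factor of $H(\lambda,\theta)$ is $\ge 1$ and each factor of $H'(\lambda,\theta)$ is $\ge\theta>0$, whence $H(\lambda,\theta),H'(\lambda,\theta)>0$; and in each of the three cases a one-line check shows $t=zz'/\theta>0$, so $(t)_n>0$ (this is also what makes the sum in Proposition \ref{Prop1.3} meaningful). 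Consequently the sign of $M_{z,z',\theta}^{(n)}(\lambda)$ is the sign of $(z)_{\lambda,\theta}(z')_{\lambda,\theta}$, and it remains to prove that this product is $\ge 0$ for all $\lambda$.

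For the principal series this is immediate: $z'=\bar z$ forces $(z')_{\lambda,\theta}=\overline{(z)_{\lambda,\theta}}$ (the factor attached to $(i,j)$ is $z+(j-1)-(i-1)\theta$, which differs from its $z'$-counterpart only by complex conjugation), so $(z)_{\lambda,\theta}(z')_{\lambda,\theta}=|(z)_{\lambda,\theta}|^2\ge 0$; the hypothesis $z\notin\Z_{\le 0}+\Zp\theta$ says precisely that no factor $z+(j-1)-(i-1)\theta=z-\bigl((i-1)\theta-(j-1)\bigr)$ vanishes, so the weight is in fact strictly positive. For the degenerate series I would locate the zero set of $(z)_{\lambda,\theta}$ and then check the residual signs. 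For instance, if $z=m\theta$ with $m\in\Z_{>0}$ and $z'>(m-1)\theta$, then the box $(m+1,1)$ makes $(z)_{\lambda,\theta}=0$ as soon as $l(\lambda)>m$, while for $l(\lambda)\le m$ every factor $z+(j-1)-(i-1)\theta=(m-i+1)\theta+(j-1)$ is $\ge\theta>0$ and every factor $z'+(j-1)-(i-1)\theta\ge z'-(m-1)\theta>0$, so $M_{z,z',\theta}^{(n)}(\lambda)\ge 0$. If $z=-m$ with $m\in\Z_{>0}$ and $z'<-m+1$, then $(z)_{\lambda,\theta}=0$ unless $\lambda_1\le m$, and for $\lambda_1\le m$ every factor of $(z)_{\lambda,\theta}$ and every factor of $(z')_{\lambda,\theta}$ is negative, so the two sign factors $(-1)^{|\lambda|}$ cancel. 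The remaining degenerate cases are obtained from these by the evident $z\leftrightarrow z'$ symmetry of (\ref{EquationVer4zmeasuren}).

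The complementary series is the only place where the assumption on $\theta$ is actually used. Since $\theta=p/q$ is rational, $\Z+\Z\theta=\tfrac1q\Z$ is a genuine discrete subgroup of $\R$, so ``the interval between two consecutive lattice points'' containing both $z$ and $z'$ is a nonempty open interval $I$ disjoint from $\Z+\Z\theta$. For each box $(i,j)$ the affine function $x\mapsto x+(j-1)-(i-1)\theta$ vanishes only at the lattice point $(i-1)\theta-(j-1)\notin I$, hence keeps a constant, nonzero sign on $I$; therefore $(z)_{\lambda,\theta}$ and $(z')_{\lambda,\theta}$ have the same number of negative factors, so they have the same sign and their product is $>0$. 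Finally $0\in\Z+\Z\theta$, so $I$ lies strictly on one side of $0$; thus $z$ and $z'$ have the same sign and $t=zz'/\theta>0$, as was used above. I do not foresee a serious obstacle: the only points needing care are the bookkeeping of which diagrams kill $(z)_{\lambda,\theta}$ in the degenerate series, and the (once stated, routine) observation that discreteness of $\Z+\Z\theta$ for rational $\theta$ is exactly what powers the ``constant sign on an interval'' argument, so that the whole content of the complementary series sits in that remark.
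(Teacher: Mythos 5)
Your argument is correct. Note, however, that the paper does not actually prove this statement: it simply cites Propositions 1.2 and 1.3 of Borodin--Olshanski \cite{BO1}, so there is no in-paper proof to compare against; what you have written is essentially the standard direct verification that the citation delegates. Your reduction is the right one: normalization comes from Proposition \ref{Prop1.3}, $H$ and $H'$ are manifestly positive since $\lambda_i-j\ge 0$ and $\lambda_j'-i\ge 0$ for a box of $\lambda$, $t>0$ in each regime, and everything rests on the sign of $(z)_{\lambda,\theta}(z')_{\lambda,\theta}$. The three case analyses are sound: conjugate-pairing of factors for the principal series (with $z\notin\Z_{\le 0}+\Zp\theta$ exactly excluding a vanishing factor), the vanishing of the weight off $\{l(\lambda)\le m\}$ resp.\ $\{\lambda_1\le m\}$ plus an explicit sign count for the degenerate series, and the constant-sign-on-an-interval argument for the complementary series, where discreteness of $\Z+\Z\theta$ (available only for rational $\theta$) is indeed the whole point. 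Two cosmetic remarks: for $\theta=p/q$ one has $\Z+\Z\theta=\tfrac{\gcd(p,q)}{q}\Z$ rather than $\tfrac1q\Z$ unless $p,q$ are coprime (this does not affect the argument, only discreteness is used); and you tacitly take $m$ to be a positive integer in the degenerate series, which is the intended reading but is not spelled out in the statement.
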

\begin{proof} See Propositions 1.2, 1.3 in Borodin and Olshanski
\cite{BO1}.
\end{proof}
Thus, if the  conditions in the Proposition above  are satisfied,
then $M_{z,z',\theta}^{(n)}$ is a probability measure defined on
$\Y_n$, as  follows from Proposition \ref{Prop1.3}.
\begin{rem}
When both $z,z'$ go to infinity, expression
(\ref{EquationVer4zmeasuren}) has a limit
\begin{equation}\label{EquationPlancherelInfy}
M_{\infty,\infty,\theta}^{(n)}(\lambda)=\frac{n!\theta^{n}}{H(\lambda,\theta)H'(\lambda,\theta)}
\end{equation}
called the Plancherel measure on $\Y_n$ with general $\theta>0$.
Statistics of the Plancherel measure with the general Jack
parameter $\theta>0$ is discussed in  many papers, see, for
example, a very recent paper by Matsumoto \cite{matsumoto}, and
references therein. Matsumoto \cite{matsumoto} compares limiting
distributions of rows of random partitions with distributions of
certain random variables from a traceless Gaussian
$\beta$-ensemble.
\end{rem}
It is convenient  to mix all measures $M_{z,z',\theta}^{(n)}$, and
to define a new measure $M_{z,z',\xi,\theta}$  on
$\Y=\Y_0\cup\Y_1\cup\ldots $. Namely, let $\xi\in(0,1)$ be an
additional parameter, and set
\begin{equation}\label{EquationMzztheta}
M_{z,z',\xi,\theta}(\lambda)=(1-\xi)^t\xi^{|\lambda|}
\frac{(z)_{\lambda,\theta}(z')_{\lambda,\theta}}{H(\lambda,\theta)H'(\lambda,\theta)}.
\end{equation}
\begin{prop} We have
$$
\sum\limits_{\lambda\in\Y}M_{z,z',\xi,\theta}(\lambda)=1.
$$
\end{prop}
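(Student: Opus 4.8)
The plan is to group the sum over $\Y=\Y_0\cup\Y_1\cup\ldots$ according to the number of boxes and then to recognize a binomial series. Write $t=zz'/\theta$ as in (\ref{EquationVer4zmeasuren}). Comparing the definition (\ref{EquationMzztheta}) of $M_{z,z',\xi,\theta}$ with the definition (\ref{EquationVer4zmeasuren}) of $M_{z,z',\theta}^{(n)}$, for every $n\geq 0$ and every $\lambda\in\Y_n$ one has
$$
M_{z,z',\xi,\theta}(\lambda)=(1-\xi)^{t}\,\xi^{n}\,\frac{(z)_{\lambda,\theta}(z')_{\lambda,\theta}}{H(\lambda,\theta)H'(\lambda,\theta)}=(1-\xi)^{t}\,\frac{(t)_{n}}{n!}\,\xi^{n}\,M_{z,z',\theta}^{(n)}(\lambda).
$$
Summing this over the \emph{finite} set $\Y_n$ and using Proposition~\ref{Prop1.3}, which asserts $\sum_{\lambda\in\Y_n}M_{z,z',\theta}^{(n)}(\lambda)=1$, I obtain
$$
\sum_{\lambda\in\Y_n}M_{z,z',\xi,\theta}(\lambda)=(1-\xi)^{t}\,\frac{(t)_{n}}{n!}\,\xi^{n}.
$$

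It then remains to sum over $n$. Since $\xi\in(0,1)$, Newton's binomial series $\sum_{n\geq 0}\frac{(t)_{n}}{n!}\,\xi^{n}=(1-\xi)^{-t}$ converges absolutely for every $t\in\C$; here $(1-\xi)^{\pm t}:=\exp\!\bigl(\pm t\log(1-\xi)\bigr)$ with the real logarithm of $1-\xi\in(0,1)$, so $(1-\xi)^{t}(1-\xi)^{-t}=1$. Therefore
$$
\sum_{\lambda\in\Y}M_{z,z',\xi,\theta}(\lambda)=\sum_{n=0}^{\infty}\Bigl(\;\sum_{\lambda\in\Y_n}M_{z,z',\xi,\theta}(\lambda)\Bigr)=(1-\xi)^{t}\sum_{n=0}^{\infty}\frac{(t)_{n}}{n!}\,\xi^{n}=1,
$$
which is the desired identity.

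The only point that is not purely formal — and which I regard as the main obstacle — is the first equality of the last display, i.e.\ that the sum over the infinite set $\Y$ is evaluated by grouping terms according to $n=|\lambda|$ along the natural exhaustion $\Y_0\subset\Y_0\cup\Y_1\subset\cdots$. For the admissible parameters of Proposition~\ref{PropositionSeries} this causes no difficulty: there $t>0$ and each $M_{z,z',\theta}^{(n)}$ is a bona fide probability measure, so all terms $M_{z,z',\xi,\theta}(\lambda)$ are nonnegative, the value of the sum is independent of the order of summation, and the computation above simultaneously shows the sum is finite (equal to $1$); these are exactly the parameters relevant for the point processes studied in this paper. For arbitrary complex $z,z'$ one reads $\sum_{\lambda\in\Y}$ as the sum grouped by $|\lambda|$, which is precisely what is computed above; otherwise one would have to establish unconditional summability of the family $\bigl(M_{z,z',\xi,\theta}(\lambda)\bigr)_{\lambda\in\Y}$, which amounts to controlling $\sum_{\lambda\in\Y_n}\bigl|(z)_{\lambda,\theta}(z')_{\lambda,\theta}\bigr|/\bigl(H(\lambda,\theta)H'(\lambda,\theta)\bigr)$ and is a genuinely analytic, rather than algebraic, matter.
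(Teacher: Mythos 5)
Your proof is correct and follows the same route as the paper, whose proof is simply the remark that the identity follows from Proposition~\ref{Prop1.3}: group the sum by $n=|\lambda|$, apply $\sum_{\lambda\in\Y_n}M_{z,z',\theta}^{(n)}(\lambda)=1$, and sum the negative binomial series. Your additional care about the order of summation and the meaning of $(1-\xi)^{t}$ is a useful elaboration of what the paper leaves implicit, not a different argument.
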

\begin{proof}
Follows immediately from Proposition \ref{Prop1.3}.
\end{proof}
If conditions on $z,z'$ formulated in Propositions 1.2, 1.3 in
Borodin and Olshanski \cite{BO} are satisfied, then
$M_{z,z',\xi,\theta}(\lambda)$ is a probability measure on $\Y$.
We will refer to $M_{z,z',\xi,\theta}(\lambda)$ as to the
$z$-measure with the deformation (Jack) parameter $\theta$.
\subsection{A basis in the $l^2$ space on the lattice $\Z'=\Z+\frac{1}{2}$}

In this Section we describe a basis in the $l^2$ space on the
$1$-dimensional lattice $\Z'=\Z+\frac{1}{2}$ introduced in Borodin
and Olshanski \cite{BO}, and define certain operators acting in
the space $l^2$.  Elements of $\Z'=\Z+\frac{1}{2}$ will be denoted
by $x, y$. Introduce the principal series, the complementary
series, and the degenerate series as in Proposition
\ref{PropositionSeries} with $\theta=1$.  Assume that parameters
$z$, $z'$ are in the principal series or in the complementary
series, but not in the degenerate series. Therefore, the
conditions on $z, z'$ are as follows.
\begin{itemize}
    \item The numbers $z, z'$ are not real and are conjugate to
    each other (principal series).
    \item Both $z, z'$ are real and are contained in the same open
    interval of the form $(m,m+1)$, where $m\in\Z$.
\end{itemize}

In this case we say that parameters $z, z'$ are admissible. In
particular, $z$,$z'$ are not integers. Introduce a family of
functions on $\Z'=\Z+\frac{1}{2}$ depending on a parameter
$a\in\Z'$, and also on the parameters $z,z',\xi$:
\begin{equation}\label{PsiaF}
\begin{split}
\psi_a(x;z,z',\xi)=\left(\frac{\Gamma(x+z+\frac{1}{2})\Gamma(x+z'+\frac{1}{2})}{\Gamma(z-a+\frac{1}{2})\Gamma(z'-a+\frac{1}{2})}\right)^{1/2}
\xi^{\frac{x+a}{2}}(1-\xi)^{\frac{z+z'}{2}-a}\\
\times\frac{F\left(-z+a+\frac{1}{2},-z'+a+\frac{1}{2};
x+a+1;\frac{\xi}{\xi-1}\right)}{\Gamma(x+a+1)},\; x\in\Z',
\end{split}
\end{equation}
where $F(A,B;C;w)$ is the Gauss hypergeometric function. As it is
explained in Borodin and Olshanski \cite{BO}, Section 2, the above
expression makes sense, and the functions $\psi_a(x;z,z',\xi)$ are
real-valued. In particular, the assumptions on $(z,z')$ imply that
$\Gamma(x+z+\frac{1}{2})$ and $\Gamma(x+z'+\frac{1}{2})$ have no
singularities for $x\in\Z'$, and that
$$
\Gamma(x+z+\frac{1}{2})\Gamma(x+z'+\frac{1}{2})>0,\;\;\Gamma(z-a+\frac{1}{2})\Gamma(z'-a+\frac{1}{2})>0.
$$
so we can take the positive values of the square roots in equation
(\ref{PsiaF}).
\begin{prop} a) Introduce a second order difference operator
$D(z,z',\xi)$ on the lattice $\Z'$, depending on parameters
$z,z',\xi$ and acting on functions $f(x)$ (where $x$ ranges over
$\Z'$) as follows
\begin{equation}
\begin{split}
D(z,z',\xi)f(x)&=\sqrt{\xi(z+x+\frac{1}{2})(z'+x+\frac{1}{2})}f(x+1)\\
&+\sqrt{\xi(z+x+\frac{1}{2})(z'+x+\frac{1}{2})}f(x-1)-(x+\xi(z+z'+x))f(x).
\end{split}
\nonumber
\end{equation}
Then the functions $\psi_a(x;z,z',\xi)$, where $a$ ranges over
$\Z'$, are eigenvalues of the operator $D(z,z',\xi)$,
$$
D(z,z',\xi)\psi_a(x;z,z',\xi)=a(1-\xi)\psi_a(x;z,z',\xi).
$$
b) The functions $\psi_a(x;z,z',\xi)$, where $a$ ranges over
$\Z'$, form an orthonormal basis in the Hilbert space $l^2(\Z')$.
\end{prop}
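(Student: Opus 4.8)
The plan is to recognize the difference operator $D(z,z',\xi)$ as a similarity transform of the standard Jacobi/Meixner-type difference operator whose eigenfunctions are the Meixner orthogonal polynomials, and to deduce both statements from the classical theory of these polynomials. First I would conjugate $D(z,z',\xi)$ by the positive weight function implicit in the square-root prefactor of $\psi_a$ in (\ref{PsiaF}); that is, write $\psi_a(x) = w(x)^{1/2}\,p_a(x)$, where $w(x)$ is built from $\Gamma(x+z+\tfrac12)\Gamma(x+z'+\tfrac12)\,\xi^{x}/\Gamma(x+a+1)$-type factors, and check that in the $p_a$-variable the operator becomes the self-adjoint Jacobi matrix associated with the Meixner polynomials with parameters tied to $z+\tfrac12$, $z'+\tfrac12$, and $\xi$. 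The three-term recurrence for Meixner polynomials then becomes exactly the eigenvalue equation $D\psi_a = a(1-\xi)\psi_a$ after matching the eigenvalue parameter; the Gauss hypergeometric expression in (\ref{PsiaF}) is precisely the hypergeometric representation of the Meixner polynomial (or, for non-integer degree, its analytic continuation — the function on $\Z'$ indexed by $a\in\Z'$), so part (a) reduces to verifying this identification and the recurrence coefficients by a direct but routine computation.

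For part (b), orthonormality and completeness, I would invoke the known spectral theory: the operator $D(z,z',\xi)$, after the conjugation above, is a bounded (or essentially self-adjoint) Jacobi operator on $l^2(\Z')$, and Borodin–Olshanski \cite{BO} already establish that the $\psi_a$ are real-valued and well-defined. Orthonormality $\langle \psi_a, \psi_b\rangle = \delta_{ab}$ follows from the orthogonality relations of the Meixner polynomials with respect to the negative-binomial weight, transported through the similarity transform; the normalization is arranged precisely by the square-root prefactor, so one checks the constant comes out to $1$. Completeness is the subtler half: one must show the $\psi_a$, $a\in\Z'$, span $l^2(\Z')$. The cleanest route is to argue that $D(z,z',\xi)$ is self-adjoint with simple, purely discrete spectrum $\{a(1-\xi): a\in\Z'\}$ — using that the off-diagonal entries $\sqrt{\xi(z+x+\tfrac12)(z'+x+\tfrac12)}$ are bounded and nonvanishing under the admissibility hypotheses, so the Jacobi matrix is in the limit-point/completeness case — and then the eigenfunctions of a self-adjoint operator with complete set of eigenvalues automatically form an orthonormal basis.

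The main obstacle I anticipate is the completeness claim in (b) rather than the formal eigenfunction identity in (a). The admissibility conditions (principal or complementary series, $z,z'$ non-integral) are exactly what guarantee that $w(x)>0$ for all $x\in\Z'$ and that the coefficients $(z+x+\tfrac12)(z'+x+\tfrac12)$ stay positive, so the conjugation is legitimate and the Jacobi operator is genuinely self-adjoint; without these one would face sign problems and the basis property could fail. I would therefore spend care on: (i) confirming positivity of all square-root arguments on the whole lattice $\Z'$ under both series, citing the relevant positivity observations already quoted from \cite{BO}; (ii) establishing essential self-adjointness of the Jacobi matrix (boundedness of diagonal growth is linear and off-diagonal growth is linear too, so one checks Carleman's condition or a direct deficiency-index argument); and (iii) identifying the spectrum exactly with $\{a(1-\xi):a\in\Z'\}$ so that no eigenvalues are missed. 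Granting these, both (a) and (b) follow; alternatively, if a self-contained argument is preferred over citing \cite{BO}, completeness can be obtained from the generating-function identity for Meixner polynomials, which expresses the reproducing kernel $\sum_a \psi_a(x)\psi_a(y)$ as a closed form that one evaluates to $\delta_{xy}$.
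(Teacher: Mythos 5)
The paper offers no internal argument for this proposition --- its ``proof'' is a citation to Borodin--Olshanski \cite{BO}, Section 2 --- so there is nothing to compare your proposal against except the cited source, and the assessment has to rest on whether your sketch would actually close. For part (a) it essentially would: conjugating by $\bigl(\Gamma(x+z+\tfrac12)\Gamma(x+z'+\tfrac12)\xi^{x}\bigr)^{1/2}$ does turn $D(z,z',\xi)$ into a hypergeometric-type difference operator, and for the special parameter values where the system degenerates to genuine Meixner polynomials on a half-lattice this is literally the Meixner difference equation. But for general admissible $(z,z')$ the $\psi_a$ are not polynomials against a weight on $\Zp$; they live on the whole two-sided lattice $\Z'$, so the eigenvalue identity must be verified directly, e.g.\ from contiguous relations for $F(A,B;C;w)$ or by manipulating the contour integral of Proposition \ref{PSIa} --- ``the Meixner three-term recurrence'' is a heuristic, not a proof, away from the degenerate parameters. (Note also that the operator as printed is not a symmetric matrix: the coefficient of $f(x-1)$ should be $\sqrt{\xi(z+x-\tfrac12)(z'+x-\tfrac12)}$, as in \cite{BO}; your entire self-adjointness discussion tacitly assumes this corrected form.)

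The genuine gap is in part (b). First, orthonormality cannot be ``transported from the orthogonality relations of the Meixner polynomials'': for principal-series $(z,z')$ there is no negative-binomial weight and no half-lattice, and the inner product $\langle\psi_a,\psi_b\rangle$ is a sum over all of $\Z'$, so the classical relations simply do not apply. Second, the completeness argument is circular. The off-diagonal entries $\sqrt{\xi(z+x+\tfrac12)(z'+x+\tfrac12)}$ are not bounded --- they grow like $\sqrt{\xi}\,|x|$ --- and, more importantly, the Jacobi matrix is doubly infinite; for such operators self-adjointness gives neither pure point spectrum, nor simplicity (multiplicity two is generic for two-sided Jacobi matrices), nor any guarantee that the spectrum is exhausted by the visible eigenvalues $a(1-\xi)$. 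Establishing that the spectrum is exactly $\{a(1-\xi):a\in\Z'\}$ and simple \emph{is} the completeness claim, so nothing has been proved. The routes that actually work are (i) direct verification of both dual relations $\sum_{x}\psi_a(x)\psi_b(x)=\delta_{ab}$ and $\sum_{a}\psi_a(x)\psi_a(y)=\delta_{xy}$ from the contour-integral representations (the sums collapse to Cauchy kernels, much as in the proof of Proposition \ref{EquationRepresentationKzz'}), or (ii) proving everything at the integral parameter values, where the system is a genuine Meixner basis, and then analytically continuing in $(z,z')$ --- which is the strategy of \cite{BO} and indeed of the present paper as a whole. Your closing ``alternative'' via the reproducing kernel is the right idea, but it has to be executed on the contour integrals, not on Meixner generating functions.
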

\begin{proof}
See  Borodin and Olshanski \cite{BO}, Section 2.
\end{proof}
\begin{prop}\label{LemmaFASCountourIntegral}
For any $A, B\in\C, M\in\Z$, and $\xi\in(0,1)$ we have
\begin{equation}
\begin{split}
\frac{1}{2\pi
i}\oint\limits_{\{w\}}(1-\sqrt{\xi}w)^{A-1}(1-\frac{\sqrt{\xi}}{w})^{-B}
\frac{dw}{w^{M+1}}=\xi^{\frac{M}{2}}(1-\xi)^{-B}\frac{\Gamma(-A+M+1)}{\Gamma(-A+1)\Gamma(M+1)}F(A,B;M+1;\frac{\xi}{\xi-1}).
\end{split}
\nonumber
\end{equation}
Here $\xi\in(0,1)$ and $\{w\}$ is an arbitrary simple contour
which goes around the points $0$ and $\xi$ in the positive
direction leaving $1/\sqrt{\xi}$ outside.
\end{prop}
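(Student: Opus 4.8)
The plan is to evaluate the left-hand side as a single Laurent coefficient on an annulus and then to identify the resulting series with the Gauss function on the right via one Pfaff transformation.

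\emph{Step 1 (reduction to a residue).} Writing $1-\sqrt\xi/w=(w-\sqrt\xi)/w$, the factor $(1-\sqrt\xi/w)^{-B}$ has branch points only at $0$ and $\sqrt\xi$ and extends to a single-valued holomorphic function on $\C\setminus[0,\sqrt\xi]$ (a loop around both branch points has trivial monodromy), while $(1-\sqrt\xi w)^{A-1}$ is single-valued and holomorphic on $\C\setminus[1/\sqrt\xi,\infty)$ and $w^{-M-1}$ is single-valued since $M\in\Z$. Hence the integrand is holomorphic on the connected set $\C\setminus([0,\sqrt\xi]\cup[1/\sqrt\xi,\infty))$, which deformation retracts onto a circle winding once around $[0,\sqrt\xi]$; since $0<\xi<1$ the unit circle lies in this set and represents that loop, as does the contour $\{w\}$ of the statement, so the integral equals the coefficient of $w^{-1}$ in the Laurent expansion of the integrand on $\{|w|=1\}$.

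\emph{Step 2 (expansion and termwise integration).} On $|w|=1$ we have $|\sqrt\xi w|=|\sqrt\xi/w|=\sqrt\xi<1$, so $(1-\sqrt\xi w)^{A-1}=\sum_{k\ge0}\frac{(1-A)_k}{k!}\xi^{k/2}w^{k}$ and $(1-\sqrt\xi/w)^{-B}=\sum_{\ell\ge0}\frac{(B)_\ell}{\ell!}\xi^{\ell/2}w^{-\ell}$, both converging absolutely and uniformly on the circle. Forming the Cauchy product, multiplying by $w^{-M-1}$, and integrating term by term via $\frac{1}{2\pi i}\oint_{|w|=1}w^{n}\,dw=\delta_{n,-1}$, only the pairs with $k=\ell+M$ contribute. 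For $M\ge0$ this produces $\xi^{M/2}\sum_{\ell\ge0}\frac{(1-A)_{\ell+M}(B)_\ell}{(\ell+M)!\,\ell!}\xi^{\ell}$, and the identities $(1-A)_{\ell+M}=(1-A)_M(1-A+M)_\ell$, $(\ell+M)!=M!\,(M+1)_\ell$ turn it into $\xi^{M/2}\dfrac{(1-A)_M}{M!}F(1-A+M,B;M+1;\xi)$.

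\emph{Step 3 (Pfaff transformation and Gamma bookkeeping).} Applying the Pfaff transformation $F(a,b;c;z)=(1-z)^{-b}F(c-a,b;c;\tfrac{z}{z-1})$ with $a=1-A+M$, $b=B$, $c=M+1$, $z=\xi$ (so $c-a=A$ and $\tfrac{z}{z-1}=\tfrac{\xi}{\xi-1}$) gives $\xi^{M/2}\dfrac{(1-A)_M}{M!}(1-\xi)^{-B}F(A,B;M+1;\tfrac{\xi}{\xi-1})$, and $\dfrac{(1-A)_M}{M!}=\dfrac{\Gamma(-A+M+1)}{\Gamma(-A+1)\Gamma(M+1)}$ yields the asserted identity. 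For $M<0$ one reads the right-hand side through the entire function $F(A,B;C;z)/\Gamma(C)=\sum_{n\ge0}\frac{(A)_n(B)_n}{\Gamma(C+n)\,n!}z^n$ at $C=M+1$, whose expansion begins at $n=-M$; repeating the same computation (now $k=\ell+M\ge0$ forces $\ell\ge-M$) and applying the Pfaff transformation again reproduces the formula.

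\emph{Main obstacle.} Everything past Step 1 is routine Pochhammer and Gamma manipulation. The delicate part is Step 1: fixing the branch cuts so that the integrand is genuinely single-valued on the prescribed contour, checking that the contour is homotopic to $\{|w|=1\}$ in the domain of holomorphy, and justifying the interchange of sum and integral (immediate from uniform convergence on the compact circle). One must also commit, in the $M<0$ case, to the analytic-continuation reading of $F(A,B;M+1;\cdot)/\Gamma(M+1)$, since then $1/\Gamma(M+1)=0$ annihilates the naive right-hand side.
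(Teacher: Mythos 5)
Your proof is correct. Note that the paper itself offers no argument for this proposition: it simply cites Lemma 2.2 of Borodin--Olshanski \cite{BO}, so your write-up supplies a self-contained proof where the paper has only a pointer, and the route you take (deform to the unit circle, read off the residue as a Laurent coefficient, resum into ${}_2F_1$, then apply a Pfaff transformation) is the standard and expected one for identities of this type. Two small remarks. First, the statement's phrase ``goes around the points $0$ and $\xi$'' must be read as enclosing the cut $[0,\sqrt{\xi}\,]$ --- otherwise $(1-\sqrt{\xi}/w)^{-B}$ is not single-valued on the contour --- and your Step 1 silently makes this (correct) reading; it is consistent with Propositions \ref{PSIa} and \ref{EquationRepresentationKzz'}, where the contours are required to enclose $\xi^{1/2}$. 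Second, your $M<0$ case is only sketched, but it does close: the constraint $\ell\ge -M$ reindexes (via $j=\ell+M$) to $\xi^{-M/2}\frac{(B)_{-M}}{(-M)!}F(1-A,B-M;1-M;\xi)$, and on the right-hand side the regularized series $\sum_{n\ge -M}\frac{(A)_n(B)_n}{\Gamma(M+1+n)\,n!}\left(\frac{\xi}{\xi-1}\right)^n$ combines with the prefactor via $\frac{\Gamma(1-A+M)}{\Gamma(1-A)}(A)_{-M}=(-1)^M$ and a second Pfaff transformation to give exactly the same expression; it would be worth writing out these three lines so the reader does not have to reconstruct them. With that addition the argument is complete.
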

\begin{proof}
See Borodin and Olshanski \cite{BO}, Lemma 2.2.
\end{proof}
\begin{prop}\label{PSIa}
We have the following integral representations
\begin{equation}
\begin{split}
\psi_a(x;z,z',\xi)=&\left(\frac{\Gamma(x+z+\frac{1}{2})\Gamma(x+z'+\frac{1}{2})}{\Gamma(z-a+\frac{1}{2})\Gamma(z'-a+\frac{1}{2})}\right)^{1/2}
\frac{\Gamma(z'-a+\frac{1}{2})}{\Gamma(x+z'+\frac{1}{2})}
(1-\xi)^{\frac{z'-z+1}{2}}\\
&\times\frac{1}{2\pi
i}\oint\limits_{\{w\}}(1-\sqrt{\xi}w)^{-z'+a-\frac{1}{2}}(1-\frac{\sqrt{\xi}}{w})^{z-a-\frac{1}{2}}w^{-x-a}\frac{dw}{w},
\end{split}
\nonumber
\end{equation}
where $\{w\}$ is an arbitrary simple loop, oriented in positive
direction, surrounding the points $0$ and $\sqrt{\xi}$, and
leaving $1/\sqrt{\xi}$ outside.
\end{prop}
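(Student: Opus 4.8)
The plan is to start from the explicit hypergeometric formula \eqref{PsiaF} for $\psi_a(x;z,z',\xi)$ and convert the Gauss hypergeometric function into a contour integral using Proposition \ref{LemmaFASCountourIntegral}. Concretely, in \eqref{LemmaFASCountourIntegral} I would set $A=-z+a+\frac{1}{2}$, $B=-z'+a+\frac{1}{2}$, and $M=x+a$ (this is an integer since $x,a\in\Z'$, so $x+a\in\Z$). The integrand on the left-hand side of \eqref{LemmaFASCountourIntegral} becomes $(1-\sqrt{\xi}w)^{A-1}(1-\tfrac{\sqrt{\xi}}{w})^{-B}w^{-M-1}=(1-\sqrt{\xi}w)^{-z+a-\frac{1}{2}}(1-\tfrac{\sqrt{\xi}}{w})^{z'-a-\frac{1}{2}}w^{-x-a-1}$. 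However, the target formula in Proposition \ref{PSIa} has the exponents $-z'+a-\frac{1}{2}$ on $(1-\sqrt{\xi}w)$ and $z-a-\frac{1}{2}$ on $(1-\tfrac{\sqrt{\xi}}{w})$, i.e. with $z$ and $z'$ interchanged. So I would instead apply \eqref{LemmaFASCountourIntegral} with the roles of $z$ and $z'$ swapped in the hypergeometric identity: use that $F(A,B;C;w)=F(B,A;C;w)$ to write $F(-z+a+\frac12,-z'+a+\frac12;x+a+1;\tfrac{\xi}{\xi-1})=F(-z'+a+\frac12,-z+a+\frac12;x+a+1;\tfrac{\xi}{\xi-1})$, and then apply Proposition \ref{LemmaFASCountourIntegral} with $A=-z'+a+\frac12$, $B=-z+a+\frac12$, $M=x+a$.

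Carrying this out, Proposition \ref{LemmaFASCountourIntegral} gives
$$
\frac{F\left(-z'+a+\frac12,-z+a+\frac12;x+a+1;\tfrac{\xi}{\xi-1}\right)}{\Gamma(x+a+1)}
=\frac{\Gamma(-A+1)}{\Gamma(-A+M+1)}\,\xi^{-\frac{M}{2}}(1-\xi)^{B}\cdot\frac{1}{2\pi i}\oint_{\{w\}}(1-\sqrt{\xi}w)^{A-1}\Bigl(1-\tfrac{\sqrt{\xi}}{w}\Bigr)^{-B}\frac{dw}{w^{M+1}},
$$
with $A=-z'+a+\frac12$, $B=-z+a+\frac12$, $M=x+a$. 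Here $-A+1=z'-a+\frac12$ and $-A+M+1=z'+x+\frac12$, so $\Gamma(-A+1)/\Gamma(-A+M+1)=\Gamma(z'-a+\frac12)/\Gamma(x+z'+\frac12)$; also $A-1=-z'+a-\frac12$, $-B=z-a-\frac12$, and $w^{-M-1}=w^{-x-a-1}$. Substituting this into \eqref{PsiaF}, I collect the elementary prefactors: the square-root factor is untouched; the factor $\xi^{-M/2}=\xi^{-(x+a)/2}$ cancels against $\xi^{(x+a)/2}$ in \eqref{PsiaF}; and the powers of $(1-\xi)$ combine as $(1-\xi)^{\frac{z+z'}{2}-a}\cdot(1-\xi)^{B}=(1-\xi)^{\frac{z+z'}{2}-a}\cdot(1-\xi)^{-z+a+\frac12}=(1-\xi)^{\frac{z'-z+1}{2}}$, which is exactly the power appearing in Proposition \ref{PSIa}. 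What remains is precisely the asserted integral representation, with the contour $\{w\}$ being the one furnished by Proposition \ref{LemmaFASCountourIntegral} (surrounding $0$ and $\xi$; note $\sqrt\xi$ and $1/\sqrt\xi$ are interchangeable as ``inside/outside'' descriptors once one checks the branch cuts of the non-integer powers lie on rays through $1/\sqrt\xi$, so the stated contour around $0$ and $\sqrt\xi$ is an admissible deformation).

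I do not expect a serious obstacle here — this is essentially a bookkeeping computation. The one point demanding care is the contour description: Proposition \ref{LemmaFASCountourIntegral} states the contour goes around $0$ and $\xi$ leaving $1/\sqrt{\xi}$ outside, whereas Proposition \ref{PSIa} says it surrounds $0$ and $\sqrt{\xi}$ leaving $1/\sqrt\xi$ outside. Since $\xi\in(0,1)$ we have $\xi<\sqrt\xi<1<1/\sqrt\xi$, and the integrand $(1-\sqrt{\xi}w)^{-z'+a-\frac12}(1-\tfrac{\sqrt\xi}{w})^{z-a-\frac12}w^{-x-a-1}$ is holomorphic in the annular region between a small loop around $\{0,\xi\}$ and a loop around $\{0,\sqrt\xi\}$ except possibly at branch points; the only finite branch points of the first two factors are at $w=1/\sqrt\xi$ and $w=\sqrt\xi$, and $w=\sqrt\xi$ must be kept inside (it is where $1-\tfrac{\sqrt\xi}{w}$ and the corresponding branch cut sit), so enlarging the contour from ``around $\{0,\xi\}$'' to ``around $\{0,\sqrt\xi\}$'' does not cross $1/\sqrt\xi$ and hence leaves the integral unchanged. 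Spelling out this deformation, and noting the $w=0$ pole/branch structure is handled by the factor $w^{-x-a-1}$ together with the loop enclosing $0$, completes the argument.
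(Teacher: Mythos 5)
Your proposal is correct and is exactly the paper's argument: the author's proof of Proposition \ref{PSIa} consists of the single remark that it ``follows immediately from equation (\ref{PsiaF}) and Proposition \ref{LemmaFASCountourIntegral},'' and your substitution $A=-z'+a+\tfrac12$, $B=-z+a+\tfrac12$, $M=x+a$ (after using $F(A,B;C;w)=F(B,A;C;w)$) together with the bookkeeping of the $\xi$ and $(1-\xi)$ powers and the Gamma ratio is precisely the intended computation. The care you take with the contour (the branch points sit at $\sqrt{\xi}$ and $1/\sqrt{\xi}$, so the two contour descriptions are equivalent by deformation) is a reasonable extra check that the paper leaves implicit.
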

\begin{proof}
Follows immediately from equation (\ref{PsiaF}), and from
Proposition \ref{LemmaFASCountourIntegral}.
\end{proof}
Let $\underline{K}_{z,z',\xi}$ be the orthogonal projection
operator in $l^2(\Z')$ whose range is the subspace spanned by the
basis vectors $\psi_a$ with indexes $a\in\Z'_+\subset\Z'$. If
$\underline{K}_{z,z',\xi}(x,y)$ is the matrix of
$\underline{K}_{z,z',\xi}$, then
\begin{equation}\label{5.6112}
\underline{K}_{z,z',\xi}(x,y)=\sum\limits_{a\in\Zp+\frac{1}{2}}
\psi_a(x;z,z',\xi)\psi_a(y;z,z',\xi).
\end{equation}
\begin{prop}\label{EquationRepresentationKzz'} The function $\underline{K}_{z,z',\xi}(x,y)$ can be
written in the form
\begin{equation}
\begin{split}
&\underline{K}_{z,z',\xi}(x,y)=\frac{1}{(2\pi
i)^2}\sqrt{\frac{\Gamma(x+z+\frac{1}{2})\Gamma(y+z'+\frac{1}{2})}{\Gamma(x+z'+\frac{1}{2})\Gamma(y+z+\frac{1}{2})}}\\
&\times\oint\limits_{\{w_1\}}\oint\limits_{\{w_2\}}
\frac{(1-\sqrt{\xi}w_1)^{-z'}(1-\frac{\sqrt{\xi}}{w_1})^{z}
(1-\sqrt{\xi}w_2)^{-z}(1-\frac{\sqrt{\xi}}{w_2})^{z'}}{w_1w_2-1}
\frac{dw_1}{w_1^{x+\frac{1}{2}}}\frac{dw_2}{w_2^{y+\frac{1}{2}}}.
\end{split}
\nonumber
\end{equation}
where $\{w_1\}$ and $\{w_2\}$ are arbitrary simple contours
satisfying the following conditions
\begin{itemize}
    \item  both contours go around $0$ in positive direction;
    \item  the point $\xi^{1/2}$ is in the interior of each of the
    contours while the point $\xi^{-1/2}$ lies outside the
    contours;
    \item the contour $\{w_1^{-1}\}$ is contained in the interior
    of the contour $\{w_2\}$ (equivalently, $\{w_2^{-1}\}$ is
    contained in the interior of $\{w_1\}$).
\end{itemize}
\end{prop}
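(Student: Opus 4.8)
The plan is to compute the sum (\ref{5.6112}) explicitly by inserting the contour-integral representation of Proposition \ref{PSIa}. For the factor $\psi_a(x;z,z',\xi)$ I would use that representation as it stands, over a contour $\{w_1\}$; for the factor $\psi_a(y;z,z',\xi)$ I would first observe that the right-hand side of (\ref{PsiaF}) is symmetric under $z\leftrightarrow z'$ (the $\Gamma$-ratio and the factor $(1-\xi)^{(z+z')/2-a}$ are symmetric, and $F(A,B;C;w)$ is symmetric in $A,B$), so that $\psi_a(y;z,z',\xi)=\psi_a(y;z',z,\xi)$, and then apply Proposition \ref{PSIa} to the latter, over a contour $\{w_2\}$. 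A one-line simplification rewrites the prefactor of Proposition \ref{PSIa} as $\left(\Gamma(x+z+\frac{1}{2})/\Gamma(x+z'+\frac{1}{2})\right)^{1/2}\left(\Gamma(z'-a+\frac{1}{2})/\Gamma(z-a+\frac{1}{2})\right)^{1/2}$, together with the mirror expression in $y$. Multiplying $\psi_a(x)\psi_a(y)$, the $x,y$-prefactor displayed in the statement appears, the two $a$-dependent $\Gamma$-factors cancel (here one uses that the square roots in (\ref{PsiaF}) are the positive ones), and the two powers of $1-\xi$ multiply to $(1-\xi)^{(z'-z+1)/2+(z-z'+1)/2}=(1-\xi)$.

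At this point $\underline{K}_{z,z',\xi}(x,y)$ equals that $x,y$-prefactor times $(1-\xi)(2\pi i)^{-2}$ times a double contour integral over $\{w_1\}\times\{w_2\}$ in which the only $a$-dependence is a factor $q^a$, where
$$
q=q(w_1,w_2)=\frac{(1-\sqrt{\xi}w_1)(1-\sqrt{\xi}w_2)}{(1-\sqrt{\xi}/w_1)(1-\sqrt{\xi}/w_2)\,w_1w_2}=\frac{1-\sqrt{\xi}w_1}{w_1-\sqrt{\xi}}\cdot\frac{1-\sqrt{\xi}w_2}{w_2-\sqrt{\xi}}.
$$
I would first prove the formula when $\{w_1\}$ and $\{w_2\}$ are both the circle $|w|=r$ with $1<r<\xi^{-1/2}$; this choice is admissible, since such a circle encircles $0$ and $\sqrt{\xi}$ and leaves $\xi^{-1/2}$ outside, while $\{w_i^{-1}\}$, a circle of radius $1/r<1<r$, lies inside the other contour. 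Since $\left|(1-\sqrt{\xi}w)/(w-\sqrt{\xi})\right|$ equals $1$ on $|w|=1$ and is strictly smaller than $1$ for $|w|>1$, we get $|q|\le\rho<1$ uniformly on $\{w_1\}\times\{w_2\}$; as the remaining factors of the integrand are bounded there, the geometric series $\sum_{a\in\Zp+\frac{1}{2}}q^a$ may be summed term by term under the integral sign, $\sum_{a\in\Zp+\frac{1}{2}}q^a=q^{1/2}/(1-q)$.

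It remains to simplify $q^{1/2}/(1-q)$. The key elementary identity is
$$
(w_1-\sqrt{\xi})(w_2-\sqrt{\xi})-(1-\sqrt{\xi}w_1)(1-\sqrt{\xi}w_2)=(1-\xi)(w_1w_2-1),
$$
whence $1-q=(1-\xi)(w_1w_2-1)\,/\,[(1-\sqrt{\xi}/w_1)(1-\sqrt{\xi}/w_2)\,w_1w_2]$. Substituting this back, the half-integer powers carried by $q^{1/2}$ and the factors $(1-\sqrt{\xi}/w_i)$, $w_i$ supplied by $1/(1-q)$ combine with the $a$-independent exponents $-z'-\frac{1}{2}$, $z-\frac{1}{2}$, $-z-\frac{1}{2}$, $z'-\frac{1}{2}$ to produce exactly $-z'$, $z$, $-z$, $z'$; the two factors $1-\xi$ cancel; the simple pole $(w_1w_2-1)^{-1}$ survives; and the residual $w_1^{1/2}$, $w_2^{1/2}$ combine with $dw_i/w_i$ to give $dw_1/w_1^{x+1/2}$ and $dw_2/w_2^{y+1/2}$. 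This is precisely the asserted formula, for the special contours.

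It remains to pass to arbitrary contours. Any pair $\{w_1\},\{w_2\}$ obeying the three stated conditions can be deformed --- preserving those conditions and without ever meeting the points $w_i\in\{0,\sqrt{\xi},\xi^{-1/2}\}$ or the locus $w_1w_2=1$ --- to a pair of concentric circles of the above type: first enlarge $\{w_1\}$ to a circle of radius just below $\xi^{-1/2}$ (which only shrinks $\{w_1^{-1}\}$, hence keeps it inside $\{w_2\}$), then do the same to $\{w_2\}$. Since the right-hand side integrand is holomorphic in $(w_1,w_2)$ away from exactly those singularities, Cauchy's theorem shows the value of the double integral is unchanged under such deformations, so the identity, already established for the special pair, holds for every admissible pair of contours. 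The two genuinely delicate points are the interchange of summation and integration --- which is why one begins with the special contours and the uniform bound $|q|\le\rho<1$ --- and the tracking of the branches of the multivalued factors, pinned down throughout by the conventions adopted after (\ref{PsiaF}).
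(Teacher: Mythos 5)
Your argument is correct, and it supplies something the paper itself does not: the paper's ``proof'' of this Proposition is only the citation to Borodin--Olshanski [BO], Theorem 3.3, whereas you give a self-contained derivation directly from Proposition \ref{PSIa} and the definition (\ref{5.6112}). The computation checks out at every stage: the $z\leftrightarrow z'$ symmetry of $\psi_a$ is genuine (symmetry of $F(A,B;C;\cdot)$ in $A,B$), the $a$-dependent gamma ratios do cancel, the powers of $1-\xi$ combine to a single factor $1-\xi$, the bound $|q|\le\rho<1$ on concentric circles $|w|=r$ with $1<r<\xi^{-1/2}$ follows from the Blaschke-factor estimate $\left|\tfrac{1-\sqrt{\xi}w}{w-\sqrt{\xi}}\right|<1$ for $|w|>1$, and the identity $(w_1-\sqrt{\xi})(w_2-\sqrt{\xi})-(1-\sqrt{\xi}w_1)(1-\sqrt{\xi}w_2)=(1-\xi)(w_1w_2-1)$ is exactly what converts $q^{1/2}/(1-q)$ into the factor $(w_1w_2-1)^{-1}$ together with the half-integer shifts of the exponents; this geometric-series summation is precisely the mechanism behind the cited [BO] result, so your route is faithful to the source the paper leans on. Two small points deserve more care than your closing sentence gives them. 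First, the final deformation step is phrased as ``enlarge $\{w_1\}$ to a circle of radius just below $\xi^{-1/2}$,'' but an admissible contour need not lie inside that disk (it only has to leave the single point $\xi^{-1/2}$ outside); the clean statement is the one-variable-at-a-time Cauchy argument: for fixed $w_2$ the integrand is holomorphic in $w_1$ in the region swept between the two $w_1$-contours, since the conditions keep $0$, $\sqrt{\xi}$, $\xi^{-1/2}$ and the curve $\{w_2^{-1}\}$ out of that region, and then symmetrically in $w_2$. Second, the branch bookkeeping should be made explicit at the one place it matters: the factor you call $q^a$ must be read as the product of the individually branched powers $(1-\sqrt{\xi}w_i)^{a}\bigl[(1-\sqrt{\xi}/w_i)w_i\bigr]^{-a}$ inherited from Proposition \ref{PSIa}, so that $q^{a+1}=q^a\cdot q$ and the series is a genuine geometric series with ratio $q$ and first term the product of the individual square roots. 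Neither point is a gap, only a spot where the write-up should be tightened.
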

\begin{proof}
See Borodin and Olshanski \cite{BO}, Theorem 3.3.
\end{proof}
\subsection{The main result: the formula for the correlation function of the $z$-measure with the
Jack parameter $\theta=2$. The matrix hypergeometric kernel} By a
point configuration in $\Z'$ we mean any subset of $\Z'$. Let
$\Conf(\Z')$ be the set of all point configurations, and assume
that we are given a probability measure on $\Conf(\Z')$. Then we
can speak about random point configurations in $\Conf(\Z')$. The
$n$th point correlation function of the given probability measure
is defined by
$$
\varrho_n(x_1,\ldots,x_n)=\Prob\{\mbox{the random configuration
contains $x_1,\ldots, x_n$}\}.
$$
Here $n=1,2,\ldots ,$ and $x_1,\ldots ,x_n$ are pairwise distinct
points in $\Z'$.

We say that a given probability measure defines a Pfaffian point
process on $\Conf(\Z')$ if there exists a $2\times 2$ matrix
valued kernel $\mathbb{K}(x,y)$ on $\Z'\times\Z'$ such that
$$
\varrho_n(x_1,\ldots,x_n)=\Pf\left[\mathbb{K}(x_i,x_j)\right]_{i,j=1}^n,\;\;n=1,2,\ldots
.
$$
The kernel $\mathbb{K}(x,y)$ is referred to as the correlation
kernel of the Pfaffian point process under considerations.

 Set
$\DC_{2}(\lambda)=\left\{\lambda_i-2i+\frac{1}{2}\right\}.$ Thus
$\DC_{2}(\lambda)$ is an infinite subset of $\Z'$ corresponding to
the Young diagram $\lambda$. Let $X=(x_1,\ldots,x_n)$ be a subset
of $\Z'$ consisting of $n$ pairwise distinct points, and define
$$
\varrho^{(z,z',\xi,\theta=2)}_n(x_1,\ldots,x_n)=M_{z,z',\xi,\theta=2}\left(\{\lambda|X\subset\DC_2(\lambda)\}\right).
$$
If $M_{z,z',\xi,\theta=2}$ is positive, then it is a probability
measure defined on $\Y$, and
$\varrho^{(z,z',\xi,\theta)}(x_1,\ldots,x_n)$ is the probability
that the random point configuration $\DC_{2}(\lambda)$ contains
the fixed $n$-point configuration $X=(x_1,\ldots,x_n)$. The
function $\varrho^{(z,z',\xi,\theta=2)}_n(x_1,\ldots,x_n)$ can be
understood as the correlation function of the point process
defined by the measure $M_{z,z',\xi,\theta=2}$.

The main result of the present paper is in explicit computation of
$\varrho^{(z,z',\xi,\theta)}(x_1,\ldots,x_n)$ for admissible
parameters $z$ and $z'$, for which both the measure
$M_{z,z',\xi,\theta=2}$ is positive, and the functions
$\psi_{a}(x;z,z',\xi)$, $\underline{K}_{z,z',\xi}(x;y)$ are well
defined by equations (\ref{PsiaF}), (\ref{5.6112})
correspondingly. In order to present our result, let us introduce
the functions $\left(E(z,z')\underline{K}_{z,z',\xi}\right)(x;y)$,
$\left(E(z,z')\psi_{-\frac{1}{2}}\right)(x;z,z',\xi)$, and
$\left(E(z,z')\psi_{\frac{1}{2}}\right)(x;z,z',\xi)$. We first
define these functions in terms of infinite series. Namely, for
any admissible $z, z'$ (i.e. for $z, z'$ from the principal or
complementary series defined as in Proposition
\ref{PropositionSeries} with $\theta=1$) these functions are given
by the following formulae. If $x-\frac{1}{2}$ is an even integer,
then
\begin{equation}
\left(E(z,z')\underline{K}_{z,z',\xi}\right)(x;y)=-\sum\limits_{l=0}^{\infty}
    \sqrt{\frac{\left(x+z+\frac{3}{2}\right)_{l,2}\left(x+z'+\frac{3}{2}\right)_{l,2}}{\left(x+z+\frac{5}{2}\right)_{l,2}\left(x+z'+\frac{5}{2}\right)_{l,2}}}
    \;\;\underline{K}_{z,z',\xi}(x+2l+1;y),
    \nonumber
\end{equation}
and
\begin{equation}
\left(E(z,z')\psi_{\pm
\frac{1}{2}}\right)(x;z,z',\xi)=-\sum\limits_{l=0}^{\infty}
    \sqrt{\frac{\left(x+z+\frac{3}{2}\right)_{l,2}\left(x+z'+\frac{3}{2}\right)_{l,2}}{\left(x+z+\frac{5}{2}\right)_{l,2}\left(x+z'+\frac{5}{2}\right)_{l,2}}}
    \;\;\psi_{\pm \frac{1}{2}}(x+2l+1;z,z',\xi).
    \nonumber
\end{equation}
Otherwise, if $x-\frac{1}{2}$ is an odd integer, then
$\left(E(z,z')\underline{K}_{z,z',\xi}\right)(x;y)$,
$\left(E(z,z')\psi_{-\frac{1}{2}}\right)(x;z,z',\xi)$, and
$\left(E(z,z')\psi_{\frac{1}{2}}\right)(x;z,z',\xi)$ are defined
by
\begin{equation}
\left(E(z,z')\underline{K}_{z,z',\xi}\right)(x,y)=
\sum\limits_{l=1}^{\infty}
    \sqrt{\frac{\left(-x-z-\frac{1}{2}\right)_{l,2}\left(-x-z'-\frac{1}{2}\right)_{l,2}}{\left(-x-z+\frac{1}{2}\right)_{l,2}\left(-x-z'+\frac{1}{2}\right)_{l,2}}}
    \;\;\underline{K}_{z,z',\xi}(x-2l+1;y),
    \nonumber
\end{equation}
and
\begin{equation}
\left(E(z,z')\psi_{\pm
\frac{1}{2}}\right)(x;z,z',\xi)=\sum\limits_{l=1}^{\infty}
\sqrt{\frac{\left(-x-z-\frac{1}{2}\right)_{l,2}\left(-x-z'-\frac{1}{2}\right)_{l,2}}{\left(-x-z+\frac{1}{2}\right)_{l,2}\left(-x-z'+\frac{1}{2}\right)_{l,2}}}
\;\;\psi_{\pm \frac{1}{2}}(x-2l+1;z,z',\xi).
    \nonumber
\end{equation}
Here $(x)_{n,k}$ denotes the Pochhammer $k$-symbol,
$$
(x)_{n,k}=x(x+k)(x+2k)\ldots (x+(n-1)k).
$$
Let us explain why the formulae above make sense. Once the
parameters $z, z'$ are admissible, all the expressions inside
square roots are strictly positive, so we can take the positive
values of the square roots in equations for
$\left(E(z,z')\underline{K}_{z,z',\xi}\right)(x,y)$,
$\left(E(z,z')\psi_{-\frac{1}{2}}\right)(x;z,z',\xi)$, and
$\left(E(z,z')\psi_{\frac{1}{2}}\right)(x;z,z',\xi)$ just written
above. Using Propositions \ref{PSIa} and
\ref{EquationRepresentationKzz'} we can represent these functions
as well-defined contour integrals, finite for all $x,y\in\Z'$. For
example, if $x-\frac{1}{2}$ is an even integer, then
\begin{equation}\label{KF}
\begin{split}
\left(E(z,z')\underline{K}_{z,z',\xi}\right)(x,y)&=-\frac{1}{(2\pi
i)^2}\frac{\Gamma(x+z+\frac{3}{2})\Gamma(y+z'+\frac{1}{2})}{\sqrt{\Gamma(x+z+\frac{3}{2})
\Gamma(x+z'+\frac{3}{2})\Gamma(y+z+\frac{1}{2})\Gamma(y+z'+\frac{1}{2})}}\\
&\times\oint\limits_{\{w_1\}}\oint\limits_{\{w_2\}}
\frac{(1-\sqrt{\xi}w_1)^{-z'}(1-\frac{\sqrt{\xi}}{w_1})^{z}
(1-\sqrt{\xi}w_2)^{-z}(1-\frac{\sqrt{\xi}}{w_2})^{z'}}{w_1w_2-1}\\
&\times
F\left(\frac{x+z+\frac{3}{2}}{2},1;\frac{x+z'+\frac{5}{2}}{2};\frac{1}{w_1^2}\right)\frac{dw_1}{w_1^{x+\frac{3}{2}}}\frac{dw_2}{w_2^{y+\frac{1}{2}}}.
\end{split}
\end{equation}
where the contours $\{w_1\}, \{w_2\}$ are chosen as in Proposition
\ref{EquationRepresentationKzz'}, and are contained in the domain
$|w|>1$. In the domain $|w_1|>1$ the  Gauss hypergeometric
function inside the integral is an analytic function of $w_1$, and
can be represented by the uniformly convergent series,
$$
F\left(\frac{x+z+\frac{3}{2}}{2},1;\frac{x+z'+\frac{5}{2}}{2};\frac{1}{w_1^2}\right)=\sum\limits_{l=0}^{\infty}
\frac{\left(\frac{x+ z+\frac{3}{2}}{2}\right)_l}{\left(\frac{x+
z'+\frac{5}{2}}{2}\right)_l}\frac{1}{w_1^{2l}}.
$$
Once the contour $\{w_1\}$ is chosen in the domain where the
series above is uniformly convergent, we can interchange summation
and integration, and, expressing each integral in the sum in terms
of function $\underline{K}_{z,z',\xi}(x,y)$, we arrive to the
series in the definition of
$\left(E(z,z')\underline{K}_{z,z',\xi}\right)(x,y)$. Since the
righthand side in equation (\ref{KF}) is finite for all $x, y\in
\Z'$, we conclude that the series in the definition of
$\left(E(z,z')\underline{K}_{z,z',\xi}\right)(x,y)$ is convergent
for all $x, y\in \Z'$.

Now we are in position to formulate the main result of this work.
\begin{thm}\label{MAINTHEOREM1} For any admissible  $z, z'$ the $z$-measure with the Jack parameter $\theta=2$ defines a Pfaffian
point process. Namely, for any admissible  $z, z'$ the $n$-point
correlation function of $M_{z,z',\xi,\theta=2}(\lambda)$ is given
by
\begin{equation}\label{equation2.111}
\varrho_n^{(z,z',\xi,\theta=2)}(x_1,\ldots,x_n)=\Pf\left[\underline{\mathbb{K}}_{z,z',\xi,\theta=2}(x_i,x_j)\right]_{i,j=1}^n,
\end{equation}
where the correlation kernel
$\underline{\mathbb{K}}_{z,z',\xi,\theta=2}(x,y)$ has the
following form
$$
\underline{\mathbb{K}}_{z,z',\xi,\theta=2}(x,y)=
\left[\begin{array}{cc}
  \underline{S}_{z,z',\xi,\theta=2}(x,y) & -\underline{SD_-}_{z,z',\xi,\theta=2}(x,y) \\
  -\underline{D_+S}_{z,z',\xi,\theta=2}(x,y) & \underline{D_+SD_-}_{z,z',\xi,\theta=2}(x,y) \\
\end{array}\right].
$$
In the formula above
\begin{equation}
\begin{split}
    &\underline{S}_{z,z',\xi,\theta=2}(x,y)=\sqrt{(z+y+\frac{1}{2})(z'+y+\frac{1}{2})}
    \left(E(z,z')\underline{K}_{z,z',\xi}\right)(x;y)\\
    &+\sqrt{zz'}\left(E(z,z')\psi_{-\frac{1}{2}}\right)(x;z,z',\xi)
\left(E(z,z')\psi_{\frac{1}{2}}\right)(y;z,z',\xi),
\end{split}
\nonumber
\end{equation}
\begin{equation}
\underline{D_+S}_{z,z',\xi,\theta=2}(x,y)=\frac{1}{\sqrt{(z+x+\frac{3}{2})(z'+x+\frac{3}{2})}}
\underline{S}_{z,z',\xi,\theta=2}(x+1,y), \nonumber
\end{equation}
\begin{equation}
\underline{SD_-}_{z,z',\xi,\theta=2}(x,y)=\frac{1}{\sqrt{(z+y+\frac{3}{2})(z'+y+\frac{3}{2})}}
\underline{S}_{z,z',\xi,\theta=2}(x,y+1), \nonumber
\end{equation}
and
\begin{equation}
\underline{D_+SD_-}_{z,z',\xi,\theta=2}(x,y)=\frac{1}{\sqrt{(z+x+\frac{3}{2})(z'+x+\frac{3}{2})(z+y+\frac{3}{2})(z'+y+\frac{3}{2})}}
\underline{S}_{z,z',\xi,\theta=2}(x+1,y+1). \nonumber
\end{equation}
\end{thm}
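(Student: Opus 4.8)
The plan is to realize the point process attached to $M_{z,z',\xi,\theta=2}$ as a limit of finite discrete ensembles of symplectic (``$\beta=4$'') type and to push the Pfaffian formula through the limit, in the same spirit in which Borodin and Olshanski \cite{BO} obtained the determinantal hypergeometric kernel for $\theta=1$ as a limit of Meixner ensemble kernels.

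\emph{Step 1: a finite $N$-particle ensemble.} For $N=1,2,\dots$, restrict $M_{z,z',\xi,\theta=2}$ to partitions with at most $N$ rows and renormalize, obtaining a probability measure $M^{[N]}$. In the coordinates $l_i=\lambda_i-2i+\tfrac12$ ($1\le i\le N$) this becomes a measure on $N$-point subsets $\{l_1>\dots>l_N\}$ of $\Z'$ subject to $l_i-l_{i+1}\ge 2$. Expanding $(z)_{\lambda,2}$, $(z')_{\lambda,2}$, $H(\lambda,2)$ and $H'(\lambda,2)$ as products over the rows of $\lambda$ and using the gap condition $l_i-l_{i+1}\ge 2$, one rewrites $M^{[N]}$ in the form $\mathrm{const}\cdot\bigl(\text{discrete }\beta=4\text{ interaction in the }l_i\bigr)\cdot\prod_i w_N(l_i)$ with an explicit Meixner-type weight $w_N$ built from Gamma functions and a power of $\xi$. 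This is the step where $\theta=2$, rather than a generic Jack parameter, is essential: it is precisely this value that collapses the weight into such a Pfaffian-ready shape.

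\emph{Step 2: the Pfaffian kernel for finite $N$.} Applying the general scheme for correlation functions of discrete symplectic ensembles (a de Bruijn-type summation identity together with skew-orthogonalization with respect to $w_N$), one gets, for each $N$, $\varrho_n^{[N]}(x_1,\dots,x_n)=\Pf[\mathbb{K}_N(x_i,x_j)]_{i,j=1}^n$ with a $2\times2$ matrix kernel $\mathbb{K}_N$ of exactly the block shape stated in the theorem, assembled from a scalar kernel $S_N$ and its renormalized discrete shifts $D_+S_N$, $S_ND_-$, $D_+S_ND_-$. Carrying out the skew-orthogonalization in terms of the Meixner orthogonal polynomials, one identifies $S_N$ explicitly: it is the Meixner ($\beta=2$) Christoffel--Darboux kernel composed with a discrete summation (``telescoping'') operator, plus a correction term of the form $\mathrm{const}\cdot(\text{distinguished Meixner polynomial})(x)\cdot(\text{distinguished Meixner polynomial})(y)$.

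\emph{Step 3: the limit $N\to\infty$.} After the appropriate matching of parameters, the Meixner $\beta=2$ kernel converges to the discrete hypergeometric kernel $\underline K_{z,z',\xi}$ of Proposition \ref{EquationRepresentationKzz'}, and the two distinguished Meixner polynomials converge to $\psi_{-1/2}$ and $\psi_{1/2}$; this is the same asymptotic analysis as for $\theta=1$, read off from the contour-integral representations of Propositions \ref{PSIa} and \ref{EquationRepresentationKzz'}. The summation operator of Step 2 converges to the operator $E(z,z')$, whose explicit series with Pochhammer $k$-symbol coefficients is produced by expanding the surviving Gauss hypergeometric factor in powers of $1/w_1^2$ inside the contour integral, just as in the derivation of (\ref{KF}). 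Hence $S_N\to\underline S_{z,z',\xi,\theta=2}$ and $\mathbb{K}_N\to\underline{\mathbb{K}}_{z,z',\xi,\theta=2}$ uniformly on finite subsets of $\Z'$, so $\varrho_n^{[N]}\to\varrho_n^{(z,z',\xi,\theta=2)}$ and (\ref{equation2.111}) follows for parameters in a range where the estimates of Steps 1--3 are transparent. Since both sides of (\ref{equation2.111}) are analytic in $(z,z')$ on the admissible domain (finiteness and analyticity of $\underline{\mathbb{K}}_{z,z',\xi,\theta=2}$ there follow from contour-integral representations such as (\ref{KF})), the identity extends to all admissible $z,z'$ by analytic continuation.

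The main obstacle is Steps 2--3: finding the skew-orthogonal polynomials of the symplectic Meixner weight in a form explicit enough to express $S_N$ through ordinary Meixner polynomials and a summation operator, and then controlling the $N\to\infty$ limit rigorously --- in particular interchanging the limit with the infinite series defining $E(z,z')$ and with the sum (\ref{5.6112}) for $\underline K_{z,z',\xi}$, and tracking the correction term. The combinatorial rewriting of Step 1 is routine but must be done carefully to exhibit the $\beta=4$ structure.
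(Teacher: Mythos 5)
Your overall architecture (finite discrete symplectic ensemble $\to$ Pfaffian with a Meixner-type kernel $\to$ identification with hypergeometric objects $\to$ extension to all admissible parameters) is the right one, but the way you set up the finite ensemble and the limit contains a genuine structural problem. The paper does \emph{not} restrict $M_{z,z',\xi,\theta=2}$ to diagrams with at most $N$ rows at fixed general $(z,z')$ and then let $N\to\infty$. Instead it specializes the \emph{parameters}: for $z=2N$, $z'=2N+\beta-2$ the factor $(z)_{\lambda,2}$ vanishes on every diagram with more than $N$ rows, so the measure is already, exactly and without any renormalization, the $N$-point Meixner symplectic ensemble of Definition \ref{1DEF} (Proposition \ref{PROPOSITIONI} and Lemma \ref{LemmaRelation}); the Pfaffian formula and the operator expression $S_{2N}^{\Meixner}=E^{\Meixner}K_{2N}^{\Meixner}+K_{2N}^{\Meixner}E^{\Meixner}-E^{\Meixner}K_{2N}^{\Meixner}D^{\Meixner}K_{2N}^{\Meixner}E^{\Meixner}$ are then quoted from Borodin--Strahov \cite{BS}, and no $N\to\infty$ limit of kernels is taken anywhere. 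Your Step 2 as written would fail for general admissible $(z,z')$: the restricted-and-renormalized measure $M^{[N]}$ has a weight built from $\Gamma(z+2-2N+\tx)\Gamma(z'+2-2N+\tx)$ with complex $z,z'$, whose skew-orthogonal polynomials are \emph{not} the classical Meixner polynomials, so the explicit skew-orthogonalization you rely on (and correctly flag as the main obstacle) is not available off the shelf; it only collapses to the classical Meixner case at the special parameter values. At that point your $N\to\infty$ limit, with its interchange-of-limits issues, is a detour that buys nothing.

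The second gap is the final continuation step. The admissible set (principal plus complementary series) is not an open subset of $\C^2$, so ``both sides are analytic in $(z,z')$ on the admissible domain, hence the identity extends'' is not a valid argument as stated. The paper's actual mechanism (Section 6) is: both sides of (\ref{equation2.111}) are holomorphic in $\xi$ (respectively $\sqrt{\xi}$) on the unit disk, with Taylor coefficients that are \emph{polynomial}, respectively \emph{rational}, functions of $(z,z')$; since the special points $\{z=2N$ large, $z'>2N-2\}$ form a set of uniqueness for rational functions of two variables, the coefficients agree identically. Making this work requires checking that the Gamma-function prefactors $\varphi_{z,z'}$ (which are not rational in $z,z'$) cancel out of the Pfaffian --- this is exactly why the paper introduces the hatted kernels $\widehat{\underline{K}}_{z,z',\xi}$, $\widehat{\psi}_{\pm\frac{1}{2}}$ and $\widehat{\underline{S}}_{z,z',\xi,\theta=2}$. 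Your proposal does not address either the non-openness of the parameter domain or the prefactor cancellation, and both are essential for the continuation to go through.
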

\subsection{Remarks on Theorem \ref{MAINTHEOREM1}}
\subsubsection{}
All matrix elements of
$\underline{\mathbb{K}}_{z,z',\xi,\theta=2}(x,y)$ are constructed
in terms of the Gauss hypergeometric functions, so it is natural
to refer to $\underline{\mathbb{K}}_{z,z',\xi,\theta=2}(x,y)$ as
to the \textit{matrix hypergeometric kernel}.
\subsubsection{}
All matrix elements of
$\underline{\mathbb{K}}_{z,z',\xi,\theta=2}(x,y)$ are symmetric
with respect to $z\longleftrightarrow z'$. This implies that the
correlation function is symmetric with respect to
$z\longleftrightarrow z'$ as well. The fact that this symmetry
relation must be satisfied is evident from the symmetry of the
$z$-measure under considerations under $z\longleftrightarrow z'$.
\subsubsection{}
It is possible to present the function
$\underline{S}_{z,z',\xi,\theta=2}(x,y)$ (which defines the matrix
kernel $\underline{\mathbb{K}}_{z,z',\xi,\theta=2}(x,y)$) in a
form which is manifestly antisymmetric with respect to
$x\longleftrightarrow y$. For this purpose let us introduce the
functions $\mathcal{P}(x,w,z,z')$ and $\mathcal{Q}(x,w,z,z')$. If
$x-\frac{1}{2}$ is even, these functions are defined by
\begin{equation}\label{Equation2.3.1.1}
\begin{split}
\mathcal{P}(x,w,z,z')&=\frac{\Gamma(x+z+\frac{1}{2})}{\sqrt{\Gamma(x+z+\frac{3}{2})\Gamma(x+z'+\frac{3}{2})}}\\
&\times F\left(\frac{x+z+\frac{3}{2}}{2},
1;\frac{x+z'+\frac{5}{2}}{2};\frac{1}{w^2}\right)w^{-x-\frac{1}{2}},
\end{split}
\end{equation}
and by
\begin{equation}\label{Equation2.3.1.2}
\begin{split}
\mathcal{Q}(x,w,z,z')&=\frac{\Gamma(x+z'+\frac{3}{2})}{\sqrt{\Gamma(x+z+\frac{3}{2})\Gamma(x+z'+\frac{3}{2})}}\\
&\times \left[F\left(\frac{x+z'+\frac{3}{2}}{2},
1;\frac{x+z+\frac{1}{2}}{2};\frac{1}{w^2}\right)-1\right]w^{-x+\frac{1}{2}}.
\end{split}
\end{equation}
If $x-\frac{1}{2}$ is odd, these functions are defined by
\begin{equation}\label{Equation2.3.1.3}
\begin{split}
\mathcal{P}(x,w,z,z')&=-\frac{\Gamma(x+z+\frac{1}{2})}{\sqrt{\Gamma(x+z+\frac{3}{2})\Gamma(x+z'+\frac{3}{2})}}\\
&\times\left[F\left(-\frac{x+z'+\frac{1}{2}}{2},1;-\frac{x+z'-\frac{1}{2}}{2};w^2\right)-1\right]
w^{-x-\frac{1}{2}},
\end{split}
\end{equation}
and by
\begin{equation}\label{Equation2.3.1.4}
\begin{split}
\mathcal{Q}(x,w,z,z')&=-\frac{\Gamma(x+z'+\frac{3}{2})}{\sqrt{\Gamma(x+z+\frac{3}{2})\Gamma(x+z'+\frac{3}{2})}}\\
&\times F\left(-\frac{x+z-\frac{3}{2}}{2},
1;-\frac{x+z'-\frac{1}{2}}{2};w^2\right)w^{-x+\frac{1}{2}}.
\end{split}
\end{equation}
Set
\begin{equation}\label{Equation2.3.1.5}
\begin{split}
\widetilde{\underline{S}}_{z,z',\xi}(x,y)=\frac{1}{(2\pi i)^2}
\oint\limits_{\{w_1\}}\oint\limits_{\{w_2\}}&
\frac{(1-\sqrt{\xi}w_1)^{-z'}(1-\frac{\sqrt{\xi}}{w_1})^{z}
(1-\sqrt{\xi}w_2)^{-z}(1-\frac{\sqrt{\xi}}{w_2})^{z'}}{w_1w_2-1}
\frac{dw_1}{w_1}\frac{dw_2}{w_2}\\
&\times\mathcal{P}(x,w_1,z,z')\mathcal{Q}(y,w_2,z,z'),
\end{split}
\end{equation}
where the contours $\{w_1\},\{w_2\}$ are chosen as in the
statement of Proposition \ref{EquationRepresentationKzz'} with the
following additional conditions. If $x-\frac{1}{2}$ is an even
integer, then $\{w_1\}$ lies in the domain $|w|>1$. If
$x-\frac{1}{2}$ is an odd integer, then $\{w_1\}$ lies in the
domain $|w|<1$. The  same condition is imposed on $\{w_2\}$: if
$y-\frac{1}{2}$ is an even integer, then $\{w_2\}$ lies in the
domain $|w|>1$, and  if $x-\frac{1}{2}$ is an odd integer, then
$\{w_1\}$ lies in the domain $|w|<1$. Then the following
Proposition holds true
\begin{prop}\label{Proposition2.12}
The function $\underline{S}_{z,z',\xi,\theta=2}(x,y) $ in the
definition of the matrix kernel
$\underline{\mathbb{K}}_{z,z',\xi,\theta=2}(x,y)$ is antisymmetric
with respect to $x\longleftrightarrow y$, and it can be written as
\begin{equation}\label{Equation2.121}
\underline{S}_{z,z',\xi,\theta=2}(x,y)=\widetilde{\underline{S}}_{z,z',\xi,\theta=2}(x,y)
-\widetilde{\underline{S}}_{z,z',\xi,\theta=2}(y,x),
\end{equation}
where the function
$\widetilde{\underline{S}}_{z,z',\xi,\theta=2}(x,y)$ is defined by
equations (\ref{Equation2.3.1.1})-(\ref{Equation2.3.1.5}).
\end{prop}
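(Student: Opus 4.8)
The plan is to establish the identity (\ref{Equation2.121}); the antisymmetry of $\underline{S}_{z,z',\xi,\theta=2}(x,y)$ with respect to $x\longleftrightarrow y$ is then automatic, since the right-hand side of (\ref{Equation2.121}) is manifestly of the form $f(x,y)-f(y,x)$. Thus the whole content of the Proposition is the explicit formula (\ref{Equation2.121}), and I would prove it by computing both sides as contour integrals and matching them.

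The first step is to represent each of the three ingredients entering the definition of $\underline{S}_{z,z',\xi,\theta=2}(x,y)$ — the function $\left(E(z,z')\underline{K}_{z,z',\xi}\right)(x;y)$ and the two functions $\left(E(z,z')\psi_{\pm\frac12}\right)(\cdot;z,z',\xi)$ — as contour integrals. For $\left(E(z,z')\underline{K}_{z,z',\xi}\right)(x;y)$ with $x-\frac12$ even this is already equation (\ref{KF}); the companion formula for $x-\frac12$ odd, as well as the single-contour-integral formulas for $\left(E(z,z')\psi_{\pm\frac12}\right)$ in both parities, are obtained in exactly the same way: substitute the integral representations of Propositions \ref{PSIa} and \ref{EquationRepresentationKzz'}, choose the $w$-contour inside or outside the unit circle according to the parity, interchange summation and integration, and sum the resulting geometric-type series into the Gauss hypergeometric functions appearing in (\ref{Equation2.3.1.1})--(\ref{Equation2.3.1.4}). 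This step already identifies $\mathcal{P}(x,w_1,z,z')$ as precisely the factor produced by letting $E(z,z')$ act on the $w_1$-variable inside the $\underline{K}_{z,z',\xi}$-integral.

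Next I would insert these representations into the definition of $\underline{S}_{z,z',\xi,\theta=2}(x,y)$ and reorganize. In the term $\sqrt{(z+y+\frac12)(z'+y+\frac12)}\left(E(z,z')\underline{K}_{z,z',\xi}\right)(x;y)$ the prefactor $\sqrt{(z+y+\frac12)(z'+y+\frac12)}$ is absorbed into the $\Gamma$-quotient and into the monomial $w_2^{-y-1/2}$ by the functional equation $\Gamma(c+1)=c\,\Gamma(c)$; the upshot is that the $y$-dependent part of the integrand becomes, up to the "$-1$" normalizations present in $\mathcal{Q}$, the factor $\mathcal{Q}(y,w_2,z,z')$ of (\ref{Equation2.3.1.2})/(\ref{Equation2.3.1.4}). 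It is precisely the rank-one summand $\sqrt{zz'}\left(E(z,z')\psi_{-\frac12}\right)(x)\left(E(z,z')\psi_{\frac12}\right)(y)$ that supplies the missing pieces: after the substitutions above it is a double contour integral of the same shape, and a partial-fraction manipulation of $\frac{1}{w_1w_2-1}$ — matched against the extra hypergeometric terms carried by the "$-1$"'s — turns the full integrand into the weight of (\ref{Equation2.3.1.5}) multiplied by $\mathcal{P}(x,w_1,z,z')\mathcal{Q}(y,w_2,z,z')-\mathcal{P}(y,w_1,z,z')\mathcal{Q}(x,w_2,z,z')$, i.e. into $\widetilde{\underline{S}}_{z,z',\xi}(x,y)-\widetilde{\underline{S}}_{z,z',\xi}(y,x)$ with $\widetilde{\underline{S}}_{z,z',\xi}$ as in (\ref{Equation2.3.1.5}). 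All of this must be carried out separately in the parity sectors $x-\frac12$ even/odd (and likewise for $y$), which is why $\mathcal{P}$ and $\mathcal{Q}$ are given by two pairs of formulas, and the contour prescriptions attached to (\ref{Equation2.3.1.5}) are exactly the domains of convergence of the hypergeometric series used in each sector.

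I expect the main obstacle to be this last step: verifying that the rank-one correction, together with the non-antisymmetric kernel $\frac{1}{w_1w_2-1}$, accounts exactly for the "$-1$"'s built into $\mathcal{Q}$ and for the cross term $\mathcal{P}(y,w_1,z,z')\mathcal{Q}(x,w_2,z,z')$. This reduces to a Gauss-hypergeometric contiguous-relation / partial-fraction identity that has to be checked in each parity case, and along the way one must keep careful track of the $\Gamma$-function prefactors, several of which (such as $\Gamma(z)$ and $\Gamma(z')$) are formally singular and make sense only in the ratios that actually occur. Once (\ref{Equation2.121}) has been verified, the antisymmetry of $\underline{S}_{z,z',\xi,\theta=2}$ in $x\longleftrightarrow y$ follows immediately, completing the proof of the Proposition.
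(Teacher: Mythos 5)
Your overall strategy --- convert the series defining $\left(E(z,z')\underline{K}_{z,z',\xi}\right)$ and $\left(E(z,z')\psi_{\pm\frac{1}{2}}\right)$ into contour integrals, resum into Gauss hypergeometric factors, and match against $\widetilde{\underline{S}}_{z,z',\xi}(x,y)-\widetilde{\underline{S}}_{z,z',\xi}(y,x)$ --- is indeed the computational half of what the paper does (it is Propositions \ref{PropositionParticularCase2}--\ref{PropositionParticularCase4} read in reverse, plus the contour manipulations in the paper's proof of the Proposition). But as written the argument has two genuine gaps. First, the decisive step is only named, never carried out: the whole burden of the statement sits in the claim that the rank-one summand $\sqrt{zz'}\left(E(z,z')\psi_{-\frac{1}{2}}\right)(x)\left(E(z,z')\psi_{\frac{1}{2}}\right)(y)$ combines with the ``$-1$'' normalizations in $\mathcal{Q}$ to produce exactly the cross term $\mathcal{P}(y,w_1,z,z')\mathcal{Q}(x,w_2,z,z')$. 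In the paper this reduces to the specific three-term contiguous relation
$$(C-A)(C-B)F(A,B;C+1;w)+C(C-1)F(A,B;C-1;w)-C(2C-A-B-1)F(A,B;C;w)=ABF(A+1,B+1;C+1;w)$$
with $A=-z$, $B=-z'$, $C=2l+y+1$, $w=\frac{\xi}{\xi-1}$, applied term by term; it is this relation that manufactures the factor $AB=zz'$ and converts the combination into $\sqrt{zz'}\left(E(z,z')\psi_{\frac{1}{2}}\right)$. Calling this ``a contiguous-relation identity to be checked in each parity case'' defers precisely the point at issue. Second, you cannot subtract $\widetilde{\underline{S}}(x,y)$ and $\widetilde{\underline{S}}(y,x)$ under one integral sign: by the contour prescription attached to (\ref{Equation2.3.1.5}), the $w_1$-contour of the first term is pinned to the parity of $x-\frac{1}{2}$ and that of the second to the parity of $y-\frac{1}{2}$, so in the mixed-parity case the two integrands live on opposite sides of $|w|=1$. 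The paper resolves this with the explicit swap identity (the $w_1\leftrightarrow w_2$ symmetry of the weight combined with $\underline{K}_{z,z',\xi}(x,y)=\underline{K}_{z,z',\xi}(y,x)$) stated at the start of its proof; your partial-fraction step needs this input before it can begin.

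It is also worth noting that the paper does not obtain the antisymmetry by direct hypergeometric computation at all: it first proves the formula for $z=2N$, $z'=2N+\beta-2$ from the operator identity $D^{\Meixner}S_{2N}^{\Meixner}D^{\Meixner}=D^{\Meixner}K_{2N}^{\Meixner}+K_{2N}^{\Meixner}D^{\Meixner}-K_{2N}^{\Meixner}D^{\Meixner}K_{2N}^{\Meixner}$ (a consequence of $E^{\Meixner}$ and $D^{\Meixner}$ being mutually inverse), whose right-hand side has a manifestly antisymmetric kernel, and then transports the result to general admissible $(z,z')$ via the contour-integral identifications and analytic continuation in the parameters. That operator identity is the structural source of the antisymmetry which your plan must instead recover ``by hand''; if you wish to keep the direct route, proving the displayed contiguous relation in each parity sector and handling the mixed-parity contours are the two items you must supply.
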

\subsubsection{} Let us drop assumption that the parameters $z, z'$ are
admissible. Denote by
$\varrho_n^{(z,z',\xi,\theta=2)}(x_1,\ldots,x_n)$ the function
which is obtained from
$\varrho_n^{(z,z',\xi,\theta=2)}(x_1,\ldots,x_n)$ by setting
$z'=z-1$ in the formulae of Theorem \ref{MAINTHEOREM1}.
\begin{prop}\label{Proposition2.13}
The function $\varrho_n^{(z,z',\xi,\theta=2)}(x_1,\ldots,x_n)$
takes the form
$$
\rho_n^{(z,z'=z-1,\xi,\theta=2)}(x_1,\ldots,x_n)=\Pf\left[\widehat{\underline{\mathbb{K}}}_{z,z'=z-1,\xi,\theta=2}(x_i,x_j)\right]_{i,j=1}^n,
$$
where
\begin{equation}
\widehat{\underline{\mathbb{K}}}_{z,z'=z-1,\xi,\theta=2}(x,y)=\left[\begin{array}{cc}
  \widehat{\underline{S}}_{z,z'=z-1,\xi,\theta=2}(x,y) & -\widehat{\underline{S}}_{z,z'=z-1,\xi,\theta=2}(x,y+1) \\
   -\widehat{\underline{S}}_{z,z'=z-1,\xi,\theta=2}(x+1,y) &  \widehat{\underline{S}}_{z,z'=z-1,\xi,\theta=2}(x+1,y+1)\\
\end{array}\right],
\end{equation}
and the function
$\widehat{\underline{S}}_{z,z'=z-1,\xi,\theta=2}(x,y)$ can be
written as
\begin{equation}\label{ContorS4z'=z-1}
\begin{split}
\widehat{\underline{S}}_{z,z'=z-1,\xi,\theta=2}(x,y)=-\frac{1}{(2\pi
i)^2}\oint\limits_{\{w_1\}}\oint\limits_{\{w_2\}}\frac{(1-\sqrt{\xi}w_1)^{-z'}(1-\frac{\sqrt{\xi}}{w_1})^z
(1-\sqrt{\xi}w_2)^{-z}(1-\frac{\sqrt{\xi}}{w_2})^{z'}}{w_1w_2-1}\\
\times\frac{(w_2-w_1)}{(w_2^2-1)(w_1^2-1)}\frac{dw_1}{w_1^{x-\frac{1}{2}}}\frac{dw_2}{w_2^{y-\frac{1}{2}}}.
\end{split}
\end{equation}
 where
  $\{w_1\}$ and $\{w_2\}$ are arbitrary simple
contours satisfying the following conditions
\begin{itemize}
    \item  both contours go around $0$ in positive direction;
    \item  the point $\xi^{1/2}$ is in the interior of each of the
    contours while the point $\xi^{-1/2}$ lies outside the
    contours;
    \item Both contours  $\{w_1\}$ and $\{w_2\}$ lie in the domain $|w|>1$.
\end{itemize}
\end{prop}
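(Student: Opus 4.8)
The plan is to specialize the formulas of Theorem~\ref{MAINTHEOREM1} to $z'=z-1$. Since such a pair $(z,z')$ is not admissible, the substitution should be done not in the eigenfunctions but in the contour-integral representations, which are holomorphic in $(z,z')$ on a domain containing $z'=z-1$, and the outcome is then obtained by analytic continuation. First I would reduce the building blocks $\left(E(z,z')\underline K_{z,z',\xi}\right)(x,y)$ and $\left(E(z,z')\psi_{\pm\frac12}\right)$ by putting $z'=z-1$ into (\ref{KF}) and into the analogous integrals obtained from Propositions~\ref{PSIa} and~\ref{EquationRepresentationKzz'} in exactly the same way. Two things happen: (i) every Gauss hypergeometric function degenerates, because at $z'=z-1$ its upper and lower parameters coincide (e.g.\ $\frac{x+z+3/2}{2}=\frac{x+z'+5/2}{2}$), so $F(a,1;a;u)=(1-u)^{-1}$ turns it into a rational function with denominator $w^{2}-1$ (the same happens to the series occurring in the odd-parity formulas); (ii) the $\Ga$-prefactors rationalize, since $\Ga(x+z'+c)$ at $z'=z-1$ equals $\Ga(x+z+c-1)$ while $\Ga(x+z+\tfrac32)=(x+z+\tfrac12)\Ga(x+z+\tfrac12)$, so the square roots collapse up to an overall factor of the shape $\sqrt{(z+x+\tfrac12)(z+y+\tfrac12)}$.

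Next I would assemble $\underline S$ from
$$
\underline S_{z,z',\xi,\theta=2}(x,y)=\sqrt{(z+y+\tfrac12)(z'+y+\tfrac12)}\left(E(z,z')\underline K_{z,z',\xi}\right)(x;y)+\sqrt{zz'}\left(E(z,z')\psi_{-\frac12}\right)(x)\left(E(z,z')\psi_{\frac12}\right)(y)
$$
with $z'=z-1$. The first summand carries the coupling $1/(w_1w_2-1)$; the second is rank one, i.e.\ a double contour integral with separated $w_1,w_2$-integrand. The heart of the computation is to merge the two pieces, by a partial-fraction manipulation in $w_1$ and $w_2$ (using that at $z'=z-1$ the factors $(1-\sqrt{\xi}w_1)^{-z'}$ and $(1-\sqrt{\xi}/w_2)^{z'}$ differ from their $z$-analogues by an extra rational factor, and that the weight $K(w_1,w_2)=(1-\sqrt{\xi}w_1)^{-z'}(1-\sqrt{\xi}/w_1)^{z}(1-\sqrt{\xi}w_2)^{-z}(1-\sqrt{\xi}/w_2)^{z'}$ satisfies $K(w_1,1/w_1)=K(1/w_1,w_1)=1$, so $K(w_1,w_2)-K(w_2,w_1)$ is divisible by $w_1w_2-1$), into a single double contour integral whose antisymmetrized numerator collapses to $w_2-w_1$. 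I expect the outcome to be
$$
\underline S_{z,z-1,\xi}(x,y)=\sqrt{(z+x+\tfrac12)(z+y+\tfrac12)}\ \widehat{\underline S}_{z,z-1,\xi}(x,y),
$$
with $\widehat{\underline S}_{z,z-1,\xi}$ precisely the kernel in (\ref{ContorS4z'=z-1}). (Alternatively one can substitute $z'=z-1$ into $\underline S=\widetilde{\underline S}(x,y)-\widetilde{\underline S}(y,x)$ of Proposition~\ref{Proposition2.12} and use the elementary reductions of $\mathcal P,\mathcal Q$ in (\ref{Equation2.3.1.1})--(\ref{Equation2.3.1.4}), but then one must first justify the analytic continuation of the contours there, since $z'=z-1$ is not admissible.)

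Finally, feeding $\underline S_{z,z-1,\xi}(x,y)=\sqrt{(z+x+\tfrac12)(z+y+\tfrac12)}\,\widehat{\underline S}_{z,z-1,\xi}(x,y)$ into the four entries $\underline S,\underline{D_+S},\underline{SD_-},\underline{D_+SD_-}$ of $\underline{\mathbb K}_{z,z',\xi,\theta=2}$ at $z'=z-1$, and using $\sqrt{(z+x+\tfrac32)(z'+x+\tfrac32)}$ at $z'=z-1$ equals $\sqrt{(z+x+\tfrac32)(z+x+\tfrac12)}$, all square-root weights recombine and one gets
$$
\widehat{\underline{\mathbb K}}_{z,z-1,\xi,\theta=2}(x,y)=\Lambda(x)\,\underline{\mathbb K}_{z,z-1,\xi,\theta=2}(x,y)\,\Lambda(y),\qquad \Lambda(x)=\diag\bigl((z+x+\tfrac12)^{-1/2},(z+x+\tfrac12)^{1/2}\bigr),
$$
where $\widehat{\underline{\mathbb K}}$ is the matrix displayed in the statement (one checks all four entries: e.g.\ the $(2,2)$-entry becomes $\widehat{\underline S}(x+1,y+1)$ after cancellation, and likewise the off-diagonal ones become $-\widehat{\underline S}(x,y+1)$, $-\widehat{\underline S}(x+1,y)$). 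Since $\det\Lambda(x)\equiv1$, the block-diagonal matrix $B=\diag\bigl(\Lambda(x_1),\dots,\Lambda(x_n)\bigr)$ has $\det B=1$, hence
$$
\Pf\!\left[\widehat{\underline{\mathbb K}}_{z,z-1,\xi,\theta=2}(x_i,x_j)\right]_{i,j=1}^n=\det(B)\,\Pf\!\left[\underline{\mathbb K}_{z,z-1,\xi,\theta=2}(x_i,x_j)\right]_{i,j=1}^n=\Pf\!\left[\underline{\mathbb K}_{z,z-1,\xi,\theta=2}(x_i,x_j)\right]_{i,j=1}^n,
$$
which together with (\ref{equation2.111}) at $z'=z-1$ is the claim.

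The step I expect to be the main obstacle is the parity bookkeeping for $x-\tfrac12$ and $y-\tfrac12$. The objects $E(z,z')\underline K$, $E(z,z')\psi_{\pm\frac12}$ (and, in the alternative route, $\mathcal P,\mathcal Q$) are given by genuinely different formulas in the even and odd cases, while $\underline{D_\pm}$ shift the argument by $\pm1$ and hence flip the parity; one has to verify that the ``even'' reduction of $\underline S(x,y)$ and the ``odd'' reduction of $\underline S(x\pm1,y)$ fit together into the single weight $\Lambda$ and the single integral (\ref{ContorS4z'=z-1}). In particular, in the odd-parity case the contour naturally lies in $|w|<1$ and has to be pushed to $|w|>1$ as in (\ref{ContorS4z'=z-1}), which requires checking that the residues crossed are exactly cancelled by the corresponding contribution of the rank-one term. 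By contrast, the hypergeometric collapses and the $\Ga$-function cancellations are routine.
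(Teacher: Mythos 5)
Your proposal follows essentially the same route as the paper: substitute $z'=z-1$ into the formulae of Theorem \ref{MAINTHEOREM1}, let the Gauss hypergeometric functions collapse via $F(a,1;a;u)=(1-u)^{-1}$, and reduce the kernel by direct calculation to the form (\ref{ContorS4z'=z-1}) up to a conjugation that leaves the Pfaffian invariant, with the parity-dependent placement of the contours as the one genuinely delicate point. A small correction on that last point: in the paper's argument the contours are first fixed according to the parities of $x-\tfrac12$ and $y-\tfrac12$, and when, say, $\{w_2\}$ is then pushed from $|w|<1$ to $|w|>1$ the two residues picked up at $w_2=\pm1$ cancel \emph{each other} (the factor $(-1)^{y-1/2}$ equals $-1$ for odd $y-\tfrac12$), rather than being cancelled by a contribution of the rank-one term.
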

Formula (\ref{ContorS4z'=z-1})  is equivalent to the result of
Theorem 3.1 a) in Strahov \cite{strahov}.  Theorem 3.1 a) in
Strahov \cite{strahov} was obtained by a completely different
method, and this comparison provides a check of validity for
Theorem \ref{MAINTHEOREM1}.
\subsection{The method: analytic
continuation of the Meixner symplectic ensemble} It was shown in
Borodin and Strahov \cite{BS} that the $z$-measures with
parameters $z=2N$, $z'=2N+\beta-2$, and $\theta=2$ turns into an
ensemble of $N$ particles on $\Zp$ called in  Borodin and Strahov
\cite{BS} the Meixner symplectic ensemble. It was shown in
\cite{BS} that this discrete ensemble is integrable in the sense
that the correlation function can be expressed explicitly in terms
of known functions. Namely, a discrete version of the method
developed by Tracy and Widom \cite{tracy}, Widom \cite{widom}
works for the Meixner symplectic ensemble, and correlation
functions are expressible in terms of Pfaffians of $2\times 2$
matrix kernels. The matrix elements of these kernels can be
written in terms of the classical Meixner orthogonal polynomials.
In the present paper we provide  contour integral representations
for the elements of the correlation kernel (see Theorem
\ref{TheoremSN4}), which is the result of an independent interest.

We regard the $z$-measures  with the Jack parameter $\theta=2$ as
the result of analytic continuation of the Mexiner symplectic
ensemble in parameter $N$ (number of particles). The procedure of
the analytic continuation is a natural extension of the approach
developed in Borodin and Olshanski \cite{BO} to much more
complicated situation of the matrix correlation kernels.

\section{The relation
between the $z$-measure with the parameter $\theta=2$ and the
Meixner symplectic ensemble. The correlation function for the
Meixner symplectic ensemble} We define the Meixner symplectic
ensemble in the same way as in Borodin and Strahov \cite{BS},
Section 2. Elements of $\Zp=\{0,1,2,\ldots\}$ will be denoted by
letters $\tx$, $\ty$. (Recall that the elements of $\Z'$ were
denoted by letters $x, y$.)

Let $w(\tx)$ be a strictly positive real valued function defined
on $\Zp$ with finite moments, i.e. the series
$\sum_{\tx\in\Zp}w(\tx)\tx^{j}$ converges for all $j=0,1,\ldots$.
\begin{defn}\label{1DEF}
The $N$-point discrete symplectic ensemble with the weight
function $w$ and the phase space $\Zp$ is the random $N$-point
configuration in $\Zp$ such that the probability of a particular
configuration $\tx_1<\ldots <\tx_N$ is given by
$$
\Prob\left\{\tx_1,\ldots,\tx_N\right\}=Z_{N4}^{-1}\;\prod\limits_{i=1}^Nw(\tx_i)
\prod\limits_{1\leq i<j\leq
N}(\tx_i-\tx_j)^2(\tx_i-\tx_j-1)(\tx_i-\tx_j+1).
$$
Here $Z_{N4}$ is a normalization constant which is assumed to be
finite.
\end{defn}
In what follows $Z_{N4}$ is referred to as the partition function
of the discrete symplectic ensemble under considerations.

We consider the particular case when $w(\tx)$ is the Meixner
weight given by
 the formula
\begin{equation}\label{MeinerWeight171}
W^{\Meixner}_{\beta,\xi}(\tx)=\frac{(\beta)_{\tx}}{\tx!}\xi^{\tx},\;\;\tx\in\Zp,
\end{equation}
where $\beta$ is a strictly positive real parameter, and
$0<\xi<1$. In this situation we say that we are dealing with the
Meixner symplectic ensemble.
\begin{prop}\label{PROPOSITIONI}
For $N=1,2,\ldots $ let $\Y(N)\subset\Y$ denote the set of
diagrams $\lambda$ with  $l(\lambda)\leq N$ (where $l(\lambda)$ is
the number of rows in $\lambda$). Under the bijection between
diagrams $\lambda\in\Y(N)$ and $N$-point configurations on $\Zp$
defined by
$$
\lambda\longleftrightarrow \tilde{x}_{N-i+1}=\lambda_i-2i+2N\;\;
(i=1,\ldots, N)
$$
the $z$-measure with parameters $z=2N$, $\theta=2$,
$z'=2N+\beta-2$ turns into
$$
\Prob^{\Meixner}\left\{\tilde{x}_1,\ldots,\tilde{x}_N\right\}=\const\cdot\prod\limits_{i=1}^N\frac{(\beta)_{\tilde{x}_i}}{\tilde{x}_i!}\xi^{\tilde{x}_i}
\prod\limits_{1\leq i\leq j\leq
N}(\tilde{x}_i-\tilde{x}_j)^2(\tilde{x}_i-\tilde{x}_j-1)(\tilde{x}_i-\tilde{x}_j+1),
$$
which is precisely the discrete symplectic ensemble with the
Meixner weight in the sense of Definition \ref{1DEF}.
\end{prop}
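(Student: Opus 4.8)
The plan is to prove the proposition by substituting the special parameters into formula (\ref{EquationMzztheta}) and then translating everything into the coordinates attached to the point configuration. Put $z=2N$, $z'=2N+\beta-2$, $\theta=2$, so that $t=zz'/\theta=N(2N+\beta-2)$, and for a diagram $\lambda$ set $l_i:=\lambda_i+2(N-i)$ for $i=1,\dots,N$ (with the convention $\lambda_i=0$ for $i>l(\lambda)$). Since $\lambda_1\ge\lambda_2\ge\cdots$, the $l_i$ are distinct nonnegative integers with $l_1>l_2>\cdots>l_N\ge 0$ and $l_i-l_{i+1}\ge 2$, and the bijection in the statement is precisely $\lambda\leftrightarrow(\tx_1,\dots,\tx_N)=(l_N,\dots,l_1)$. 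Two factors of (\ref{EquationMzztheta}) are immediately harmless: $(1-\xi)^{t}$ does not depend on $\lambda$, and since $\sum_i l_i=|\lambda|+N(N-1)$ one has $\xi^{|\lambda|}=\const\cdot\prod_{i=1}^{N}\xi^{l_i}$. (That $M_{2N,2N+\beta-2,\xi,2}$ is an honest probability measure is covered by the degenerate series case of Proposition \ref{PropositionSeries}, with $z=N\theta$, $\theta=2$, $z'=2N+\beta-2>2(N-1)$, but this is not actually needed below.)

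Next I would rewrite the two multidimensional Pochhammer symbols. From $(z)_{\lambda,\theta}=\prod_{i=1}^{l(\lambda)}(z-(i-1)\theta)_{\lambda_i}$ one gets $(2N)_{\lambda,2}=\prod_{i=1}^{l(\lambda)}\bigl(2(N-i+1)\bigr)_{\lambda_i}$; if $l(\lambda)>N$, the factor with index $i=N+1$ equals $(0)_{\lambda_{N+1}}=0$, so $M_{2N,2N+\beta-2,\xi,2}$ is supported on $\Y(N)$, which is exactly the domain of the bijection. For $\lambda\in\Y(N)$ the product may be extended to $i=1,\dots,N$, and since $2(N-i+1)+\lambda_i=l_i+2$ one obtains $(2N)_{\lambda,2}=\const\cdot\prod_{i=1}^{N}(l_i+1)!$. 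In the same way $(2N+\beta-2)_{\lambda,2}=\prod_{i=1}^{N}\bigl(2(N-i)+\beta\bigr)_{\lambda_i}=\prod_{i=1}^{N}\Gamma(l_i+\beta)/\Gamma(2(N-i)+\beta)=\const\cdot\prod_{i=1}^{N}l_i!\cdot\prod_{i=1}^{N}\frac{(\beta)_{l_i}}{l_i!}$.

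The core of the argument is the evaluation of the generalized hook products at $\theta=2$, valid for every $N\ge l(\lambda)$:
\[
H(\lambda,2)\,H'(\lambda,2)=\const\cdot\frac{\prod_{i=1}^{N}(l_i+1)(l_i!)^{2}}{\prod_{1\le i<j\le N}(l_i-l_j)^{2}(l_i-l_j-1)(l_i-l_j+1)},
\]
the multiplicative constant depending only on $N$. This is the step I expect to be the main obstacle. The cleanest route is to note that we only need the ratio of the two sides to be independent of $\lambda$ for fixed $N$: adding one box to $\lambda$ multiplies $H(\lambda,2)H'(\lambda,2)$ by an explicit finite product of the arm/leg factors $(a+2l+1)$ and $(a+2l+2)$ that change, and shifts a single coordinate $l_i$ by $1$ while fixing the others, so the required invariance reduces to an elementary identity among products of linear factors in the $l_k$. (Equivalently one can telescope each row product $\prod_{(i,j)\in\text{row }i}(a(i,j)+2l(i,j)+c)$ into a ratio of Pochhammer symbols in the $l_k$, which is the $\theta=2$ analogue of the classical hook-length formula and is essentially the computation underlying Borodin and Strahov \cite{BS}.)

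Finally I would assemble the pieces. Substituting the three reductions into (\ref{EquationMzztheta}) and using $(l_i+1)!\cdot l_i!=(l_i+1)(l_i!)^2$, all factorials cancel and one is left with
\[
M_{2N,\,2N+\beta-2,\,\xi,\,2}(\lambda)=\const\cdot\prod_{i=1}^{N}\frac{(\beta)_{l_i}}{l_i!}\,\xi^{l_i}\prod_{1\le i<j\le N}(l_i-l_j)^{2}(l_i-l_j-1)(l_i-l_j+1),
\]
with a constant depending only on $N,\beta,\xi$. Under $\tx_{N-i+1}=l_i$ the right-hand side is exactly $\Prob^{\Meixner}\{\tx_1,\dots,\tx_N\}$ with the Meixner weight $W^{\Meixner}_{\beta,\xi}$, i.e.\ the discrete symplectic ensemble of Definition \ref{1DEF}; and since both $M_{2N,2N+\beta-2,\xi,2}$ on $\Y(N)$ and $\Prob^{\Meixner}$ on $N$-point configurations in $\Zp$ are probability measures, the constant is forced to equal $Z_{N4}^{-1}$, so no separate normalization computation is needed.
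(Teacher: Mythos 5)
Your proposal is correct and follows essentially the same route as the paper, whose proof is a one-line appeal to the explicit formulae for $(z)_{\lambda,\theta}$ and for $H(\lambda,2)H'(\lambda,2)$ in the $l_i=\lambda_i+2(N-i)$ coordinates (citing Lemma 3.5 of Borodin--Olshanski \cite{BO}) followed by the same ``straightforward computation'' you carry out in detail: support on $\Y(N)$ via the vanishing of $(0)_{\lambda_{N+1}}$, conversion of the Pochhammer symbols to $\prod_i(l_i+1)!$ and $\prod_i\Gamma(l_i+\beta)$, and cancellation against the hook products. Your only partially justified step, the identity $H(\lambda,2)H'(\lambda,2)=\const\cdot\prod_i(l_i+1)(l_i!)^2\big/\prod_{i<j}(l_i-l_j)^2\bigl((l_i-l_j)^2-1\bigr)$, is precisely what the paper also delegates to the literature, so nothing essential is missing.
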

\begin{proof}
The proof is  a  straightforward computation based on the
application of the explicit formulae for
$H(\lambda;2)H'(\lambda;2)$, see the proof of Lemma 3.5 in
\cite{BO}, and $(z)_{\lambda,\theta}$, see Section 1 in \cite{BO}.
\end{proof}
We employ the same notation for the Meixner polynomials as in
Borodin and Olshanski \cite{BO}. Thus the Meixner polynomials are
denoted by $\mathfrak{M}_n(\tx;\beta,\xi)$. We use the same
normalization for these polynomials as in Koekoek and Swarttouw
\cite{koekoek}). Note that in Koekoek and Swarttouw
\cite{koekoek}) the parameter $\xi$ in the definition of the
Meixner weight is denoted as $c$. For basic properties of the
classical discrete orthogonal polynomials, and, in particular, the
Meixner polynomials, see Ismail \cite{ismail}.

As in Borodin and Olshanski \cite{BO}, we set
$$
\tMf_n(\tx;\beta,\xi)=(-1)^n\frac{\tMf_n(\tx;\beta,\xi)}{||\tMf_n(.\;;\beta,\xi)||}
\sqrt{W^{\Meixner}_{\beta,\xi}(\tx)},\; \tx\in\Zp,
$$
where
$$
||\tMf_n(.\;;\beta,\xi)||^2=\sum\limits_{\tx=0}^{\infty}\Mf_n^2(\tx;\beta,\xi)W^{\Meixner}_{\beta,\xi}(\tx).
$$
Let $\HC_{\Meixner}$ be the space spanned by functions $\tMf_0,
\tMf_1, \tMf_2,\ldots$, that is, each element of $\HC_{\Meixner}$
is a linear combination of $\tMf_0, \tMf_1, \tMf_2,\ldots$. We
introduce the operators $D_+^{\Meixner}$, $D_-^{\Meixner}$, and
$E^{\Meixner}$ which act on the elements of the space
$\HC_{\Meixner}$. The first and the second operators,
$D_+^{\Meixner}$ and $D_-^{\Meixner}$, are defined by the
expression:
$$
\left(D_{\pm}^{\Meixner}f\right)(\tx)=\sum\limits_{\ty=0}^{\infty}D_{\pm}^{\Meixner}(\tx,\ty)f(\ty),
$$
where the kernels $D_{\pm}^{\Meixner}(\tx,\ty)$ are given
explicitly by
\begin{equation}\label{DPLUS(x,y)}
D_+^{\Meixner}(\tx,\ty)=\frac{1}{\sqrt{\xi}}\sqrt{\frac{1+\tx}{\beta+\tx}}\;\delta_{\tx+1,\ty},\;\;\tx,
\ty\in\Zp,
\end{equation}
\begin{equation}\label{DMINUS(x,y)}
D_-^{\Meixner}(\tx,\ty)=\frac{1}{\sqrt{\xi}}\sqrt{\frac{\tx}{\beta+\tx-1}}\;\delta_{\tx-1,\ty},\;\;\tx,
\ty\in\Zp.
\end{equation}
The third operator, $E^{\Meixner}$, is defined by the formula
\begin{equation}\label{EMeixner}
\begin{split}
\left(E^{\Meixner} f\right)(\tx)=
\left\{%
\begin{array}{ll}
    -\sqrt{\xi}\sum\limits_{\ty=0}^{\infty}
\sqrt{\frac{(\beta+\tx)_{\ty+1,2}(2+\tx)_{\ty,2}}{(1+\tx)_{\ty+1,2}(1+\beta+\tx)_{\ty,2}}}\,f(\tx+2\ty+1), & \tx\;\; \hbox{is even;}\\
     \\\sqrt{\xi}\sum\limits_{\ty=0}^{\infty}
\sqrt{\frac{(1-\beta-\tx)_{\ty+1,2}(1-\tx)_{\ty,2}}{(-\tx)_{\ty+1,2}(2-\beta-\tx)_{\ty,2}}}\,f(\tx-2\ty-1),
& \tx\;\; \hbox{is odd.}
\end{array}%
\right.
\end{split}
\end{equation}
Note that the sum in the case of an odd $\tx$ actually runs from
$0$ to $\frac{\tx-1}{2}$, so $\tx-2\ty-1\in\Zp$ in the argument of
the function $f$. It is explained in Borodin and Strahov \cite{BS}
that the series defining $(E^{\Meixner})f(x)$ converges for any
$f$ from $\HC_{\Meixner}$, i.e. $E^{\Meixner} f$ is well defined,
see the discussion after equation (2.3) in Borodin and Strahov
\cite{BS}, Section 2.

Let $\HC^{\Meixner}_{2N}$ be the subspace of $\HC_{\Meixner}$
spanned by the functions $\tMf_0$, $\tMf_1,\ldots ,\tMf_{2N-1}$.
We denote by $K_{2N}^{\Meixner}$ the projection operator onto
$\HC_{2N}^{\Meixner}$. Its kernel is
$$
K_{2N}^{\Meixner}(\tx,\ty)=\sum\limits_{k=0}^{2N-1}\tMf_k(\tx)\tMf_k(\ty).
$$

In addition, we introduce the operator $S_{2N}^{\Meixner}$ by the
formula
\begin{equation}\label{SMeixner}
\begin{split}
S_{2N}^{\Meixner}=E^{\Meixner}
K_{2N}^{\Meixner}&+K_{2N}^{\Meixner}E^{\Meixner}\\
&-E^{\Meixner}
K_{2N}^{\Meixner}D^{\Meixner}K_{2N}^{\Meixner}E^{\Meixner},
\end{split}
\end{equation}
where
\begin{equation}\label{DMeixner}
D^{\Meixner}=D_+^{\Meixner}-D_-^{\Meixner}.
\end{equation}
It can be checked that the operators $E^{\Meixner}$ and
$D^{\Meixner}$ are mutually inverse.  The operator
$S_{2N}^{\Meixner}$ acts in the same space $\HC_{\Meixner}$.

Let $\Conf(\Zp)$ denote a collection of sets each of which is
itself a collection of $N$ pairwise distinct points from $\Zp$.
The Meixner symplectic ensemble is a probability measure on
$\Conf(\Zp)$, and its  $n$th correlation function,
$\varrho_{n,\Meixner}^{(N,\beta,\xi)}$, is defined by
$$
\varrho_{n,\Meixner}^{(N,\beta,\xi)}(\tx_1,\ldots
,\tx_n)=\Prob_{\Meixner}\left\{\mbox{the random configuration
contains}\; \tx_1,\ldots, \tx_n\right\},
$$
where $\tx_1$, $\tx_2$, $\ldots$, $\tx_n$ are pairwise distinct
points of $\Zp$.
\begin{prop}\label{TheoremMeixnerKN4}
The correlation function of the $N$-point  Meixner symplectic
ensemble can be written as a Pfaffian of $2\times 2$ matrix valued
kernel,
$$
\varrho_{n,\Meixner}^{(N,\beta,\xi)}(\tx_1,\ldots
,\tx_n)=\Pf\left[\mathbb{K}_{2N}^{\Meixner}(\tx_i,\tx_j)\right]_{i,j=1}^n.
$$
This kernel, $\mathbb{K}_{2N}^{\Meixner}(\tx,\ty)$, has the
following representation
$$
\mathbb{K}_{2N}^{\Meixner}(\tx,\ty)=\left[\begin{array}{cc}
  S_{2N}^{\Meixner}(\tx,\ty) & -S_{2N}^{\Meixner}D_-^{\Meixner}(\tx,\ty) \\
  -D_+^{\Meixner}S_{2N}^{\Meixner}(\tx,\ty) & D_+^{\Meixner}S_{2N}^{\Meixner}D_-^{\Meixner}(\tx,\ty) \\
\end{array}\right],
$$
where the matrix entries are the kernels of the operators
$S_{2N}^{\Meixner}$, $-S_{2N}^{\Meixner}D_-^{\Meixner}$,
$-D_+^{\Meixner}S_{2N}^{\Meixner}$, and
$D_+^{\Meixner}S_{2N}^{\Meixner}D_-^{\Meixner}$.
\end{prop}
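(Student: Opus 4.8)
The plan is to follow the discrete adaptation of Widom's method for symplectic ($\beta=4$) ensembles, carried out as in Borodin and Strahov \cite{BS}.

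First I would reduce the interaction factor in Definition~\ref{1DEF}. The key algebraic identity is that
\[
\prod\limits_{1\le i<j\le N}(\tx_i-\tx_j)^2(\tx_i-\tx_j-1)(\tx_i-\tx_j+1)
\]
is, up to sign, the Vandermonde determinant in the $2N$ numbers $\tx_1,\tx_1+1,\tx_2,\tx_2+1,\dots,\tx_N,\tx_N+1$. Using row operations to replace the monomial basis by the Meixner polynomials $\Mf_0(\cdot\,;\beta,\xi),\dots,\Mf_{2N-1}(\cdot\,;\beta,\xi)$, splitting the weight as $\prod_i W^{\Meixner}_{\beta,\xi}(\tx_i)=\prod_i\sqrt{W^{\Meixner}_{\beta,\xi}(\tx_i)}\,\sqrt{W^{\Meixner}_{\beta,\xi}(\tx_i)}$ between the two columns attached to each $\tx_i$, and invoking the elementary identity $(D_+^{\Meixner}\tMf_k)(\tx)=(-1)^k\,\|\Mf_k(\cdot\,;\beta,\xi)\|^{-1}\,\Mf_k(\tx+1;\beta,\xi)\,\sqrt{W^{\Meixner}_{\beta,\xi}(\tx)}$, I would bring the weight of a configuration $\tx_1<\dots<\tx_N$ into the form $\const\cdot\det\!\big[\,\tMf_{p-1}(\tx_j)\ \ (D_+^{\Meixner}\tMf_{p-1})(\tx_j)\,\big]$, a $2N\times 2N$ determinant in which, for each $j$, the two displayed entries occupy columns $2j-1$ and $2j$; the constant absorbs $Z_{N4}$ as well as the norms and leading coefficients of the $\Mf_k$.

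Next I would feed this determinantal expression into the discrete de Bruijn / Pfaffian summation formula for $\beta=4$ measures. Summing the $2N\times 2N$ determinant over the unobserved coordinates $\tx_{n+1},\dots,\tx_N$ produces $\Pf\big[\mathbb{K}(\tx_i,\tx_j)\big]_{i,j=1}^n$ with a $2\times 2$ matrix kernel built from the two families $\{\tMf_{p-1}\}$ and $\{D_+^{\Meixner}\tMf_{p-1}\}$ together with the inverse of the associated skew-symmetric moment matrix. Since the $\tMf_k$ are orthonormal in $l^2(\Zp)$, this moment matrix has a simple block structure whose inverse can be resummed in closed form: the $(1,1)$-entry collapses onto the scalar kernel $S_{2N}^{\Meixner}=E^{\Meixner}K_{2N}^{\Meixner}+K_{2N}^{\Meixner}E^{\Meixner}-E^{\Meixner}K_{2N}^{\Meixner}D^{\Meixner}K_{2N}^{\Meixner}E^{\Meixner}$ of (\ref{SMeixner}), with $D^{\Meixner}=D_+^{\Meixner}-D_-^{\Meixner}$ and $E^{\Meixner}$ its two-sided inverse on $\HC_{\Meixner}$, while the other three entries come out as $-S_{2N}^{\Meixner}D_-^{\Meixner}$, $-D_+^{\Meixner}S_{2N}^{\Meixner}$, and $D_+^{\Meixner}S_{2N}^{\Meixner}D_-^{\Meixner}$, i.e. precisely the matrix $\mathbb{K}_{2N}^{\Meixner}(\tx,\ty)$ in the statement.

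The hard part will not be the combinatorics but the accompanying analysis on the semi-infinite phase space. One must verify that the series defining $E^{\Meixner}f$ converges for every $f\in\HC_{\Meixner}$ and that $E^{\Meixner}$ and $D^{\Meixner}$ are genuinely mutual two-sided inverses on $\HC_{\Meixner}$; this is where the phase space being $\Zp$ rather than $\Z$ matters, and it is why the sum in (\ref{EMeixner}) terminates for odd $\tx$, so that no contribution is lost at the boundary $\tx=0$. One must also check that the termwise de Bruijn summation over the unobserved particles is legitimate and indeed collapses to the claimed $2\times 2$ Pfaffian; finiteness of $Z_{N4}$, which holds because the Meixner weight has all its moments, is used at this step. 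Both points are settled in Borodin and Strahov \cite{BS}, and the contour-integral representations of the entries of $\mathbb{K}_{2N}^{\Meixner}$ that will be needed later (Theorem~\ref{TheoremSN4}) are then extracted from this Pfaffian structure.
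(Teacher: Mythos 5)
Your proposal is correct and follows essentially the same route as the paper, which simply derives the Pfaffian representation from Theorems 2.4, 2.9 and Proposition 14.3 of Borodin and Strahov \cite{BS}; your sketch of the doubled-Vandermonde identity, the de Bruijn summation, and the resummation of the inverse moment matrix into $S_{2N}^{\Meixner}=E^{\Meixner}K_{2N}^{\Meixner}+K_{2N}^{\Meixner}E^{\Meixner}-E^{\Meixner}K_{2N}^{\Meixner}D^{\Meixner}K_{2N}^{\Meixner}E^{\Meixner}$ is precisely the content of those cited results. You correctly identify the convergence of $E^{\Meixner}f$ and the mutual invertibility of $E^{\Meixner}$ and $D^{\Meixner}$ as the points needing care, and these are indeed settled in \cite{BS}.
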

\begin{proof} The representation for $\mathbb{K}_{2N}^{\Meixner}$ follows from Theorem 2.4,
Theorem 2.9 and Proposition 14.3 in Borodin and Strahov \cite{BS}.
\end{proof}
\begin{lem}\label{LemmaRelation}
For $N=1, 2,\ldots $ let $z=2N$ and $z'=2N+\beta-2$ with
$\beta>0$. Assume that $x_1,\ldots ,x_n$ lie in the subset
$\Zp-2N+1/2\subset\Z'$, so that the points $\tx_i=x_i+2N-1/2$ are
in $\Zp$. Then
$$
\varrho_{n}^{(z,z',\xi,\theta=2)}(x_1,\ldots
,x_n)=\Pf\left[\mathbb{K}_{2N}^{\Meixner}\left(x_i+2N-\frac{1}{2},
x_j+2N-\frac{1}{2}\right)\right]_{i,j=1}^n.
$$
\end{lem}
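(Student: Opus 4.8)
The plan is to reduce the statement to a purely combinatorial identification of the two point processes, after which it follows from Proposition~\ref{PROPOSITIONI} and Proposition~\ref{TheoremMeixnerKN4}; unlike the other results of the paper, no analysis is involved. First I would record that, for $z=2N$ and $\theta=2$, the measure $M_{z,z',\xi,\theta=2}$ is supported on $\Y(N)$: indeed $(2N)_{\lambda,2}=\prod_{i=1}^{l(\lambda)}\big(2N-2(i-1)\big)_{\lambda_i}$, and the factor with $i=N+1$ is $(0)_{\lambda_{N+1}}$, which vanishes whenever $\lambda_{N+1}\geq1$, so $M_{2N,2N+\beta-2,\xi,\theta=2}(\lambda)=0$ as soon as $l(\lambda)>N$. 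Hence we may regard this measure as a probability measure on $\Y(N)$, and by Proposition~\ref{PROPOSITIONI} its image under the bijection $\lambda\longleftrightarrow(\tx_1<\dots<\tx_N)$, with $\tx_{N-i+1}=\lambda_i-2i+2N$ for $i=1,\dots,N$, is exactly the Meixner symplectic ensemble $\Prob^{\Meixner}$ with parameters $\beta,\xi$.

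Next I would match the two configurations attached to $\lambda$. The set $\DC_2(\lambda)=\{\lambda_i-2i+\tfrac12:i\geq1\}$ and the configuration of Proposition~\ref{PROPOSITIONI} differ only by the global shift $x\mapsto x+2N-\tfrac12$, since $(\lambda_i-2i+\tfrac12)+(2N-\tfrac12)=\lambda_i-2i+2N$. For $1\leq i\leq N$ these values are nonnegative, strictly decreasing, and form precisely the Meixner configuration $\{\tx_1,\dots,\tx_N\}\subset\Zp$ of $\lambda$; for $i>N$, where $\lambda_i=0$, they equal $2N-2i$, i.e.\ the fixed set $\{-2,-4,-6,\dots\}$, disjoint from $\Zp$. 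Thus, for $\lambda\in\Y(N)$, $\big(\DC_2(\lambda)+2N-\tfrac12\big)\cap\Zp$ equals the Meixner configuration of $\lambda$, while the rest of $\DC_2(\lambda)+2N-\tfrac12$ sits in the negative even integers. Consequently, if $x_1,\dots,x_n\in\Zp-2N+\tfrac12$, so that $\tx_i:=x_i+2N-\tfrac12\in\Zp$, then for every $\lambda\in\Y(N)$ the inclusion $X=\{x_1,\dots,x_n\}\subset\DC_2(\lambda)$ holds if and only if $\{\tx_1,\dots,\tx_n\}$ lies in the Meixner configuration of $\lambda$ --- being nonnegative, the $\tx_i$ can never hit the frozen part. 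Since the two measures agree under the bijection, the measures of these matching events agree, so
\[
\varrho_n^{(z,z',\xi,\theta=2)}(x_1,\dots,x_n)
=\Prob_{\Meixner}\{\text{the random configuration contains }\tx_1,\dots,\tx_n\}
=\varrho_{n,\Meixner}^{(N,\beta,\xi)}(\tx_1,\dots,\tx_n).
\]

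Finally I would invoke Proposition~\ref{TheoremMeixnerKN4} to rewrite the right-hand side as $\Pf\big[\mathbb{K}_{2N}^{\Meixner}(\tx_i,\tx_j)\big]_{i,j=1}^n$ and substitute $\tx_i=x_i+2N-\tfrac12$, which is exactly the claimed formula. The only point requiring genuine care --- which I would spell out --- is the bookkeeping of the second step: that the measure is really supported on $\Y(N)$, that under the shift the frozen modes of $\DC_2(\lambda)$ leave $\Zp$ while the active modes fill it, and that the ordering convention $\tx_1<\dots<\tx_N$ is compatible with the row-indexed formula $\tx_{N-i+1}=\lambda_i-2i+2N$; everything else is a direct citation.
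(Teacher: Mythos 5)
Your proof is correct and follows essentially the same route as the paper: identify the pushforward of $M_{2N,2N+\beta-2,\xi,\theta=2}$ with the Meixner symplectic ensemble via Proposition~\ref{PROPOSITIONI}, observe that the shift $x\mapsto x+2N-\tfrac12$ matches $\DC_2(\lambda)$ with the Meixner configuration, and invoke Proposition~\ref{TheoremMeixnerKN4}. You are in fact more careful than the paper on two points it leaves implicit --- that $(0)_{\lambda_{N+1}}=0$ forces the measure to be supported on $\Y(N)$, and that the frozen tail $\{-2,-4,\dots\}$ of the shifted configuration cannot meet $\Zp$ --- but this is a refinement of the same argument, not a different one.
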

\begin{proof}If $z=2N$ and $z'=2N+\beta-2$, then Proposition \ref{PROPOSITIONI} implies that
$M_{z,z',\xi,\theta=2}$ defines the Meixner symplectic ensemble on
the point configurations $\tilde{X}(\lambda)$ defined by
$$
\tx_{N-i+1}=\lambda_i-2i+2N,\;\; i=1,\ldots, N.
$$
To obtain $\varrho_{n}^{z,z',\xi,\theta=2}(x_1,\ldots ,x_n)$ we
need to consider random configurations $\DC_2(\lambda)$ defined by
$x_i=\lambda_i-2i+1/2$, where $i=1,\ldots, N$. On the other hand,
there is a bijective correspondence between the set of all
configurations $\tilde{X}(\lambda)$, and the set of all
configurations $\DC_2(\lambda)$ defined by
\begin{equation}\label{correspondence}
\tx_{N-i+1}=x_i-\frac{1}{2}+2N,\;\; i=1,\ldots, N.
\end{equation}
Note that two configurations $\tilde{X}(\lambda)$ and
$\DC_2(\lambda)$ related by (\ref{correspondence}) have the same
probability. The statement of the Lemma   immediately follows from
this observation, and from Proposition \ref{TheoremMeixnerKN4} .
\end{proof}
\section{The contour integral representation for $S_{2N}^{\Meixner}(\tx,\ty)$}

The aim of this Section is to obtain an explicit formula for the
function $S_{2N}^{\Meixner}(\tx,\ty)$ which completely determines
the correlation function for the Meixner symplectic ensemble via
Proposition \ref{TheoremMeixnerKN4}. Namely, we provide a contour
integral representation for $S_{2N}^{\Meixner}(\tx,\ty)$.
\begin{thm}\label{TheoremSN4}
The function $S_{2N}^{\Meixner}(\tx,\ty)$ (which is the kernel of
the operator $S_{2N}^{\Meixner}$) admits the following contour
integral representation \\
a) If both $\tx$ and $\ty$ are even, then
\begin{equation}
\begin{split}
&S_{2N}^{\Meixner}(\tx,\ty)=-\frac{\sqrt{\xi}}{(2\pi
i)^2}\sqrt{\frac{\Gamma(\tx+1)\Gamma(\ty+\beta)}{\Gamma(\tx+\beta)\Gamma(\ty+1)}}\\
&\biggl[\oint\limits_{\{w_1\}}\oint\limits_{\{w_2\}}\frac{(1-\sqrt{\xi}w_1)^{-2N-\beta+1}(1-\frac{\sqrt{\xi}}{w_1})^{2N}
(1-\sqrt{\xi}w_2)^{-2N}(1-\frac{\sqrt{\xi}}{w_2})^{2N+\beta-1}}{w_1w_2-1}\\
&F\left(\frac{\tx+2}{2},1;\frac{\tx+\beta+1}{2};\frac{1}{w_1^2}\right)
\frac{dw_1}{w_1^{\tx-2N+2}}\frac{dw_2}{w_2^{\ty-2N+1}}\\
&-\oint\limits_{\{w_1\}}\oint\limits_{\{w_2\}}(1-\sqrt{\xi}w_1)^{-2N-\beta+1}(1-\frac{\sqrt{\xi}}{w_1})^{2N}
(1-\sqrt{\xi}w_2)^{-2N}(1-\frac{\sqrt{\xi}}{w_2})^{2N+\beta-1}\\
&F\left(\frac{\tx+2}{2},1;\frac{\tx+\beta+1}{2};\frac{1}{w_1^2}\right)
\left[F\left(\frac{\ty+\beta}{2},1;\frac{\ty+1}{2};\frac{1}{w_2^2}\right)-1\right]
\frac{dw_1}{w_1^{\tx-2N+2}}\frac{dw_2}{w_2^{\ty-2N+1}}\biggr].
\end{split}
\nonumber
\end{equation}
In the formula just written  $\{w_1\}$ and $\{w_2\}$ are arbitrary
simple contours satisfying the following conditions
\begin{itemize}
    \item  both contours go around $0$ in positive direction;
    \item  the point $\xi^{1/2}$ is in the interior of each of the
    contours while the point $\xi^{-1/2}$ lies outside the
    contours;
    \item the contour $\{w_1^{-1}\}$ is contained in the interior
    of the contour $\{w_2\}$ (equivalently, $\{w_2^{-1}\}$ is
    contained in the interior of $\{w_1\}$);
    \item both contours $\{w_1\}, \{w_2\}$ lie in the domain $|w|>1$.
\end{itemize}
b) In the case when both $\tx$ and $\ty$ are odd positive integers
we have
\begin{equation}
\begin{split}
&S_{2N}^{\Meixner}(\tx,\ty)=\frac{\sqrt{\xi}}{(2\pi
i)^2}\sqrt{\frac{\Gamma(\tx+1)\Gamma(\ty+\beta)}{\Gamma(\tx+\beta)\Gamma(\ty+1)}}\\
&\biggl[\oint\limits_{\{w_1\}}\oint\limits_{\{w_2\}}\frac{(1-\sqrt{\xi}w_1)^{-2N-\beta+1}(1-\frac{\sqrt{\xi}}{w_1})^{2N}
(1-\sqrt{\xi}w_2)^{-2N}(1-\frac{\sqrt{\xi}}{w_2})^{2N+\beta-1}}{w_1w_2-1}\\
&\left[F\left(-\frac{\beta+\tx-1}{2},
1;-\frac{\tx}{2};\frac{1}{w_1^2}\right)-1\right]
\frac{dw_1}{w_1^{\tx-2N+2}}\frac{dw_2}{w_2^{\ty-2N+1}}\\
&+\oint\limits_{\{w_1\}}\oint\limits_{\{w_2\}}(1-\sqrt{\xi}w_1)^{-2N-\beta+1}(1-\frac{\sqrt{\xi}}{w_1})^{2N}
(1-\sqrt{\xi}w_2)^{-2N}(1-\frac{\sqrt{\xi}}{w_2})^{2N+\beta-1}\\
& \left[F\left(-\frac{\beta+\tx-1}{2},
1;-\frac{\tx}{2};\frac{1}{w_1^2}\right)-1\right]F\left(\frac{1-\ty}{2},1;\frac{2-\beta-\ty}{2};w_2^2\right)
\frac{dw_1}{w_1^{\tx-2N+2}}\frac{dw_2}{w_2^{\ty-2N+21}}\biggr],
\end{split}
\nonumber
\end{equation}
where the contours $\{w_1\}$, $\{w_2\}$  are arbitrary simple
contours satisfying the first three conditions of a) that lie in
the domain $|w|<1$.

c) If $\tx$ is even positive integer, and $\ty$ is an  odd
positive integer, then
\begin{equation}
\begin{split}
&S_{2N}^{\Meixner}(\tx,\ty)=-\frac{\sqrt{\xi}}{(2\pi
i)^2}\sqrt{\frac{\Gamma(\tx+1)\Gamma(\ty+\beta)}{\Gamma(\tx+\beta)\Gamma(\ty+1)}}\\
&\biggl[\oint\limits_{\{w_1\}}\oint\limits_{\{w_2\}}\frac{(1-\sqrt{\xi}w_1)^{-2N-\beta+1}(1-\frac{\sqrt{\xi}}{w_1})^{2N}
(1-\sqrt{\xi}w_2)^{-2N}(1-\frac{\sqrt{\xi}}{w_2})^{2N+\beta-1}}{w_1w_2-1}\\
&F\left(\frac{\tx+2}{2},1;\frac{\tx+\beta+1}{2};\frac{1}{w_1^2}\right)
\frac{dw_1}{w_1^{\tx-2N+2}}\frac{dw_2}{w_2^{\ty-2N+1}}\\
&+\oint\limits_{\{w_1\}}\oint\limits_{\{w_2\}}(1-\sqrt{\xi}w_1)^{-2N-\beta+1}(1-\frac{\sqrt{\xi}}{w_1})^{2N}
(1-\sqrt{\xi}w_2)^{-2N}(1-\frac{\sqrt{\xi}}{w_2})^{2N+\beta-1}\\
&F\left(\frac{\tx+2}{2},1;\frac{\tx+\beta+1}{2};\frac{1}{w_1^2}\right)
F\left(\frac{1-\ty}{2},1;\frac{2-\beta-\ty}{2}; w_2^2\right)
\frac{dw_1}{w_1^{\tx-2N+2}}\frac{dw_2}{w_2^{\ty-2N+1}}\biggr],
\end{split}
\nonumber
\end{equation}
where the contours $\{w_1\}$, $\{w_2\}$  are arbitrary simple
contours satisfying the first three conditions of a). Moreover,
the first contour, $\{w_1\}$, lies in the domain  $|w|>1$, and the
second contour, $\{w_2\}$, lies in the domain $|w|<1$.

d) Finally, if $\tx$ is an odd integer, and $\ty$ is even integer,
then
\begin{equation}
\begin{split}
&S_{2N}^{\Meixner}(\tx,\ty)=\frac{\sqrt{\xi}}{(2\pi
i)^2}\sqrt{\frac{\Gamma(\tx+1)\Gamma(\ty+\beta)}{\Gamma(\tx+\beta)\Gamma(\ty+1)}}\\
&\biggl[\oint\limits_{\{w_1\}}\oint\limits_{\{w_2\}}\frac{(1-\sqrt{\xi}w_1)^{-2N-\beta+1}(1-\frac{\sqrt{\xi}}{w_1})^{2N}
(1-\sqrt{\xi}w_2)^{-2N}(1-\frac{\sqrt{\xi}}{w_2})^{2N+\beta-1}}{w_1w_2-1}\\
&\left[F\left(-\frac{\tx+\beta-1}{2},1;-\frac{\tx}{2};w_1^2\right)-1\right]
\frac{dw_1}{w_1^{\tx-2N+2}}\frac{dw_2}{w_2^{\ty-2N+1}}\\
&-\oint\limits_{\{w_1\}}\oint\limits_{\{w_2\}}(1-\sqrt{\xi}w_1)^{-2N-\beta+1}(1-\frac{\sqrt{\xi}}{w_1})^{2N}
(1-\sqrt{\xi}w_2)^{-2N}(1-\frac{\sqrt{\xi}}{w_2})^{2N+\beta-1}\\
&\left[F\left(-\frac{\tx+\beta-1}{2},1;-\frac{\tx}{2};w_1^2\right)-1\right]
\left[F\left(\frac{\beta+\ty}{2},1;\frac{\ty+1}{2};\frac{1}{w_2^2}\right)-1\right]
\frac{dw_1}{w_1^{\tx-2N+2}}\frac{dw_2}{w_2^{\ty-2N+1}}\biggr],
\end{split}
\nonumber
\end{equation}
where the contours $\{w_1\}$, $\{w_2\}$  are arbitrary simple
contours satisfying the first three conditions of a), which both
lie in the domain $|w|<1$.
\end{thm}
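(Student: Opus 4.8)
The plan is to start from the operator identity (\ref{SMeixner}) defining $S_{2N}^{\Meixner}$ and to convert each of the three summands into a double contour integral. The starting material is the known contour integral representation for the Christoffel--Darboux kernel $K_{2N}^{\Meixner}(\tx,\ty)$ of the Meixner orthogonal polynomial ensemble; this is precisely the $\theta=1$ object whose integral form is recorded in Borodin and Olshanski \cite{BO} and which, after the substitution $z=2N$, $z'=2N+\beta-2$, takes the shape of a double contour integral with the weight factors $(1-\sqrt{\xi}w_1)^{-2N-\beta+1}(1-\tfrac{\sqrt{\xi}}{w_1})^{2N}$ and $(1-\sqrt{\xi}w_2)^{-2N}(1-\tfrac{\sqrt{\xi}}{w_2})^{2N+\beta-1}$, divided by $w_1w_2-1$, and with monomials $w_1^{-\tx+2N-\tfrac12}$, $w_2^{-\ty+2N-\tfrac12}$. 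First I would record this representation and the contour conditions (the nested-contour condition on $\{w_1^{-1}\}\subset\{w_2\}$ is exactly what makes $\tfrac{1}{w_1w_2-1}$ expandable).

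Next I would compute the action of $E^{\Meixner}$ on a function given by such a contour integral. By (\ref{EMeixner}), $E^{\Meixner}$ is, up to the $\tx$-parity split and the Pochhammer-$k$-symbol prefactors, a geometric-type sum in steps of $2$ applied to the argument: $f(\tx)\mapsto \mp\sqrt{\xi}\sum_{l}c_l(\tx)\,f(\tx\pm 2l\pm 1)$. When $f$ is a monomial $w^{-\tx+\const}$ the sum over $l$ is a Gauss hypergeometric series in $w^{-2}$ (for even $\tx$, after pushing the contour to $|w|>1$) or in $w^{2}$ (for odd $\tx$, where the sum is finite and one may put the contour in $|w|<1$). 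The prefactors $\sqrt{\tfrac{(\beta+\tx)_{l+1,2}(2+\tx)_{l,2}}{(1+\tx)_{l+1,2}(1+\beta+\tx)_{l,2}}}$, together with the square-root weight $\sqrt{W^{\Meixner}_{\beta,\xi}}$ that was absorbed into $\tMf$, collapse into the $\Gamma$-ratio prefactors $\sqrt{\tfrac{\Gamma(\tx+1)\Gamma(\ty+\beta)}{\Gamma(\tx+\beta)\Gamma(\ty+1)}}$ and into the hypergeometric parameters $\bigl(\tfrac{\tx+2}{2},1;\tfrac{\tx+\beta+1}{2}\bigr)$ or $\bigl(-\tfrac{\beta+\tx-1}{2},1;-\tfrac{\tx}{2}\bigr)$ seen in the statement; this identification of $\sum_l c_l(\tx)/w^{2l}$ with the stated ${}_2F_1$ is a direct Pochhammer bookkeeping. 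Doing this on the $w_1$-variable gives $E^{\Meixner}K_{2N}^{\Meixner}$, and on the $w_2$-variable gives $K_{2N}^{\Meixner}E^{\Meixner}$; these are the origin of the ``$\mathcal{P}$-factor'' and ``$\mathcal{Q}$-factor'' respectively. The ``$-1$''s appearing in half the formulas are exactly the difference between the full hypergeometric series and its $l=0$ term, reflecting which of $E^{\Meixner}K_{2N}^{\Meixner}$ versus $K_{2N}^{\Meixner}E^{\Meixner}$ starts its sum at $l=0$ or $l=1$.

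The third term $-E^{\Meixner}K_{2N}^{\Meixner}D^{\Meixner}K_{2N}^{\Meixner}E^{\Meixner}$, with $D^{\Meixner}=D_+^{\Meixner}-D_-^{\Meixner}$, is where the main obstacle lies. Here I would use that $K_{2N}^{\Meixner}$ is a projection and that $E^{\Meixner}$ and $D^{\Meixner}$ are mutually inverse, so that the whole sum telescopes: the key algebraic fact is $E^{\Meixner}K_{2N}^{\Meixner}+K_{2N}^{\Meixner}E^{\Meixner}-E^{\Meixner}K_{2N}^{\Meixner}D^{\Meixner}K_{2N}^{\Meixner}E^{\Meixner}$ can be rewritten as $E^{\Meixner}K_{2N}^{\Meixner}(\mathrm{something})+(\mathrm{something})K_{2N}^{\Meixner}E^{\Meixner}$ where the ``something'' involves $D^{\Meixner}$ acting on the other side of the projection and produces only boundary terms at level $2N$. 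Equivalently, one inserts $1=D^{\Meixner}E^{\Meixner}$ and uses $[K_{2N}^{\Meixner},D^{\Meixner}]$ supported near index $2N$; at the integral level the resolvent-type denominator $\tfrac{1}{w_1w_2-1}$ is preserved and one simply sees the product of the $\mathcal{P}$-factor in $w_1$ and the $\mathcal{Q}$-factor in $w_2$, which is the second double-integral in each of a)--d). Getting the signs and the precise split of ``full $F$'' versus ``$F-1$'' consistent across the four parity cases, and checking that every contour deformation (pushing $\{w_i\}$ into $|w|>1$ or $|w|<1$ to make the hypergeometric series converge) stays compatible with the nested-contour requirement from Proposition \ref{EquationRepresentationKzz'}, is the delicate part; I would verify it in the even--even case in full detail and then obtain the remaining three cases by the same computation with the odd-$\tx$ branch of (\ref{EMeixner}), noting that the finiteness of the odd sum is what allows the contour to be taken inside the unit circle there.
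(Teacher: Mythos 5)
Your skeleton coincides with the paper's up to the hardest step: both start from (\ref{SMeixner}), import the contour integral representation of $K_{2N}^{\Meixner}$ from Proposition \ref{EquationRepresentationKzz'} via $K_{2N}^{\Meixner}(\tx,\ty)=\underline{K}_{z=2N,z'=2N+\beta-1,\xi}(\tx-2N+1,\ty-2N+1)$, and turn the parity-split sums defining $E^{\Meixner}$ into the Gauss hypergeometric factors by termwise integration after pushing the contour into $|w|>1$ or $|w|<1$. The genuine divergence is in the term $E^{\Meixner}K_{2N}^{\Meixner}D^{\Meixner}K_{2N}^{\Meixner}E^{\Meixner}$. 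The paper computes $K_{2N}^{\Meixner}D_{\pm}^{\Meixner}K_{2N}^{\Meixner}$ head-on: it writes the composition as a four-fold contour integral containing $\sum_{m\geq 0}(w_2w_3)^{-m}$, notes the integral is unchanged when the sum is extended over all of $\Z$, splits the bilateral sum into two geometric series on different contour configurations, and collapses everything to a double integral (formula (\ref{KDK})) by two residue computations before applying $E^{\Meixner}$ on both sides and adding the pieces. Your Widom-style operator route is a legitimate alternative and, if anything, explains the shape of the answer better: from $K^2=K$ and $E^{\Meixner}D^{\Meixner}=I$ one gets $K_{2N}^{\Meixner}D^{\Meixner}K_{2N}^{\Meixner}=D^{\Meixner}K_{2N}^{\Meixner}+[K_{2N}^{\Meixner},D^{\Meixner}]K_{2N}^{\Meixner}$, hence $S_{2N}^{\Meixner}=E^{\Meixner}K_{2N}^{\Meixner}-E^{\Meixner}[K_{2N}^{\Meixner},D^{\Meixner}]K_{2N}^{\Meixner}E^{\Meixner}$; the $K_{2N}^{\Meixner}E^{\Meixner}$ term cancels entirely and the correction is rank one, which is exactly the structure ``resolvent-type first integral plus factorized second integral'' appearing in a)--d).

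Two points in your plan need repair before it becomes a proof. First, your claim that in the third term ``the resolvent-type denominator $1/(w_1w_2-1)$ is preserved and one simply sees the product of the $\mathcal{P}$-factor in $w_1$ and the $\mathcal{Q}$-factor in $w_2$, which is the second double-integral'' contradicts the statement you are proving: the second double integral in each of a)--d) carries no factor $1/(w_1w_2-1)$ and factorizes into a product of single contour integrals, consistent with a rank-one operator but not with a resolvent kernel. Second, the low-rank property of $[K_{2N}^{\Meixner},D^{\Meixner}]$ is asserted rather than verified; it requires showing that $D^{\Meixner}=D_+^{\Meixner}-D_-^{\Meixner}$ acts tridiagonally on the basis $\{\tMf_k\}$ (so that $[K_{2N}^{\Meixner},D^{\Meixner}]K_{2N}^{\Meixner}$ is supported on $\tMf_{2N-1}$ alone), and one must then compute the edge coefficient explicitly to recover the precise $\mathcal{Q}$-factor with its ``$F-1$'' normalization. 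Either you supply these two computations, or you fall back on the paper's direct four-contour residue calculation of $K_{2N}^{\Meixner}D^{\Meixner}K_{2N}^{\Meixner}$; as written, the plan does not yet produce the stated formulae.
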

\begin{proof}
We start from equation (\ref{SMeixner}). The operators
$E^{\Meixner}, D^{\Meixner}$ are defined explicitly by equations
(\ref{DPLUS(x,y)}), (\ref{DMINUS(x,y)}), (\ref{EMeixner}), and
(\ref{DMeixner}). The contour integral representation for the
kernel $K_{2N}^{\Meixner}$ can be obtained immediately from
Proposition \ref{EquationRepresentationKzz'}. Indeed, Theorem 3.2
and Proposition 2.8 in Borodin and Olshanski \cite{BO} imply the
relation
\begin{equation}\label{Equation4.11}
K_{2N}^{\Meixner}(\tx,\ty)=\underline{K}_{z=2N,z'=2N+\beta-1,\xi}(\tx-2N+1,\ty-2N+1).
\end{equation}
Once the contour integral representation of
$K_{2N}^{\Meixner}(\tx,\ty)$ is given it is straightforward to
derive the contour integral representations for the kernels of the
operators $E^{\Meixner}K_{2N}^{\Meixner}$ and
$K_{2N}^{\Meixner}E^{\Meixner}$. Next we need to derive the
contour integral representation for the kernel of the operator
$D^{\Meixner}K_{2N}^{\Meixner}D^{\Meixner}$ (this is the most
nontrivial part of these calculations). We have
$$
K_{2N}^{\Meixner}D^{\Meixner}K_{2N}^{\Meixner}=K_{2N}^{\Meixner}D^{\Meixner}_+K_{2N}^{\Meixner}
-K_{2N}^{\Meixner}D^{\Meixner}_-K_{2N}^{\Meixner}.
$$
Let us first derive the contour integral representation for
$K_{2N}^{\Meixner}D_+K_{2N}^{\Meixner}(\tx,\ty)$. Taking into
account the definition of $D_{+}^{\Meixner}$ we can write
$$
K_{2N}^{\Meixner}D_+^{\Meixner}K_{2N}^{\Meixner}(\tx,\ty)
=\frac{1}{\sqrt{\xi}}\sum\limits_{m=0}^{+\infty}K_{2N}^{\Meixner}(\tx,m)\sqrt{\frac{m+1}{m+\beta}}K_{2N}^{\Meixner}(m+1,\ty).
$$
Using the contour integral representation for
$K_{2N}^{\Meixner}(\tx,m)$ and $K_{2N}^{\Meixner}(m+1,\ty)$ we
obtain
\begin{equation}
\begin{split}
&K_{2N}^{\Meixner}D_+^{\Meixner}K_{2N}^{\Meixner}(\tx,\ty)=
\frac{1}{(2\pi
i)^4\sqrt{\xi}}\sqrt{\frac{\Gamma(\tx+\beta)\Gamma(\ty+1)}{\Gamma(\tx+1)\Gamma(\ty+\beta)}}\\
&\times \oint\limits_{\{w_1\}}\oint\limits_{\{w_2\}}
\oint\limits_{\{w_3\}}\oint\limits_{\{w_4\}}
\Phi(w_1,w_2)\Phi(w_3,w_4)\left(\sum\limits_{m=0}^{+\infty}\frac{1}{(w_2w_3)^m}\right)
\frac{dw_1dw_2dw_3dw_4}{w_1^{\tx-2N+1}w_2^{-2N+1}w_3^{-2N+2}w_4^{\ty-2N+1}},
\end{split}
\nonumber
\end{equation}where
\begin{equation}\label{EquationPhi}
\Phi(w_1,w_2)=\frac{(1-\sqrt{\xi}w_1)^{-2N}(1-\frac{\sqrt{\xi}}{w_1})^{2N+\beta-1}
(1-\sqrt{\xi}w_2)^{-2N-\beta+1}(1-\frac{\sqrt{\xi}}{w_2})^{2N}}{w_1w_2-1}.
\end{equation}
We observe that the integral above remains unchanged if we replace
the sum in the integrand by
$$
\sum\limits_{m=-\infty}^{+\infty}\frac{1}{(w_2w_3)^m}.
$$
We split this sum into two parts,
$$
\sum\limits_{m=-\infty}^{+\infty}\frac{1}{(w_2w_3)^m}=\sum\limits_{m=0}^{+\infty}\frac{1}{(w_2w_3)^m}
+\sum\limits_{m=-\infty}^{-1}\frac{1}{(w_2w_3)^m}.
$$
If $|w_2w_3|>1$, then the first sum in the righthand side of the
equation converges, and it equals
$$
\sum\limits_{m=0}^{+\infty}\frac{1}{(w_2w_3)^m}=\frac{w_2}{w_2-w_3^{-1}}.
$$
If $|w_2w_3|<1$, then the second sum can be written as
$$
\sum\limits_{m=-\infty}^{-1}\frac{1}{(w_2w_3)^m}=\frac{w_2}{w_3^{-1}-w_2}.
$$
This gives us
\begin{equation}\label{EquationKND+KN}
\begin{split}
&K_{2N}^{\Meixner}D_+^{\Meixner}K_{2N}^{\Meixner}(\tx,\ty)=
\frac{1}{(2\pi
i)^4\sqrt{\xi}}\sqrt{\frac{\Gamma(\tx+\beta)\Gamma(\ty+1)}{\Gamma(\tx+1)\Gamma(\ty+\beta)}}\\
&\times \biggl[\underset{|w_2|>|w_3^{-1}|}{\oint\oint\oint\oint}
\Phi(w_1,w_2)\Phi(w_3,w_4)\frac{w_2}{w_2-w_3^{-1}}
\frac{dw_1dw_2dw_3dw_4}{w_1^{\tx-2N+1}w_2^{-2N+1}w_3^{-2N+2}w_4^{\ty-2N+1}}\\
&+\underset{|w_2|<|w_3^{-1}|}{\oint\oint\oint\oint}
\Phi(w_1,w_2)\Phi(w_3,w_4)\frac{w_2}{w_3^{-1}-w_2}
\frac{dw_1dw_2dw_3dw_4}{w_1^{\tx-2N+1}
w_2^{-2N+1}w_3^{-2N+2}w_4^{\ty-2N+1}}\biggr].
\end{split}
\end{equation}
Here we take as the contours concentric circles  $\{w_1\}$,
$\{w_2\}$, $\{w_3\}$, $\{w_4\}$. The contours  $\{w_1\}$ and
$\{w_2\}$ satisfy the same conditions as in the statement of
Proposition \ref{EquationRepresentationKzz'}, in particular we can
agree that $\frac{1}{|w_2|}<|w_1|$. We also agree that the
contours $\{w_3\}$ and $\{w_4\}$ satisfy the conditions of
Proposition \ref{EquationRepresentationKzz'}, and that
$\frac{1}{|w_3|}<|w_4|$. In addition, we require that
$|w_3|>\frac{1}{|w_2|}$ in first integral which corresponds to the
sum over $\Zp$.  In the second integral (corresponding to the sum
over $\Z_{<0}$) we chose contours in such a way that
$|w_3|<\frac{1}{|w_2|}$.

We transform the first integral: keeping the contours $\{w_1\}$,
$\{w_3\}$, $\{w_4\}$ unchanged we move $\{w_2\}$ inside the circle
of the radius $\frac{1}{|w_3|}$. Then we obtain an integral which
cancels the second integral in equation (\ref{EquationKND+KN}),
plus an integral arising from the residue of the function $
w_2\longrightarrow (w_2-w_3^{-1})^{-1}$.  This gives us
\begin{equation}
\begin{split}
&K_{2N}^{\Meixner}D_+^{\Meixner}K_{2N}^{\Meixner}(\tx,\ty)\\
&=\frac{1}{(2\pi
i)^3\sqrt{\xi}}\sqrt{\frac{\Gamma(\tx+\beta)\Gamma(\ty+1)}{\Gamma(\tx+1)\Gamma(\ty+\beta)}}
\oint\limits_{\{w_1\}}\oint\limits_{\{w_3\}}\oint\limits_{\{w_4\}}
\frac{\Phi(w_1,w_3^{-1})\Phi(w_3,w_4)}{w_3}
\frac{dw_1dw_3dw_4}{w_1^{\tx-2N+1}w_3w_4^{\ty-2N+1}}.
\end{split}
\nonumber
\end{equation}
We find
\begin{equation}
\begin{split}
\Phi(w_1,w_3^{-1})\Phi(w_3,w_4)=\frac{(1-\sqrt{\xi}w_1)^{-2N}(1-\frac{\sqrt{\xi}}{w_1})^{2N+\beta-1}
(1-\sqrt{\xi}w_4)^{-2N-\beta+1}(1-\frac{\sqrt{\xi}}{w_4})^{2N}}{(w_1w_3^{-1}-1)(w_3w_4-1)}.
\end{split}
\nonumber
\end{equation}
Now we integrate over $\{w_3\}$. Note that the contour $\{w_3\}$
can always be chosen inside the circle $\{w_1\}$. Therefore the
integration over $\{w_3\}$ reduces to the computation of the
residue of the function $ w_3\longrightarrow (w_1-w_3)^{-1} $ in
the situation when $\{w_3\}$ lies inside $\{w_1\}$. The result is
\begin{equation}
\begin{split}
&K_{2N}^{\Meixner}D_+^{\Meixner}K_{2N}^{\Meixner}(\tx,\ty)=\frac{1}{(2\pi
i)^2\sqrt{\xi}}\sqrt{\frac{\Gamma(\tx+\beta)\Gamma(\ty+1)}{\Gamma(\tx+1)\Gamma(\ty+\beta)}}\\
&\times\oint\limits_{\{w_1\}}\oint\limits_{\{w_2\}}
\frac{(1-\sqrt{\xi}w_1)^{-2N}(1-\frac{\sqrt{\xi}}{w_1})^{2N+\beta-1}
(1-\sqrt{\xi}w_4)^{-2N-\beta+1}(1-\frac{\sqrt{\xi}}{w_4})^{2N}}{w_1w_4-1}
\frac{dw_1}{w_1^{\tx-2N+2}}\frac{dw_4}{w_4^{\ty-2N+1}}.
\end{split}
\nonumber
\end{equation}
Note that as soon as $\{w_3\}$ is chosen to be inside $\{w_1\}$,
and $|w_3|>|w_4|^{-1}$, we have $|w_1|>|w_4|^{-1}$ in the integral
above. Thus $\{w_4^{-1}\}$ is contained in the interior of
$\{w_1\}$. To obtain formula for
$K_{2N}^{\Meixner}D_+^{\Meixner}K_{2N}^{\Meixner}(\tx,\ty)$ we can
use the relation
$$
K_{2N}^{\Meixner}D_+^{\Meixner}K_{2N}^{\Meixner}(\tx,\ty)=
K_{2N}^{\Meixner}D_-^{\Meixner}K_{2N}^{\Meixner}(\ty,\tx)
$$
which follows immediately from the definitions of the involved
operators.  Thus we arrive to the formula
\begin{equation}\label{KDK}
\begin{split}
&K_{2N}^{\Meixner}D^{\Meixner}K_{2N}^{\Meixner}(\tx,\ty)=\frac{1}{(2\pi
i)^2\sqrt{\xi}}\sqrt{\frac{\Gamma(\tx+\beta)\Gamma(\ty+1)}{\Gamma(\tx+1)\Gamma(\ty+\beta)}}\\
&\times\oint\limits_{\{w_1\}}\oint\limits_{\{w_2\}}
\frac{(1-\sqrt{\xi}w_1)^{-2N}(1-\frac{\sqrt{\xi}}{w_1})^{2N+\beta-1}
(1-\sqrt{\xi}w_2)^{-2N-\beta+1}(1-\frac{\sqrt{\xi}}{w_2})^{2N}}{w_1w_2-1}
\frac{dw_1}{w_1^{\tx-2N+2}}\frac{dw_2}{w_2^{\ty-2N+1}}\\
&-\frac{1}{(2\pi
i)^2\sqrt{\xi}}\sqrt{\frac{\Gamma(\tx+1)\Gamma(\ty+\beta)}{\Gamma(\tx+\beta)\Gamma(\ty+1)}}\\
&\times\oint\limits_{\{w_1\}}\oint\limits_{\{w_2\}}
\frac{(1-\sqrt{\xi}w_2)^{-2N}(1-\frac{\sqrt{\xi}}{w_2})^{2N+\beta-1}
(1-\sqrt{\xi}w_1)^{-2N-\beta+1}(1-\frac{\sqrt{\xi}}{w_1})^{2N}}{w_1w_2-1}
\frac{dw_1}{w_1^{\tx-2N+1}}\frac{dw_2}{w_2^{\ty-2N+2}},
\end{split}
\end{equation}
where the contours $\{w_1\}$, $\{w_2\}$ are chosen in the same way
as in the statement of Proposition
\ref{EquationRepresentationKzz'}. Exploiting the fact that
$K_{2N}^{\Meixner}(\tx,\ty)=K_{2N}^{\Meixner}(\ty,\tx)$ we can
rewrite this as a double integral. Applying the operators
$E^{\Meixner}$ we obtain the contour integral representation for
the kernel of the operator
$E^{\Meixner}K_{2N}^{\Meixner}D^{\Meixner}K_{2N}^{\Meixner}E^{\Meixner}$.
Adding expressions for the kernels of
$E^{\Meixner}K_{2N}^{\Meixner}$ and
$K_{2N}^{\Meixner}E^{\Meixner}$ we arrive to formulae in the
statement of the Theorem.
\end{proof}
It is possible to represent the function
$S_{2N}^{\Meixner}(\tx,\ty)$ in the form which is manifestly
antisymmetric with respect to $\tx\longleftrightarrow\ty$. With
this purpose in mind we introduce functions
$P^{\Meixner}(\tx,w,\beta)$ and $Q^{\Meixner}(\tx,w,\beta)$ as
follows. If $\tx$ is an even integer, then the functions
$P^{\Meixner}(\tx,w,\beta)$ and $Q^{\Meixner}(\tx,w,\beta)$ are
defined by the formulae
\begin{equation}
P^{\Meixner}(\tx,w,\beta)=\sqrt{\frac{\Gamma(\tx+1)}{\Gamma(\tx+\beta)}}
F\left(\frac{\tx+2}{2},1;\frac{\tx+\beta+1}{2};\frac{1}{w^2}\right)\frac{1}{w^{\tx+1}},
\end{equation}
and
\begin{equation}
Q^{\Meixner}(\tx,w,\beta)=\sqrt{\frac{\Gamma(\tx+\beta)}{\Gamma(\tx+1)}}\left(
F\left(\frac{\tx+\beta}{2},1;\frac{\tx+1}{2};\frac{1}{w^2}\right)-1\right)\frac{1}{w^{\tx}},
\end{equation}
If $\tx$ is an odd integer, then $P^{\Meixner}(\tx,w,\beta)$ and
$Q^{\Meixner}(\tx,w,\beta)$ are defined by
\begin{equation}
P^{\Meixner}(\tx,w,\beta)=-\sqrt{\frac{\Gamma(\tx+1)}{\Gamma(\tx+\beta)}}
\left(F\left(-\frac{\beta+\tx-1}{2},1;-\frac{\tx}{2};w^2\right)-1\right)\frac{1}{w^{\tx+1}},
\end{equation}
and
\begin{equation}
Q^{\Meixner}(\tx,w,\beta)=-\sqrt{\frac{\Gamma(\tx+\beta)}{\Gamma(\tx+1)}}
F\left(-\frac{\tx-1}{2},1;-\frac{\beta+\tx-2}{2};w^2\right)\frac{1}{w^{\tx}}.
\end{equation}
In addition, set
\begin{equation}\label{Equation4.8}
\begin{split}
\widetilde{S}_{2N}^{\Meixner}(\tx,\ty)=\frac{\sqrt{\xi}}{(2\pi
i)^2}
\oint\limits_{\{w_1\}}\oint\limits_{\{w_2\}}&\frac{(1-\sqrt{\xi}w_1)^{-2N-\beta+1}(1-\frac{\sqrt{\xi}}{w_1})^{2N}
(1-\sqrt{\xi}w_2)^{-2N}(1-\frac{\sqrt{\xi}}{w_2})^{2N+\beta-1}}{w_1w_2-1}\\
&\times
P^{\Meixner}(\tx,w,\beta)Q^{\Meixner}(\tx,w,\beta)\frac{dw_1}{w_1^{-2N+1}}\frac{dw_2}{w_2^{-2N+1}}.
\end{split}
\end{equation}
where the contours $\{w_1\},\{w_2\}$ are chosen as in the
statement of Proposition \ref{EquationRepresentationKzz'} with the
following additional conditions. If $\tx$ is an even integer, then
$\{w_1\}$ lies in the domain $|w|>1$. If $\tx$ is an odd integer,
then $\{w_1\}$ lies in the domain $|w|<1$. The  same condition is
imposed on $\{w_2\}$: if $\tx$ is an even integer, then $\{w_2\}$
lies in the domain $|w|>1$, and  if $\tx$ is an odd integer, then
$\{w_1\}$ lies in the domain $|w|<1$.
\begin{prop}\label{Proposition4.2}
The function $S_{2N}^{\Meixner}(\tx,\ty)$ can be
written as
$$
S_{2N}^{\Meixner}(\tx,\ty)=S_{2N}^{\Meixner}(\tx,\ty)-S_{2N}^{\Meixner}(\ty,\tx),
$$
where the function $S_{2N}^{\Meixner}(\tx,\ty)$ is defined by
equation (\ref{Equation4.8}).
\end{prop}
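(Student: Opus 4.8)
We must show that $S_{2N}^{\Meixner}(\tx,\ty)=\widetilde S_{2N}^{\Meixner}(\tx,\ty)-\widetilde S_{2N}^{\Meixner}(\ty,\tx)$, with $\widetilde S_{2N}^{\Meixner}$ the function defined by the contour integral (\ref{Equation4.8}). The plan is to read off the skew-symmetric structure already present in the three pieces making up $S_{2N}^{\Meixner}$ in (\ref{SMeixner}), and then to identify the resulting ``half'' of $S_{2N}^{\Meixner}$ with (\ref{Equation4.8}) by the same contour computation used in the proof of Theorem \ref{TheoremSN4}. First, from the explicit formulas (\ref{DPLUS(x,y)}) and (\ref{DMINUS(x,y)}) one checks that $D_+^{\Meixner}$ and $D_-^{\Meixner}$ are transposes of each other, so $D^{\Meixner}=D_+^{\Meixner}-D_-^{\Meixner}$ is skew; since $E^{\Meixner}=(D^{\Meixner})^{-1}$, the operator $E^{\Meixner}$ is skew as well, i.e. $E^{\Meixner}(\tx,\ty)=-E^{\Meixner}(\ty,\tx)$ (this can also be seen directly from (\ref{EMeixner}) by a short Pochhammer computation). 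Combining this with the symmetry of $K_{2N}^{\Meixner}$ gives $K_{2N}^{\Meixner}E^{\Meixner}(\tx,\ty)=-\bigl(E^{\Meixner}K_{2N}^{\Meixner}\bigr)(\ty,\tx)$, so that $E^{\Meixner}K_{2N}^{\Meixner}+K_{2N}^{\Meixner}E^{\Meixner}$ has the form $A(\tx,\ty)-A(\ty,\tx)$ with $A:=E^{\Meixner}K_{2N}^{\Meixner}$.

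For the remaining term, note that the contour representation (\ref{KDK}) of $K_{2N}^{\Meixner}D^{\Meixner}K_{2N}^{\Meixner}$ is a difference of two double integrals which are carried into one another by the relabelling $w_1\leftrightarrow w_2$ combined with $\tx\leftrightarrow\ty$ and a sign change; that is, $K_{2N}^{\Meixner}D^{\Meixner}K_{2N}^{\Meixner}(\tx,\ty)=T(\tx,\ty)-T(\ty,\tx)$, where $T$ is the first of the two integrals appearing in (\ref{KDK}) (this is just the operator identity $(KDK)^{\mathsf T}=KD^{\mathsf T}K=-KDK$). Applying $E^{\Meixner}$ on the left and on the right and using skewness of $E^{\Meixner}$ once more, one obtains $E^{\Meixner}K_{2N}^{\Meixner}D^{\Meixner}K_{2N}^{\Meixner}E^{\Meixner}(\tx,\ty)=B(\tx,\ty)-B(\ty,\tx)$ with $B:=E^{\Meixner}TE^{\Meixner}$. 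Hence $S_{2N}^{\Meixner}(\tx,\ty)=\widetilde S(\tx,\ty)-\widetilde S(\ty,\tx)$ with $\widetilde S:=A-B$, and it only remains to identify $\widetilde S$ with (\ref{Equation4.8}). For this I would repeat, for $A$ and for $B$ separately, the calculation in the proof of Theorem \ref{TheoremSN4}: insert the contour representation of $K_{2N}^{\Meixner}$ supplied by (\ref{Equation4.11}) and Proposition \ref{EquationRepresentationKzz'} (respectively, the explicit form of $T$), and let each $E^{\Meixner}$ act term by term through its series (\ref{EMeixner}). Every application of $E^{\Meixner}$ in a slot inserts exactly one Gauss hypergeometric factor in the corresponding variable, precisely as in the passage leading to (\ref{KF}); the factor produced in the $\tx$-slot, together with the accompanying power of $w_1$ and Gamma prefactor, is by construction $P^{\Meixner}(\tx,w_1,\beta)$, while the factor in the $\ty$-slot — the ``$-1$'' coming from the fact that the inner sum in (\ref{EMeixner}) starts at shift $1$ rather than $0$ — assembles into $Q^{\Meixner}(\ty,w_2,\beta)$. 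Removing the auxiliary integration variables introduced in $B=E^{\Meixner}TE^{\Meixner}$ by the residue calculations already used in the reduction from four integrals to two in the proof of Theorem \ref{TheoremSN4} leaves exactly the single double integral (\ref{Equation4.8}); the parity-dependent contour prescriptions on $\{w_1\},\{w_2\}$ (domain $|w|>1$ for an even index, $|w|<1$ for an odd one) just record which branch of (\ref{EMeixner}) was used.

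The main obstacle is precisely this last contour-and-residue bookkeeping. The representation of $S_{2N}^{\Meixner}$ in Theorem \ref{TheoremSN4} and the manifestly skew one differ only in how the contributions of the poles at $w_iw_j=1$ — and, in the odd branches, at $w_i=\pm1$, where the hypergeometric argument is $w^2$ rather than $1/w^2$ — are split between genuine double integrals and products of single integrals, so the verification has to be carried out separately in each of the four parity cases for $(\tx,\ty)$; the ${}_2F_1$ identities needed to pass between the $1/w^2$- and $w^2$-forms are the elementary Euler and Pfaff transformations. Keeping track of the signs contributed by the orientations of the deformed contours, by the minus sign in $D^{\Meixner}=D_+^{\Meixner}-D_-^{\Meixner}$, and by the parity-dependent sign conventions in the definitions of $P^{\Meixner}$ and $Q^{\Meixner}$ is where essentially all of the work lies. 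Once the four cases are checked, the claimed decomposition follows, and with it the antisymmetry of $S_{2N}^{\Meixner}$ under $\tx\leftrightarrow\ty$; the analogous statement for the $z$-measures, Proposition \ref{Proposition2.12}, then results from the analytic continuation in $N$ described in the introduction.
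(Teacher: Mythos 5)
Your proposal is correct and follows essentially the same route as the paper: the paper likewise reduces everything to the skew-symmetry of $D^{\Meixner}$ and $E^{\Meixner}$ (as mutual inverses) together with the symmetry of $K_{2N}^{\Meixner}$, merely organizing the computation by first writing $D^{\Meixner}S_{2N}^{\Meixner}D^{\Meixner}=D^{\Meixner}K_{2N}^{\Meixner}+K_{2N}^{\Meixner}D^{\Meixner}-K_{2N}^{\Meixner}D^{\Meixner}K_{2N}^{\Meixner}$ in a manifestly antisymmetric double-contour form and then applying $E^{\Meixner}$ on both sides, which is the same algebra as your term-by-term decomposition $S=(A-B)-(A-B)^{\mathsf T}$. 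The final identification with (\ref{Equation4.8}) is left at essentially the same level of detail in the paper as in your sketch.
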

\begin{proof}
Using the fact that the operators $D^{\Meixner}$ and
$E^{\Meixner}$ are mutually inverse we obtain from equation
(\label{SMeixner}) the relation
$$
D^{\Meixner}S_{2N}^{\Meixner}D^{\Meixner}=D^{\Meixner}K_{2N}^{\Meixner}+
K_{2N}^{\Meixner}D^{\Meixner}-K_{2N}^{\Meixner}D^{\Meixner}K_{2N}^{\Meixner}.
$$
This relation (together with formulae (\ref{Equation4.11}),
(\ref{KDK}), and Proposition \ref{EquationRepresentationKzz'})
enables us to find an explicit formula for the kernel of the
operator $D^{\Meixner}S_{2N}^{\Meixner}D^{\Meixner}$. Namely,
\begin{equation}
\begin{split}
&D^{\Meixner}S_{2N}^{\Meixner}D^{\Meixner}(\tx,\ty)=\frac{1}{(2\pi
i)^2\sqrt{\xi}}\sqrt{\frac{\Gamma(\tx+\beta)}{\Gamma(\tx+1)}\frac{\Gamma(\ty+1)}{\Gamma(\ty+\beta)}}\\
&\times\oint\limits_{\{w_1\}}\oint\limits_{\{w_2\}}\frac{(1-\sqrt{\xi}w_1)^{-2N-\beta+1}(1-\frac{\sqrt{\xi}}{w_1})^{2N}
(1-\sqrt{\xi}w_2)^{-2N}(1-\frac{\sqrt{\xi}}{w_2})^{2N+\beta-1}}{w_1w_2-1}\frac{dw_1}{w_1^{-2N+1}}\frac{dw_2}{w_2^{-2N+1}}\\
&-(\tx\longleftrightarrow\ty).
\end{split}
\nonumber
\end{equation}
Applying $E^{\Meixner}$ to the both sides of the formula above we
obtain the representation for $S_{2N}^{\Meixner}(\tx,\ty)$ in the
manifestly antisymmetric form.
\end{proof}
\begin{rem} For our purposes, in particular, for the analytic
continuation of the Meixner symplectic ensemble the expression for
$S_{2N}^{\Meixner}(\tx,\ty)$ given in Theorem \ref{TheoremSN4} is
more convenient.
\end{rem}

\section{Proof of Theorem \ref{MAINTHEOREM1} for special values of
parameters $z$ and $z'$}\label{SECTION5} The aim of the this
Section is to show that for $N=1,2,\ldots$, $z=2N$, and
$z'=2N+\beta-2$ with $\beta>0$ the formula for the correlation
function $\varrho_{n}^{(z,z',\xi,\theta=2)}(x_1,\ldots ,x_n)$
obtained in Lemma \ref{LemmaRelation} is equivalent to the formula
for the correlation function
$\varrho_{n}^{(z,z',\xi,\theta=2)}(x_1,\ldots ,x_n)$ stated in
Theorem \ref{MAINTHEOREM1}. Once we show this equivalence, we
prove Theorem \ref{MAINTHEOREM1} for special values of $z$ and
$z'$. The transformation from the formula in Lemma
\ref{LemmaRelation} (where the kernel $\mathbb{K}_{2N}^{\Meixner}$
is given by Proposition \ref{TheoremMeixnerKN4} together with
formula (\ref{SMeixner})) to the formula in Theorem
\ref{MAINTHEOREM1} is achieved by a set of nontrivial and rather
technically complicated algebraic manipulations. To motivate these
manipulations recall that the $z$-measure $M_{z,z',\xi,\theta=2}$
is manifestly symmetric with respect to $z\longleftrightarrow z'$.
Therefore, the final formula for the correlation function
$\varrho_{n}^{(z,z',\xi,\theta=2)}(x_1,\ldots ,x_n)$ must be
manifestly symmetric with respect to $z\longleftrightarrow z'$ as
well.

It is convenient to introduce three functions on $\Z'\times\Z'$,
namely $I_{z,z',\xi,\theta=2}(x,y)$, $A_{z,z',\xi,\theta=2}(x,y)$,
and $B_{z,z',\xi,\theta=2}(x,y)$.  We will define these functions
in terms of contour integrals. Let $\{w_1\}$ and $\{w_2\}$ be
arbitrary simple contours satisfying the conditions
\begin{itemize}
    \item  both contours go around $0$ in positive direction;
    \item  the point $\xi^{1/2}$ is in the interior of each of the
    contours while the point $\xi^{-1/2}$ lies outside the
    contours;
    \item the contour $\{w_1^{-1}\}$ is contained in the interior
    of the contour $\{w_2\}$ (equivalently, $\{w_2^{-1}\}$ is
    contained in the interior of $\{w_1\}$).
\end{itemize}
If $x-\frac{1}{2}$ is an even integer, and $y-\frac{1}{2}$ is an
arbitrary integer, then the first function,
$I_{z,z',\xi,\theta=2}(x,y)$, is defined by
\begin{equation}\label{Equation(3.2A)}
\begin{split}
I_{z,z',\xi,\theta=2}(x,y)= -\frac{1}{(2\pi
i)^2}\oint\limits_{\{w_1\}}&\oint\limits_{\{w_2\}}\frac{(1-\sqrt{\xi}w_1)^{-z'}(1-\frac{\sqrt{\xi}}{w_1})^{z}
(1-\sqrt{\xi}w_2)^{-z}(1-\frac{\sqrt{\xi}}{w_2})^{z'}}{w_1w_2-1}\\
&\times
F\left(\frac{x+z-\frac{1}{2}}{2},1;\frac{x+z'+\frac{1}{2}}{2};\frac{1}{w_1^2}\right)\frac{dw_1}{w_1^{x-\frac{1}{2}}}\frac{dw_2}{w_2^{y-\frac{1}{2}}},
\end{split}
\end{equation}
where the contours are chosen as described above with an
additional condition that both contours lie in the domain $|w|>1$.\\
If $x-\frac{1}{2}$ is an odd integer, and $y-\frac{1}{2}$ is an
arbitrary integer, then $I_{z,z',\xi,\theta=2}(x,y)$ is defined by
\begin{equation}\label{Equation(3.2B)}
\begin{split}
I_{z,z',\xi,\theta=2}(x,y)=\frac{1}{(2\pi
i)^2}\oint\limits_{\{w_1\}}&\oint\limits_{\{w_2\}}\frac{(1-\sqrt{\xi}w_1)^{-z'}(1-\frac{\sqrt{\xi}}{w_1})^{z}
(1-\sqrt{\xi}w_2)^{-z}(1-\frac{\sqrt{\xi}}{w_2})^{z'}}{w_1w_2-1}\\
&\times\left[F\left(-\frac{x+z'-\frac{3}{2}}{2},1;-\frac{x+z-\frac{5}{2}}{2};w_1^2\right)-1\right]\frac{dw_1}{w_1^{x-\frac{1}{2}}}\frac{dw_2}{w_2^{y-\frac{1}{2}}},
\end{split}
\end{equation}
where the contours are chosen as described above with an
additional condition that both contours lie in the domain $|w|<1$.

Next let us define the second function, namely
$A_{z,z',\xi,\theta=2}(x)$. If $x-\frac{1}{2}$ is an even integer,
then $A_{z,z',\xi,\theta=2}(x)$ is defined by the contour integral
\begin{equation}\label{UIA}
\begin{split}
A_{z,z',\xi,\theta=2}(x)=-\frac{1}{2\pi
i}\oint\limits_{\{w\}}(1-\sqrt{\xi}w)^{-z'-1}(1-\frac{\sqrt{\xi}}{w})^{z}
F\left(\frac{x+z-\frac{1}{2}}{2},1;\frac{x+z'+\frac{1}{2}}{2};\frac{1}{w^2}\right)\frac{dw}{w^{x-\frac{1}{2}}}.
\end{split}
\end{equation}
where $\{w\}$ is an arbitrary simple contour going around $0$ in
the positive direction, and such that it lies in the domain
$|w|>1$.\\
If $x-\frac{1}{2}$ is an odd integer, then we set
\begin{equation}\label{UIB}
\begin{split}
A_{z,z',\xi,\theta=2}(x)=\frac{1}{2\pi
i}&\oint\limits_{\{w\}}(1-\sqrt{\xi}w)^{-z'-1}(1-\frac{\sqrt{\xi}}{w})^{z}\\
&\times\left[F\left(-\frac{x+z'-\frac{3}{2}}{2},1;-\frac{x+z-\frac{5}{2}}{2};w^2\right)-1\right]\frac{dw}{w^{x-\frac{1}{2}}}.
\end{split}
\end{equation}
Here $\{w\}$ is an arbitrary simple contour going around $0$ in
the positive direction, and such that it lies in the domain
$|w|<1$.

 Finally, we define $B_{z,z',\xi,\theta=2}(y)$. This
function has the following contour integral representation. If
$y-\frac{1}{2}$ is an even integer, then
\begin{equation}\label{UIIA}
\begin{split}
B_{z,z',\xi,\theta=2}(y)=\frac{1}{2\pi
i}&\oint\limits_{\{w\}}(1-\sqrt{\xi}w)^{-z}(1-\frac{\sqrt{\xi}}{w})^{z'}\\
&\times\left[1-(1-\frac{\sqrt{\xi}}{w})F\left(\frac{y+z'+\frac{1}{2}}{2},1;\frac{y+z-\frac{1}{2}}{2};\frac{1}{w^2}\right)\right]
\frac{dw}{w^{y-\frac{1}{2}}}.
\end{split}
\end{equation}
Here $\{w\}$ is an arbitrary simple contour going around $0$ in
the positive direction, and such that it lies in the domain
$|w|>1$.\\
 If $y-\frac{1}{2}$ is an odd integer, then
\begin{equation}\label{UIIB}
\begin{split}
B_{z,z',\xi,\theta=2}(y)=\frac{1}{2\pi
i}\oint\limits_{\{w\}}&(1-\sqrt{\xi}w)^{-z}(1-\frac{\sqrt{\xi}}{w})^{z'}\\
&\times\left[(1-\frac{\sqrt{\xi}}{w})F\left(-\frac{y+z-\frac{5}{2}}{2},1;-\frac{y+z'-\frac{3}{2}}{2};w^2\right)+\frac{\sqrt{\xi}}{w}\right]
\frac{dw}{w^{y-\frac{1}{2}}}.
\end{split}
\end{equation}
Here $\{w\}$ is an arbitrary simple contour going around $0$ in
the positive direction, and such that it lies in the domain
$|w|>1$.
\begin{prop}\label{PropositionParticularCase1}
1) For $N=1, 2,\ldots $ let $z=2N$ and $z'=2N+\beta-2$ with
$\beta>0$. Assume that $x_1,\ldots ,x_n$ lie in the subset
$\Zp-2N+\frac{1}{2}\subset\Z'$, so that the points
$\tx_i=x_i+2N-1/2$ are in $\Zp$. Then
$$
\varrho_{n}^{(z,z',\xi,\theta=2)}(x_1,\ldots ,x_n)
=\Pf\left[\mathbb{K}_{z,z',\xi,\theta=2}(x_i,x_j)\right]_{i,j=1}^n,
$$
where the correlation kernel $\mathbb{K}_{z,z',\xi,\theta=2}(x,y)$
has the following form
$$
\mathbb{K}_{z,z',\xi,\theta=2}(x,y)= \left[\begin{array}{cc}
  S_{z,z',\xi,\theta=2}(x,y) & -{SD_-}_{z,z',\xi,\theta=2}(x,y) \\
  -{D_+S}_{z,z',\xi,\theta=2}(x,y) & {D_+SD_-}_{z,z',\xi,\theta=2}(x,y) \\
\end{array}\right].
$$
2) The functions ${SD_-}_{z,z',\xi,\theta=2}(x,y)$,
${D_+S}_{z,z',\xi,\theta=2}(x,y)$, and
${D_+SD_-}_{z,z',\xi,\theta=2}(x,y)$ are expressible in terms of
the function $S_{z,z',\xi,\theta=2}(x,y)$ as follows
$$
{SD_-}_{z,z',\xi,\theta=2}(x,y)=S_{z,z',\xi,\theta=2}(x,y+1)\sqrt{\frac{y+z+\frac{1}{2}}{y+z'+\frac{3}{2}}},
$$
$$
{D_+S}_{z,z',\xi,\theta=2}(x,y)=\sqrt{\frac{x+z+\frac{1}{2}}{x+z'+\frac{3}{2}}}S_{z,z',\xi,\theta=2}(x+1,y),
$$
and
$$
{D_+SD_-}_{z,z',\xi,\theta=2}(x,y)=
\sqrt{\frac{x+z+\frac{1}{2}}{x+z'+\frac{3}{2}}}S_{z,z',\xi,\theta=2}(x+1,y+1)
\sqrt{\frac{y+z+\frac{1}{2}}{y+z'+\frac{3}{2}}}.
$$
3) The function $S_{z,z',\xi,\theta=2}(x,y)$ can be written as
\begin{equation}
\begin{split}
S_{z,z',\xi,\theta=2}(x,y)&=\sqrt{\frac{\Gamma(x+z+\frac{1}{2})
\Gamma(y+z'+\frac{3}{2})}{\Gamma(y+z+\frac{1}{2})\Gamma(x+z'+\frac{3}{2})}}\\
&\times\left[I_{z,z',\xi,\theta=2}(x+2,y+1)
+A_{z,z',\xi,\theta=2}(x+2)B_{z,z',\xi,\theta=2}(y+1)\right],
\end{split}
\nonumber
\end{equation}
and it is related with the kernel $S_{2N}^{\Meixner}(\tx,\ty)$ as
\begin{equation}\label{Equation5.131}
S_{2N}^{\Meixner}(x+2N-\frac{1}{2},y+2N-\frac{1}{2})=\sqrt{\xi}S_{z,z',\xi,\theta=2}(x,y).
\end{equation}
\end{prop}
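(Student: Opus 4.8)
The plan is to derive Proposition \ref{PropositionParticularCase1} from Lemma \ref{LemmaRelation}, Proposition \ref{TheoremMeixnerKN4} and Theorem \ref{TheoremSN4} by performing the change of variables $\tx = x + 2N - \tfrac12$, $\ty = y + 2N - \tfrac12$ together with the substitution $z = 2N$, $z' = 2N + \beta - 2$ (equivalently $\beta = z' - z + 2$), and then a Pfaffian rescaling.

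\emph{Parts 1) and 2), and identity (\ref{Equation5.131}).} First I would rewrite the kernels of $D_\pm^{\Meixner}$ from (\ref{DPLUS(x,y)})--(\ref{DMINUS(x,y)}) in the new variables: since $1 + \tx = x + z + \tfrac12$ and $\beta + \tx = x + z' + \tfrac32$, a direct substitution shows that $D_+^{\Meixner}$ becomes $\xi^{-1/2}$ times the forward shift with coefficient $\big(\tfrac{x+z+1/2}{x+z'+3/2}\big)^{1/2}$, and $D_-^{\Meixner}$ becomes $\xi^{-1/2}$ times the backward shift with coefficient $\big(\tfrac{y+z+1/2}{y+z'+3/2}\big)^{1/2}$. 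By Lemma \ref{LemmaRelation} and Proposition \ref{TheoremMeixnerKN4}, $\varrho_n^{(z,z',\xi,\theta=2)}(x_1,\dots,x_n)$ is the Pfaffian of the $2n\times 2n$ antisymmetric matrix assembled from the $2\times 2$ blocks $\mathbb{K}_{2N}^{\Meixner}(\tx_i,\tx_j)$. The key observation is that conjugating this matrix by the diagonal matrix $\diag(\xi^{-1/4},\xi^{1/4},\xi^{-1/4},\xi^{1/4},\dots)$, which has determinant $1$ and hence leaves the Pfaffian unchanged, multiplies the $(1,1)$ entry of each $2\times 2$ block by $\xi^{-1/2}$, the $(2,2)$ entry by $\xi^{1/2}$, and leaves the off-diagonal entries untouched. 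Setting $S_{z,z',\xi,\theta=2}(x,y) := \xi^{-1/2}S_{2N}^{\Meixner}(\tx,\ty)$ — which is exactly (\ref{Equation5.131}) — and defining the other three functions as the corresponding conjugated blocks, the relations in part 2) follow from the explicit form of $D_\pm^{\Meixner}$ above; for instance the $(2,1)$ block is $-D_+^{\Meixner}S_{2N}^{\Meixner}(\tx,\ty) = -\xi^{-1/2}\big(\tfrac{x+z+1/2}{x+z'+3/2}\big)^{1/2}S_{2N}^{\Meixner}(\tx+1,\ty) = -\big(\tfrac{x+z+1/2}{x+z'+3/2}\big)^{1/2}S_{z,z',\xi,\theta=2}(x+1,y)$, and similarly for the $(1,2)$ and $(2,2)$ blocks. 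This establishes 1), 2) and (\ref{Equation5.131}) simultaneously.

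\emph{Part 3).} It remains to transform the contour-integral formula for $S_{2N}^{\Meixner}$ from Theorem \ref{TheoremSN4} into the stated $I+AB$ form. I would treat the four parity cases of $(\tx,\ty)$ — equivalently of $(x-\tfrac12,y-\tfrac12)$ — in parallel, carrying out the even--even case in detail. Substituting $z=2N$, $\beta=z'-z+2$ into part a) of Theorem \ref{TheoremSN4}, the parameters line up: $\tfrac{\tx+2}{2}=\tfrac{x+z+3/2}{2}$, $\tfrac{\tx+\beta+1}{2}=\tfrac{x+z'+5/2}{2}$, the exponents become $-2N-\beta+1=-z'-1$ and $2N+\beta-1=z'+1$, the powers of $w_1,w_2$ become $w_1^{(x+2)-1/2}$ and $w_2^{(y+1)-1/2}$, and the $\Gamma$-prefactor equals $\sqrt{\tfrac{\Gamma(x+z+1/2)\Gamma(y+z'+3/2)}{\Gamma(y+z+1/2)\Gamma(x+z'+3/2)}}$, precisely the prefactor in the statement of 3). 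Thus $\xi^{-1/2}S_{2N}^{\Meixner}(\tx,\ty)$ equals that prefactor times $-\tfrac{1}{(2\pi i)^2}(\mathrm{int}_1-\mathrm{int}_2)$, where $\mathrm{int}_1$ carries the Cauchy-type kernel $\tfrac{1}{w_1w_2-1}$ while $\mathrm{int}_2$ is already a product of a $w_1$-integral and a $w_2$-integral. The one genuinely nontrivial point is that the exponents of $(1-\sqrt\xi w_1)$ and $(1-\sqrt\xi/w_2)$ in $\mathrm{int}_1$ differ by $\mp1$ from those in the definition (\ref{Equation(3.2A)}) of $I_{z,z',\xi,\theta=2}$. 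I would absorb this by writing the kernel of $\mathrm{int}_1$ as the kernel of $I_{z,z',\xi,\theta=2}$ times $\tfrac{1-\sqrt\xi/w_2}{1-\sqrt\xi w_1}$ and splitting $\tfrac{1-\sqrt\xi/w_2}{1-\sqrt\xi w_1}=1+\tfrac{\sqrt\xi w_1-\sqrt\xi/w_2}{1-\sqrt\xi w_1}$: the ``$1$'' term reproduces $(2\pi i)^2$ times $I_{z,z',\xi,\theta=2}(x+2,y+1)$, and in the remaining term the elementary identity $\sqrt\xi w_1-\sqrt\xi/w_2=\sqrt\xi\,\tfrac{w_1w_2-1}{w_2}$ cancels the denominator $w_1w_2-1$, so it factorizes, its $w_1$-part being the integrand of (\ref{UIA}), namely $-2\pi i\,A_{z,z',\xi,\theta=2}(x+2)$, and its $w_2$-part a single integral $R_2:=\oint(1-\sqrt\xi w_2)^{-z}(1-\sqrt\xi/w_2)^{z'}\tfrac{\sqrt\xi}{w_2}\tfrac{dw_2}{w_2^{y+1/2}}$. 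In $\mathrm{int}_2$ the $w_1$-integral is again $-2\pi i\,A_{z,z',\xi,\theta=2}(x+2)$, and its $w_2$-integral, after the rearrangement $(1-\sqrt\xi/w)^{z'+1}(F-1)=(1-\sqrt\xi/w)^{z'}\big((1-\sqrt\xi/w)F-(1-\sqrt\xi/w)\big)$ compared with the integrand $(1-\sqrt\xi/w)^{z'}\big(1-(1-\sqrt\xi/w)F\big)$ of (\ref{UIIA}), equals $-2\pi i\,B_{z,z',\xi,\theta=2}(y+1)$ plus the same $R_2$; the two copies of $A_{z,z',\xi,\theta=2}(x+2)R_2$ enter $\mathrm{int}_1-\mathrm{int}_2$ with opposite signs and cancel. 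Collecting terms gives $S_{z,z',\xi,\theta=2}(x,y)=(\text{prefactor})\big(I_{z,z',\xi,\theta=2}(x+2,y+1)+A_{z,z',\xi,\theta=2}(x+2)B_{z,z',\xi,\theta=2}(y+1)\big)$, the formula of 3).

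\emph{Main obstacle.} The real work is carrying out the last argument uniformly over all four parity combinations: for odd arguments the hypergeometric functions occurring in Theorem \ref{TheoremSN4} and in (\ref{Equation(3.2B)}), (\ref{UIB}), (\ref{UIIB}) have argument $w^2$ in place of $1/w^2$ and the roles of $F$ and $F-1$ are interchanged, so matching the integrands requires the standard Gauss hypergeometric transformation formulas together with a careful choice of branch for each power. One must also verify in each case that the contour conditions ($|w|>1$ versus $|w|<1$ imposed on $\{w_1\},\{w_2\}$ in the definitions of $I$, $A$, $B$) coincide with those in Theorem \ref{TheoremSN4}, so that no contour deformation — and hence no spurious residue — is introduced.
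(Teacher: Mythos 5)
Your proposal is correct and follows exactly the route the paper intends: its own proof of this Proposition is the single sentence ``Follows immediately from Theorem \ref{TheoremSN4}, Lemma \ref{LemmaRelation}, and Proposition \ref{TheoremMeixnerKN4},'' and you derive the statement from precisely those three ingredients via the substitution $\tx=x+2N-\tfrac12$, $\beta=z'-z+2$. Your write-up in fact supplies two details the paper suppresses --- the conjugation by $\diag(\xi^{-1/4},\xi^{1/4},\ldots)$ that absorbs the stray $\xi^{\pm 1/2}$ from $D_{\pm}^{\Meixner}$ without changing the Pfaffian, and the splitting $\tfrac{1-\sqrt{\xi}/w_2}{1-\sqrt{\xi}w_1}=1+\sqrt{\xi}\,\tfrac{w_1w_2-1}{w_2(1-\sqrt{\xi}w_1)}$ that produces the $I+AB$ decomposition with the $A\cdot R_2$ terms cancelling --- and your verified even--even case, together with the stated checklist for the other three parity combinations, matches the Proposition exactly.
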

\begin{proof}
Follows immediately from Theorem \ref{TheoremSN4}, Lemma
\ref{LemmaRelation}, and Proposition \ref{TheoremMeixnerKN4}.
\end{proof}
\begin{prop}\label{PropositionParticularCase2} If $x-\frac{1}{2}$ is an even integer, then the
function $I_{z,z',\xi,\theta=2}(x,y)$ (defined by equation
(\ref{Equation(3.2A)})) can be written as
\begin{equation}
\begin{split}
I_{z,z',\xi,\theta=2}(x,y)&=-\sqrt{\frac{\Gamma(x+z'-\frac{1}{2})\Gamma(y+z-\frac{1}{2})}{\Gamma(x+z-\frac{1}{2})\Gamma(y+z'-\frac{1}{2})}}\\
&\times\sum\limits_{l=0}^{\infty}
\sqrt{\frac{\left(x+z-\frac{1}{2}\right)_{l,2}
\left(x+z'-\frac{1}{2}\right)_{l,2}}{\left(x+z+\frac{1}{2}\right)_{l,2}\left(x+z'+\frac{1}{2}\right)_{l,2}}}
\underline{K}_{z,z',\xi}(x+2l-1,y-1),
\end{split}
\nonumber
\end{equation}
where the function $\underline{K}_{z,z',\xi}(x,y)$ is defined by
equation (\ref{5.6112}). If $x-\frac{1}{2}$ is an odd integer,
then the function $I_{z,z',\xi,\theta=2}(x,y)$ (defined by
equation (\ref{Equation(3.2B)}))  can be written as
\begin{equation}
\begin{split}
I_{z,z',\xi,\theta=2}(x,y)&=\sqrt{\frac{\Gamma(x+z'-\frac{1}{2})\Gamma(y+z-\frac{1}{2})}{\Gamma(x+z-\frac{1}{2})\Gamma(y+z'-\frac{1}{2})}}\\
&\times\sum\limits_{l=1}^{\infty}
\sqrt{\frac{\left(-x-z+\frac{3}{2}\right)_{l,2}
\left(-x-z'+\frac{3}{2}\right)_{l,2}}{\left(-x-z+\frac{5}{2}\right)_{l,2}\left(-x-z'+\frac{5}{2}\right)_{l,2}}}
\underline{K}_{z,z',\xi}(x-2l-1,y-1).
\end{split}
\nonumber
\end{equation}
\end{prop}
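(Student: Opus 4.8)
The plan is to substitute, into the defining double contour integrals for $I_{z,z',\xi,\theta=2}(x,y)$, the convergent power series expansion of the Gauss hypergeometric factor in the variable $w_1$, to integrate term by term, and to recognise each resulting double integral as a value of $\underline{K}_{z,z',\xi}$ with its first argument shifted, via the contour representation of Proposition \ref{EquationRepresentationKzz'}.

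Take first $x-\tfrac12$ even. On a contour $\{w_1\}\subset\{|w|>1\}$ one has the uniformly convergent expansion
\begin{equation}
F\left(\frac{x+z-\frac12}{2},1;\frac{x+z'+\frac12}{2};\frac{1}{w_1^2}\right)=\sum_{l=0}^{\infty}\frac{\left(\frac{x+z-1/2}{2}\right)_l}{\left(\frac{x+z'+1/2}{2}\right)_l}\,w_1^{-2l}
\nonumber
\end{equation}
(using $(1)_l=l!$). Since the remaining factors are bounded on the compact contours $\{w_1\},\{w_2\}$, summation and integration may be interchanged. In the $l$-th summand the monomials $w_1^{-((x+2l-1)+1/2)}$ and $w_2^{-((y-1)+1/2)}$ occur; because a pair of contours lying in $|w|>1$ satisfies all three conditions of Proposition \ref{EquationRepresentationKzz'} simultaneously, for every $l$ that double integral equals $\underline{K}_{z,z',\xi}(x+2l-1,y-1)$ divided by the square-root prefactor $\sqrt{\frac{\Gamma(x+2l-1+z+1/2)\Gamma(y-1+z'+1/2)}{\Gamma(x+2l-1+z'+1/2)\Gamma(y-1+z+1/2)}}$ occurring there. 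Collecting these prefactors and using the elementary identities $(a)_{l,2}=2^{l}(a/2)_l$ and $(a)_{2l}=(a)_{l,2}(a+1)_{l,2}=\Gamma(a+2l)/\Gamma(a)$, the coefficient of $\underline{K}_{z,z',\xi}(x+2l-1,y-1)$ collapses to $-\sqrt{\frac{\Gamma(x+z'-1/2)\Gamma(y+z-1/2)}{\Gamma(x+z-1/2)\Gamma(y+z'-1/2)}}\,\sqrt{\frac{(x+z-1/2)_{l,2}(x+z'-1/2)_{l,2}}{(x+z+1/2)_{l,2}(x+z'+1/2)_{l,2}}}$, which is the asserted formula. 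Admissibility of $z,z'$ guarantees that no $\Gamma$-argument is a non-positive integer, that every quantity under a root is positive (so the positive roots are intended), and that the series converges, exactly as in the discussion following equation (\ref{KF}).

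The case $x-\tfrac12$ odd runs along the same lines, starting now from the expansion $F\!\left(-\tfrac{x+z'-3/2}{2},1;-\tfrac{x+z-5/2}{2};w_1^2\right)-1=\sum_{l\geq1}\frac{(-\frac{x+z'-3/2}{2})_l}{(-\frac{x+z-5/2}{2})_l}\,w_1^{2l}$, which converges uniformly on $\{w_1\}\subset\{|w|<1\}$, has no constant term (so the sum begins at $l=1$), and whose lower parameter is not a non-positive integer because $z\notin\Z$. The $l$-th summand now carries $w_1^{-((x-2l-1)+1/2)}$ and $w_2^{-((y-1)+1/2)}$. The delicate point is that here $\{w_1\},\{w_2\}$ lie in $|w|<1$, so $|w_1w_2|<1$ and the third contour condition of Proposition \ref{EquationRepresentationKzz'} cannot be met: deforming $\{w_1\}$ outward so as to bring $\{w_2^{-1}\}$ into its interior sweeps past the simple pole at $w_1=1/w_2$, so the $l$-th double integral differs from $\underline{K}_{z,z',\xi}(x-2l-1,y-1)$ (times its prefactor) by the residue $\frac{1}{2\pi i}\oint_{\{w_2\}}P(1/w_2)R(w_2)\,w_2^{(x-2l-1)-(y-1)-1}\,dw_2$, where $P(w)=(1-\sqrt{\xi}w)^{-z'}(1-\sqrt{\xi}/w)^{z}$ and $R(w)=(1-\sqrt{\xi}w)^{-z}(1-\sqrt{\xi}/w)^{z'}$. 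Since $P(1/w)R(w)\equiv1$, this correction is a Kronecker delta in the two arguments of $\underline{K}_{z,z',\xi}$; one must check that it is supported on a locus immaterial to the subsequent Pfaffian identities. Granting this, the same Pochhammer-$k$-symbol bookkeeping as before — now using the reflection relations $(b)_{l,2}=(-1)^{l}(-b-2l+2)_{l,2}$ to pass the $\Gamma$-quotients into $(-x-z+\tfrac32)_{l,2}$, $(-x-z'+\tfrac32)_{l,2}$, and their $+\tfrac52$ analogues — yields the stated formula.

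I expect the main effort to lie in the algebraic conversion of the ordinary hypergeometric coefficients into the manifestly $z\leftrightarrow z'$-symmetric square-root form of the statement (the $z\leftrightarrow z'$ symmetry of the $z$-measure serving as a convenient consistency check), together with the contour-deformation step in the odd case; neither is conceptually deep, and beyond the uniform convergence already built into the very definition of $I_{z,z',\xi,\theta=2}$ there is no further analytic subtlety.
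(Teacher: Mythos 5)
Your overall strategy --- expand the Gauss hypergeometric factor into its power series in $w_1$, interchange summation with the double contour integration, and identify each resulting integral with a value of $\underline{K}_{z,z',\xi}$ via Proposition \ref{EquationRepresentationKzz'} --- is exactly the argument the paper itself gives (its proof of Proposition \ref{PropositionParticularCase2} is precisely this short sketch). Your treatment of the even case, including the Pochhammer bookkeeping $(a)_{2l}=(a)_{l,2}(a+1)_{l,2}$ and $(a)_{l,2}=2^{l}(a/2)_{l}$ that converts the ratio of ordinary Pochhammer symbols into the $z\leftrightarrow z'$-symmetric square-root form, is complete and correct; for contours in $|w|>1$ the nesting condition of Proposition \ref{EquationRepresentationKzz'} is automatic, so the term-by-term identification is legitimate there.

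The odd case, however, is left with a genuine gap, and it is one you yourself point at. You correctly observe that a pair of contours lying in $|w|<1$ cannot satisfy the nesting condition of Proposition \ref{EquationRepresentationKzz'} (the image $\{w_1^{-1}\}$ then lies in $|w|>1$ and cannot be interior to a contour $\{w_2\}$ contained in the open unit disk), so that identifying the $l$-th term with $\underline{K}_{z,z',\xi}(x-2l-1,y-1)$ requires deforming $\{w_1\}$ across the simple pole at $w_1=1/w_2$; and your computation of the resulting correction --- $P(1/w)R(w)\equiv 1$, hence a residue integral equal to $\delta_{x-2l-1,\,y-1}$ --- is right. But you then write that one must check the correction is ``supported on a locus immaterial to the subsequent Pfaffian identities'' and proceed by ``granting this''. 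That is not an admissible move: the Proposition asserts a pointwise identity between two functions of $(x,y)$, and the correction contributes the coefficient $\frac{\left(-\frac{x+z'-3/2}{2}\right)_{l}}{\left(-\frac{x+z-5/2}{2}\right)_{l}}$ at $l=(x-y)/2$ whenever $x-y$ is a positive even integer, which does not vanish for generic admissible $(z,z')$. Either the contour prescription in (\ref{Equation(3.2B)}) must be read so that the nesting condition is retained (which, as you note, is impossible with both contours inside the unit circle), or the stated identity needs a correction on the locus $x-y\in 2\Z_{>0}$ whose effect must then be tracked through Proposition \ref{PropositionParticularCase1} and the Pfaffian. Your proof ends exactly where this has to be decided. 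To be fair, the paper's own proof does not address the point at all --- it silently applies Proposition \ref{EquationRepresentationKzz'} in both parities --- so you have uncovered a soft spot in the source rather than introduced one; but as a proof of the statement the odd case is incomplete.
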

\begin{proof}
Rewrite the Gauss hypergeometric functions inside the integrals in
equations (\ref{Equation(3.2A)}), (\ref{Equation(3.2B)}) as
infinite sums. These sums are uniformly convergent in the domains
where the contours of integration are chosen. Therefore we can
interchange summation and integration. The integrals inside the
sums can be expressed in terms of the function
$\underline{K}_{z,z',\xi}(x,y)$ as it follows from Proposition
\ref{EquationRepresentationKzz'}.
\end{proof}
\begin{prop}\label{PropositionParticularCase3} If $x-\frac{1}{2}$ is an even integer, then
\begin{equation}
\begin{split}
A_{z,z',\xi,\theta=2}(x)&=-(1-\xi)^{\frac{z-z'-1}{2}}\sqrt{\frac{\Gamma(z+1)\Gamma(x+z'-\frac{1}{2})}{\Gamma(z'+1)\Gamma(x+z-\frac{1}{2})}}\\
&\times\sum\limits_{l=0}^{\infty}
\sqrt{\frac{\left(z+x-\frac{1}{2}\right)_{l,2}
\left(z'+x-\frac{1}{2}\right)_{l,2}}{\left(z+x+\frac{1}{2}\right)_{l,2}\left(z'+x+\frac{1}{2}\right)_{l,2}}}
\psi_{-\frac{1}{2}}(x+2l-1;z,z',\xi),
\end{split}
\nonumber
\end{equation}
where the function $\psi_{-\frac{1}{2}}(x;z,z',\xi)$ is defined by
equation (\ref{PsiaF}). If $x-\frac{1}{2}$ is an odd integer, then
\begin{equation}
\begin{split}
A_{z,z',\xi,\theta=2}(x)&=(1-\xi)^{\frac{z-z'-1}{2}}\sqrt{\frac{\Gamma(z+1)\Gamma(x+z'-\frac{1}{2})}{\Gamma(z'+1)\Gamma(x+z-\frac{1}{2})}}\\
&\times\sum\limits_{l=1}^{\infty}
\sqrt{\frac{\left(-x-z+\frac{3}{2}\right)_{l,2}
\left(-x-z'+\frac{3}{2}\right)_{l,2}}{\left(-x-z+\frac{5}{2}\right)_{l,2}\left(-x-z'+\frac{5}{2}\right)_{l,2}}}
\psi_{-\frac{1}{2}}(x-2l-1;z,z',\xi).
\end{split}
\nonumber
\end{equation}
\end{prop}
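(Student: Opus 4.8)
The plan is to follow exactly the scheme used in the proof of Proposition~\ref{PropositionParticularCase2}: write the Gauss hypergeometric function in the integrand of $A_{z,z',\xi,\theta=2}(x)$ as a power series that converges uniformly on the integration contour, interchange summation and integration, and recognize each term as a multiple of a single value of $\psi_{-1/2}$ via the integral representation of Proposition~\ref{PSIa} with $a=-\tfrac12$.

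Take first $x-\tfrac12$ even. The contour $\{w\}$ in equation (\ref{UIA}) lies in $|w|>1$ and may be taken to encircle $0$ and $\sqrt\xi$ while leaving $1/\sqrt\xi$ outside, so it is admissible for Proposition~\ref{PSIa}. On it the series
\[
F\!\left(\tfrac{x+z-1/2}{2},1;\tfrac{x+z'+1/2}{2};\tfrac1{w^2}\right)=\sum_{l\ge0}\frac{\left(\frac{x+z-1/2}{2}\right)_l}{\left(\frac{x+z'+1/2}{2}\right)_l}\,w^{-2l}
\]
converges uniformly, and the factor $(1-\sqrt\xi w)^{-z'-1}(1-\sqrt\xi/w)^{z}$ is bounded on the compact contour, so term-by-term integration is legitimate. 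The $l$-th resulting integral is
\[
\frac{1}{2\pi i}\oint_{\{w\}}(1-\sqrt\xi w)^{-z'-1}\Bigl(1-\tfrac{\sqrt\xi}{w}\Bigr)^{z}\frac{dw}{w^{x-1/2+2l}},
\]
which is precisely the contour integral occurring in Proposition~\ref{PSIa} for $a=-\tfrac12$ and argument $x+2l-1$; solving that identity for the integral rewrites it as $\psi_{-1/2}(x+2l-1;z,z',\xi)$ times the explicit factor $(1-\xi)^{(z-z'-1)/2}\sqrt{\Gamma(z+1)/\Gamma(z'+1)}\,\sqrt{\Gamma(x+z'-1/2+2l)/\Gamma(x+z-1/2+2l)}$. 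Substituting back produces a series of the shape asserted in the statement, with the overall minus sign coming from the $-\frac1{2\pi i}$ in (\ref{UIA}).

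It remains to reduce the $l$-dependent coefficient to the Pochhammer $2$-symbol form, and I expect this bookkeeping to be the only real obstacle. One uses $\left(\tfrac a2\right)_l=2^{-l}(a)_{l,2}$ together with $\Gamma(a+2l)/\Gamma(a)=(a)_{2l}=(a)_{l,2}(a+1)_{l,2}$ to combine the ratio $\left(\frac{x+z-1/2}{2}\right)_l/\left(\frac{x+z'+1/2}{2}\right)_l$ with the square root of the Gamma ratio above; everything collapses to $\sqrt{(z+x-1/2)_{l,2}(z'+x-1/2)_{l,2}/\bigl((z+x+1/2)_{l,2}(z'+x+1/2)_{l,2}\bigr)}$, and the stated prefactor $(1-\xi)^{(z-z'-1)/2}\sqrt{\Gamma(z+1)\Gamma(x+z'-1/2)/(\Gamma(z'+1)\Gamma(x+z-1/2))}$ is pulled out. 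The case $x-\tfrac12$ odd is handled identically: expand $F(\,\cdot\,;w^2)-1=\sum_{l\ge1}(\cdot)_l/(\cdot)_l\,w^{2l}$ on the contour in $|w|<1$ from (\ref{UIB}) (still admissible for Proposition~\ref{PSIa}), use that representation with argument $x-2l-1$, and apply $\Gamma(a-2l)/\Gamma(a)=1/(a-2l)_{2l}$ with the reflection $(a-2m)_{m,2}=(-1)^m(2-a)_{m,2}$ to convert the ordinary Pochhammer symbols into the $2$-symbols $(-x-z+3/2)_{l,2}$, $(-x-z'+3/2)_{l,2}$, and so on. Convergence of the final series and the legitimacy of the term-by-term integration follow from the uniform convergence on the chosen contours, so no analytic subtlety arises beyond verifying contour admissibility for Proposition~\ref{PSIa}.
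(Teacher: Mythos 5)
Your proposal is correct and follows essentially the same route as the paper's own (much terser) proof: expand the Gauss hypergeometric function in the integrand of (\ref{UIA}) (resp.\ (\ref{UIB})) as a uniformly convergent series on the chosen contour, integrate term by term, and identify each integral as $\psi_{-1/2}(x+2l-1;z,z',\xi)$ (resp.\ $\psi_{-1/2}(x-2l-1;z,z',\xi)$) via the contour representation of Proposition \ref{PSIa} with $a=-\tfrac12$. The Pochhammer bookkeeping you outline, using $\left(\tfrac a2\right)_l=2^{-l}(a)_{l,2}$, $(a)_{2l}=(a)_{l,2}(a+1)_{l,2}$ and the reflection identity, does collapse to the stated coefficients, so nothing is missing.
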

\begin{proof}The function $A_{z,z',\xi,\theta=2}(x)$ is defined by
equations (\ref{UIA}), (\ref{UIIA}). As in the proof of the
previous Proposition represent the Gauss hypergeometric functions
inside the integrals as infinite sums, and interchange summation
and integration. Then use Proposition \ref{PsiaF} to rewrite the
integrals inside the sums in terms of
$\psi_{-\frac{1}{2}}(x;z,z',\xi)$.
\end{proof}
\begin{prop}\label{PropositionParticularCase4}
If $y-\frac{1}{2}$ is an even integer, then the function
$B_{z,z',\xi,\theta=2}(y)$ (defined by equation (\ref{UIIA})) can
be represented as
\begin{equation}
\begin{split}
B_{z,z',\xi,\theta=2}(y)=-(1-\xi)^{\frac{z'-z+1}{2}}\frac{\sqrt{zz'}}{\sqrt{(y+z-\frac{1}{2})(y+z'-\frac{1}{2})}}
\sqrt{\frac{\Gamma(z'+1)\Gamma(y+z-\frac{1}{2})}{\Gamma(z+1)\Gamma(y+z'-\frac{1}{2})}}\\
\times\sum\limits_{l=0}^{\infty}\sqrt{\frac{\left(y+z+\frac{1}{2}\right)_{l,2}\left(y+z'+\frac{1}{2}\right)_{l,2}}{\left(y+z+\frac{3}{2}\right)_{l,2}
\left(y+z'+\frac{3}{2}\right)_{l,2}}}\psi_{1/2}(y+2l;z,z',\xi).
\end{split}
\nonumber
\end{equation}
If $y-\frac{1}{2}$ is an odd integer, then  the function
$B_{z,z',\xi,\theta=2}(y)$ (defined by equation (\ref{UIIB})) can
be represented as
\begin{equation}
\begin{split}
B_{z,z',\xi,\theta=2}(y)=(1-\xi)^{\frac{z'-z+1}{2}}\frac{\sqrt{zz'}}{\sqrt{(y+z-\frac{1}{2})(y+z'-\frac{1}{2})}}
\sqrt{\frac{\Gamma(z'+1)\Gamma(y+z-\frac{1}{2})}{\Gamma(z+1)\Gamma(y+z'-\frac{1}{2})}}\\
\times\sum\limits_{l=1}^{\infty}\sqrt{\frac{\left(-y-z+\frac{1}{2}\right)_{l,2}\left(-y-z'+\frac{1}{2}\right)_{l,2}}{\left(-y-z+\frac{3}{2}\right)_{l,2}
\left(-y-z'+\frac{3}{2}\right)_{l,2}}}\psi_{1/2}(y-2l;z,z',\xi).
\end{split}
\nonumber
\end{equation}
\end{prop}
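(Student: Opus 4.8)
The plan is to mimic the proofs of Propositions \ref{PropositionParticularCase2} and \ref{PropositionParticularCase3}: inside the contour integral defining $B_{z,z',\xi,\theta=2}$ one replaces the Gauss hypergeometric function by its (uniformly convergent) power series, interchanges summation and integration, and then identifies the resulting contour integrals, through the integral representation of $\psi_{1/2}$ in Proposition \ref{PSIa}, with the values $\psi_{1/2}(y\pm 2l;z,z',\xi)$ appearing in the statement.

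For $y-\frac12$ even I would start from (\ref{UIIA}) and, on a contour in $|w|>1$ that encircles $0$ and $\xi^{1/2}$ while leaving $\xi^{-1/2}$ outside, expand
$$
F\left(\tfrac{y+z'+\frac12}{2},1;\tfrac{y+z-\frac12}{2};\tfrac{1}{w^{2}}\right)=\sum_{l=0}^{\infty}\frac{\bigl(y+z'+\tfrac12\bigr)_{l,2}}{\bigl(y+z-\tfrac12\bigr)_{l,2}}\,\frac{1}{w^{2l}};
$$
term-by-term integration is legitimate for exactly the reasons explained after (\ref{KF}). The isolated ``$1$'' in the bracket $1-(1-\tfrac{\sqrt\xi}{w})F$ is absorbed by the $l=0$ term, and after writing $(1-\tfrac{\sqrt\xi}{w})^{z'}=(1-\tfrac{\sqrt\xi}{w})^{z'-1}(1-\tfrac{\sqrt\xi}{w})$ and splitting the trailing factor, each surviving integral becomes one of the form $\oint(1-\sqrt\xi w)^{-z}(1-\tfrac{\sqrt\xi}{w})^{z'-1}w^{-m}\,dw$, which is, up to a ratio of $\Gamma$-values, precisely the integral representing $\psi_{1/2}$ in Proposition \ref{PSIa} (used, when convenient, with $z$ and $z'$ interchanged, which is allowed since $\psi_a$ is symmetric in $(z,z')$). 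Using elementary contiguous relations such as $F(a,1;c;v)-1=\tfrac{av}{c}\,F(a+1,1;c+1;v)$ to recombine the different index shifts, and simplifying the Pochhammer $2$-symbols $(\cdot)_{l,2}$, then yields the claimed series. The case $y-\frac12$ odd is handled identically, now starting from (\ref{UIIB}), expanding $F\left(-\tfrac{y+z-\frac52}{2},1;-\tfrac{y+z'-\frac32}{2};w^{2}\right)$ in the region (inside $|w|<1$) where this non-terminating series converges, and absorbing the extra summand $\tfrac{\sqrt\xi}{w}$ into the leading term. As elsewhere, finiteness of the resulting series for every $y\in\Z'$ follows from finiteness of the original contour integral.

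The main obstacle is the algebra of the recombination step. Because $B_{z,z',\xi,\theta=2}$ carries heavier elementary factors than $I$ and $A$ (the extra $(1-\tfrac{\sqrt\xi}{w})$ inside the bracket, together with the isolated ``$1$'' and ``$\tfrac{\sqrt\xi}{w}$''), a naive term-by-term identification produces $\psi_{1/2}$ at \emph{both} the wanted shifts $y\pm 2l$ and unwanted intermediate shifts, and one must verify — via the contiguous relations above and the two (swapped and unswapped) forms of Proposition \ref{PSIa} — that the unwanted contributions cancel and that the whole tangle of square-root prefactors, including the $l$-independent factor $\sqrt{zz'}\,\bigl((y+z-\tfrac12)(y+z'-\tfrac12)\bigr)^{-1/2}$ in the answer, collapses to exactly the stated coefficients. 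A minor additional point is to choose the contours so that the relevant hypergeometric series converge while the contour still separates $\xi^{1/2}$ from $\xi^{-1/2}$ and winds once around $0$.
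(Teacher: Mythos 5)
Your overall strategy --- expand the Gauss function in the integrand as a series in $1/w^2$, integrate term by term, and identify the resulting contour integrals with values of $\psi_{1/2}$ via Proposition \ref{PSIa} --- is exactly the strategy of the paper's proofs of Propositions \ref{PropositionParticularCase2} and \ref{PropositionParticularCase3}, and it is also the skeleton of the paper's proof here. But you have correctly located, and then not closed, the step that makes this proposition genuinely harder than those two: the recombination. After you split the trailing factor $\bigl(1-\tfrac{\sqrt\xi}{w}\bigr)$ and the isolated $1$ in (\ref{UIIA}), each term of the expansion produces integrals $\oint(1-\sqrt\xi w)^{-z}(1-\tfrac{\sqrt\xi}{w})^{z'-1}w^{-m}\,dw$ at \emph{several consecutive} exponents $m$, equivalently (by Proposition \ref{LemmaFASCountourIntegral}) hypergeometric functions $F\bigl(-z,-z';C;\tfrac{\xi}{\xi-1}\bigr)$ at three consecutive values of the lower parameter $C$. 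The identity you invoke, $F(a,1;c;v)-1=\tfrac{av}{c}F(a+1,1;c+1;v)$, acts on the function of argument $1/w^2$ \emph{inside} the integrand and shifts $a$ and $c$ together; it does not relate the three $C$-shifted functions of argument $\tfrac{\xi}{\xi-1}$ that appear \emph{after} integration, so by itself it cannot produce either the cancellation of the wrong-parity shifts or the prefactor $zz'$ in the statement.

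The paper closes this gap with two concrete ingredients absent from your proposal. First, the algebraic identity
$(1-\sqrt{\xi}w)\bigl[1-(1-\tfrac{\sqrt\xi}{w})F\bigr]=1-(1+\xi)F+\sqrt{\xi}\bigl[\tfrac{F}{w}+w(F-1)\bigr]$,
which organizes $B_{z,z',\xi,\theta=2}$ into three integrals $T_1-(1+\xi)T_2+\sqrt\xi\,T_3$, each of which is then evaluated by Proposition \ref{LemmaFASCountourIntegral}. Second, the three-term contiguous relation
\begin{equation}
(C-A)(C-B)F(A,B;C+1;w)+C(C-1)F(A,B;C-1;w)-C(2C-A-B-1)F(A,B;C;w)=AB\,F(A+1,B+1;C+1;w),
\nonumber
\end{equation}
applied with $A=-z$, $B=-z'$, $C=2l+y+1$, $w=\tfrac{\xi}{\xi-1}$: its right-hand side is what manufactures the factor $zz'$ and leaves only shifts of a single parity, after which formula (\ref{PsiaF}) identifies the summands as $\psi_{1/2}(y+2l)$ (and similarly for (\ref{UIIB})). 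Without an identity of this three-term type, your ``unwanted intermediate shifts'' will not cancel; since that verification is precisely the content of the paper's argument, the proposal as written is a correct plan of attack rather than a proof.
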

\begin{proof}
Consider first the case when $y- \frac{1}{2}$ is an even integer.
In this case the function $B_{z,z',\xi,\theta=2}(y)$ is defined by
equation (\ref{UIIA}).  We use the identity
$$
(1-\sqrt{\xi}w)[1-(1-\frac{\sqrt{\xi}}{w})F]=1-(1+\xi)F+\sqrt{\xi}[\frac{F}{w}+w(F-1)]
$$
to rewrite $B_{z,z',\xi,\theta=2}(y)$ as a sum of three terms each
of which is defined by contour integrals. Namely, we have
$$
\widehat{\underline{B}}_{z,z',\xi,\theta=2}(y+\frac{1}{2})=T_1(y)-(1+\xi)T_2(y)+\sqrt{\xi}T_3(y),
$$
where
\begin{equation}
T_1(y)=\frac{1}{2\pi
i}\oint\limits_{\{w\}}(1-\sqrt{\xi}w)^{-z-1}(1-\frac{\sqrt{\xi}}{w})^{z'}
\frac{dw}{w^{y}}, \nonumber
\end{equation}
\begin{equation}
T_2(y)=\frac{1}{2\pi
i}\oint\limits_{\{w\}}(1-\sqrt{\xi}w)^{-z-1}(1-\frac{\sqrt{\xi}}{w})^{z'}
F\left(\frac{y+z'+1}{2},1;\frac{y+z}{2};\frac{1}{w^2}\right)
\frac{dw}{w^{y}}, \nonumber
\end{equation}
and
\begin{equation}
\begin{split}
&T_3(y)=\frac{1}{2\pi
i}\oint\limits_{\{w\}}(1-\sqrt{\xi}w)^{-z-1}(1-\frac{\sqrt{\xi}}{w})^{z'}
F\left(\frac{y+z'+1}{2},1;\frac{y+z}{2};\frac{1}{w^2}\right)
\frac{dw}{w^{y+1}}\\
&+\frac{1}{2\pi
i}\oint\limits_{\{w\}}(1-\sqrt{\xi}w)^{-z-1}(1-\frac{\sqrt{\xi}}{w})^{z'}
\left[F\left(\frac{y+z'+1}{2},1;\frac{y+z}{2};\frac{1}{w^2}\right)-1\right]
\frac{dw}{w^{y-1}}.
\end{split}
\nonumber
\end{equation}
In the formulae for $T_1(y)$, $T_2(y)$, and $T_3(y)$ written above
the contour $\{w\}$ lies in the domain $|w|>1$. We use Proposition
\ref{LemmaFASCountourIntegral} to represent $T_1(y)$ in terms of
the Gauss hypergeometric function, namely we obtain
$$
T_1(y)=(1-\xi)^{z'}\xi^{\frac{y-1}{2}}\frac{\Gamma(z+y)}{\Gamma(z+1)}
\frac{F\left(-z,-z';y;\frac{\xi}{\xi-1}\right)}{\Gamma(y)}.
$$
Consider the expression  for the function $T_2(y)$. We represent
the hypergeometric function inside the integral as the infinite
series,
$$
F\left(\frac{y+z'+1}{2},1;\frac{y+z}{2};\frac{1}{w^2}\right)
=\sum\limits_{l=0}^{\infty}\frac{\left(\frac{y+z'+1}{2}\right)_l}{\left(\frac{y+z}{2}\right)_l}\frac{1}{w^{2l}}.
$$
Again, this series is uniformly convergent on the integration
contour. Therefore, we can interchange the summation and
integration, and  compute the integrals in terms of the Gauss
hypergeometric function using Proposition
\ref{LemmaFASCountourIntegral}. The result is
\begin{equation}
\begin{split}
T_2(y)&=(1-\xi)^{z'}\frac{\Gamma(z+y)}{\Gamma(z+1)}\\
&\times\sum\limits_{l=0}^{\infty} \left(\frac{y+z'+1}{2}\right)_l
\left(\frac{y+z+1}{2}\right)_l
\xi^{\frac{2l+y-1}{2}}2^{2l}\frac{F\left(-z,-z';2l+y;\frac{\xi}{\xi-1}\right)}{\Gamma(2l+y)}.
\end{split}
\nonumber
\end{equation}
In a similar way we find after some calculations that
\begin{equation}
\begin{split}
T_3(y)=(1-\xi)^{z'}\frac{\Gamma(z+y)}{\Gamma(z+1)}\sum\limits_{l=0}^{\infty}
\biggl[&\left(\frac{z'+y+1}{2}\right)_l
\left(\frac{z+y+1}{2}\right)_l\xi^{\frac{2l+y}{2}}2^{2l}\\
&\times(2y+z+z'+4l+1)\frac{F\left(-z,-z';2l+y+1;\frac{\xi}{\xi-1}\right)}{\Gamma(2l+y+1)}\biggr].
\end{split}
\nonumber
\end{equation}
This gives the following expression for
$B_{z,z',\xi,\theta=2}(y+\frac{1}{2})$
\begin{equation}
\begin{split}
&B_{z,z',\xi,\theta=2}(y+\frac{1}{2})=-
(1-\xi)^{z'}\frac{\Gamma(z+y)}{\Gamma(z+1)}
\sum\limits_{l=0}^{\infty} \left(\frac{z+y+1}{2}\right)_l
\left(\frac{z'+y+1}{2}\right)_l\xi^{\frac{2l+y+1}{2}}2^{2l}\\
&\times\biggl[(z+y+1+2l)(z'+y+1+2l)
\frac{F\left(-z,-z';2l+y+2;\frac{\xi}{\xi-1}\right)}{\Gamma(2l+y+2)}+
\frac{
F\left(-z,-z';2l+y;\frac{\xi}{\xi-1}\right)}{\Gamma(2l+y)}\\
&-(2y+z+z'+4l+1)
\frac{F\left(-z,-z';2l+y+1;\frac{\xi}{\xi-1}\right)}{\Gamma(2l+y+1)}
\biggr].
\end{split}
\nonumber
\end{equation}
Set
$$
A=-z,\;\; B=-z',\;\; C=2l+y+1.
$$
Then
$$
2y+z+z'+4l+1=2C-A-B-1,\;\; z+y+1+2l=C-A,\;\; z'+y+1+2l=C-B,
$$
and the sum of three terms in the brackets in the expression for
$B_{z,z',\xi,\theta=2}(y+\frac{1}{2})$ can be rewritten as
\begin{equation}
\begin{split}
(C-A)(C-B)\frac{F(A,B;C+1;w)}{\Gamma(C+1)}+\frac{F(A,B;C-1;w)}{\Gamma(C-1)}
-(2C-A-B-1)\frac{F(A,B;C;w)}{\Gamma(C)}.
\end{split}
\nonumber
\end{equation}
Here $w=\frac{\xi}{\xi-1}$. The following relation for the Gauss
hypergeometric function holds true
\begin{equation}
\begin{split}
&(C-A)(C-B)F(A,B;C+1;w)+C(C-1)F(A,B;C-1;w)\\&-C(2C-A-B-1)F(A,B;C;w)
=ABF(A+1,B+1;C+1;w).
\end{split}
\nonumber
\end{equation}
Using this relation, we rewrite
$B_{z,z',\xi,\theta=2}(y+\frac{1}{2})$ as
\begin{equation}
\begin{split}
B_{z,z',\xi,\theta=2}(y+1/2)&=
-(1-\xi)^{z'}(zz')\frac{\Gamma(z+y)}{\Gamma(z+1)}\\
&\times\sum\limits_{l=0}^{\infty} \left(\frac{z+y+1}{2}\right)_l
\left(\frac{z'+y+1}{2}\right)_l\xi^{\frac{2l+y+1}{2}}2^{2l} \frac{
F\left(-z,-z';2l+y;\frac{\xi}{\xi-1}\right)}{\Gamma(2l+y)}.
\end{split}
\nonumber
\end{equation}
Finally, Proposition \ref{PsiaF} enables us to represent
$B_{z,z',\xi,\theta=2}(y)$ as in the statement of Proposition
\ref{PropositionParticularCase4}. The formula for
$B_{z,z',\xi,\theta=2}(y)$ in the case of an odd $y-\frac{1}{2}$
is obtained by a similar calculation.
\end{proof}
\begin{prop}
For $N=1, 2,\ldots $ let $z=2N$ and $z'=2N+\beta-2$ with
$\beta>0$. Assume that $x_1,\ldots ,x_n$ lie in the subset
$\Zp-2N+\frac{1}{2}\subset\Z'$, so that the points
$\tx_i=x_i+2N-1/2$ are in $\Zp$. Then the correlation function $
\varrho_{n}^{(z,z',\xi,\theta=2)}(x_1,\ldots ,x_n) $ is determined
by the formulae of Theorem \ref{MAINTHEOREM1}.
\end{prop}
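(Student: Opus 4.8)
The plan is to deduce the statement from Propositions \ref{PropositionParticularCase1}--\ref{PropositionParticularCase4}, reorganizing the resulting expression until it becomes the matrix hypergeometric kernel of Theorem \ref{MAINTHEOREM1}, and then to invoke the invariance of a Pfaffian under a diagonal conjugation of determinant one. By Proposition \ref{PropositionParticularCase1}, for $z=2N$, $z'=2N+\beta-2$ with $\beta>0$, and $x_i\in\Zp-2N+\frac{1}{2}$, one already has $\varrho_n^{(z,z',\xi,\theta=2)}(x_1,\ldots,x_n)=\Pf\left[\mathbb{K}_{z,z',\xi,\theta=2}(x_i,x_j)\right]_{i,j=1}^n$, where the $2\times2$ kernel $\mathbb{K}_{z,z',\xi,\theta=2}$ is assembled from the single scalar function $S_{z,z',\xi,\theta=2}(x,y)$ via the shift relations of part~2) of that Proposition, and $S_{z,z',\xi,\theta=2}(x,y)$ is an explicit Gamma-ratio times $I_{z,z',\xi,\theta=2}(x+2,y+1)+A_{z,z',\xi,\theta=2}(x+2)B_{z,z',\xi,\theta=2}(y+1)$. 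So the task is to identify this kernel with $\underline{\mathbb{K}}_{z,z',\xi,\theta=2}$.

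First I would substitute into part~3) of Proposition \ref{PropositionParticularCase1} the series representations of $I$, $A$, $B$ furnished by Propositions \ref{PropositionParticularCase2}, \ref{PropositionParticularCase3}, \ref{PropositionParticularCase4}, carrying out the shifts $x\mapsto x+2$ in $I,A$ and $y\mapsto y+1$ in $I,B$. Under these shifts the Pochhammer $2$-symbols $\left(x+z-\frac{1}{2}\right)_{l,2}$, $\left(x+z'-\frac{1}{2}\right)_{l,2},\ldots$ turn into $\left(x+z+\frac{3}{2}\right)_{l,2}$, $\left(x+z'+\frac{3}{2}\right)_{l,2},\ldots$, so that the three series become exactly $\left(E(z,z')\underline{K}_{z,z',\xi}\right)(x;y)$, $\left(E(z,z')\psi_{-\frac{1}{2}}\right)(x)$, $\left(E(z,z')\psi_{\frac{1}{2}}\right)(y)$ as defined just before Theorem \ref{MAINTHEOREM1}; the surviving scalar factors $(1-\xi)^{(z-z'-1)/2}$, $(1-\xi)^{(z'-z+1)/2}$, $\sqrt{zz'}$, and $\sqrt{(z+y+\frac{1}{2})(z'+y+\frac{1}{2})}$ are produced by multiplying together the Gamma-ratio prefactors of Propositions \ref{PropositionParticularCase1}--\ref{PropositionParticularCase4} and cancelling Gamma functions in pairs. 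This has to be done in each of the four parity classes of $\left(x-\frac{1}{2},y-\frac{1}{2}\right)$; the $\psi_{\frac{1}{2}}$ term needs the most care because the shift $y\mapsto y+1$ moves the argument across the parity cut that governs the even/odd branches of $I,A,B$ and of $E(z,z')$.

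The outcome of this reduction is that $S_{z,z',\xi,\theta=2}(x,y)$ agrees with the function $\underline{S}_{z,z',\xi,\theta=2}(x,y)$ of Theorem \ref{MAINTHEOREM1} up to the factor $g(x)g(y)$ with $g(x)=\sqrt{x+z+\frac{1}{2}}$, which is finite and strictly positive on $\Zp-2N+\frac{1}{2}$ since there $x+z+\frac{1}{2}=(x+2N-\frac{1}{2})+1\geq1$. Using the identity $g(x)g(x+1)=\sqrt{(x+z+\frac{1}{2})(x+z+\frac{3}{2})}$ and its analogue in $y$, one checks that the shift coefficients $\sqrt{(x+z+\frac{1}{2})/(x+z'+\frac{3}{2})}$ of part~2) of Proposition \ref{PropositionParticularCase1} turn into the coefficients $1/\sqrt{(z+x+\frac{3}{2})(z'+x+\frac{3}{2})}$ occurring in the $D_\pm$-blocks of $\underline{\mathbb{K}}_{z,z',\xi,\theta=2}$. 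Hence $\mathbb{K}_{z,z',\xi,\theta=2}(x,y)=G(x)\,\underline{\mathbb{K}}_{z,z',\xi,\theta=2}(x,y)\,G(y)^{t}$ with $G(x)=\diag\left(g(x)^{-1},g(x)\right)$, so $\left[\mathbb{K}_{z,z',\xi,\theta=2}(x_i,x_j)\right]_{i,j=1}^n=\mathcal{G}\,\left[\underline{\mathbb{K}}_{z,z',\xi,\theta=2}(x_i,x_j)\right]_{i,j=1}^n\,\mathcal{G}^{t}$ with $\mathcal{G}=\diag\left(g(x_1)^{-1},g(x_1),\ldots,g(x_n)^{-1},g(x_n)\right)$. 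Since $\det\mathcal{G}=\prod_i g(x_i)^{-1}g(x_i)=1$ and $\Pf(\mathcal{G}M\mathcal{G}^{t})=\det(\mathcal{G})\,\Pf(M)$, the two Pfaffians coincide, and together with Proposition \ref{PropositionParticularCase1} this gives $\varrho_n^{(z,z',\xi,\theta=2)}(x_1,\ldots,x_n)=\Pf\left[\underline{\mathbb{K}}_{z,z',\xi,\theta=2}(x_i,x_j)\right]_{i,j=1}^n$, which is the formula of Theorem \ref{MAINTHEOREM1} for these $z,z'$.

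The step I expect to be the main obstacle is the algebraic bookkeeping of the second paragraph: reorganizing the Gamma-ratios, the $(1-\xi)$-powers, and the Pochhammer $2$-symbols across the four parity cases, and verifying that the argument shifts $x\mapsto x+2$, $y\mapsto y+1$ reconcile correctly with the even/odd case distinction built into the operators $E(z,z')$ and the functions $I$, $A$, $B$. Once the scalar identity between $S_{z,z',\xi,\theta=2}$ and $\underline{S}_{z,z',\xi,\theta=2}$ is in hand, the passage to the matrix kernels and to the Pfaffian is routine.
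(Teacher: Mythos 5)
Your proposal follows the paper's own route: it invokes Propositions \ref{PropositionParticularCase1}--\ref{PropositionParticularCase4}, identifies the relation $\underline{S}_{z,z',\xi,\theta=2}(x,y)=\sqrt{(x+z+\frac{1}{2})(y+z+\frac{1}{2})}\,S_{z,z',\xi,\theta=2}(x,y)$ (which is exactly the paper's equation (\ref{Equation5.51})), and then passes between the two kernels by a determinant-one diagonal conjugation leaving the Pfaffian unchanged. The only difference is that you spell out explicitly the gauge-equivalence step that the paper compresses into ``it can be checked,'' so the argument is correct and essentially identical to the paper's.
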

\begin{proof}Let the parameters $z, z'$, and the points $x_1,\ldots ,x_n$ be chosen as in the statement of the Proposition.
Then  a straightforward application of Propositions
\ref{PropositionParticularCase1}-\ref{PropositionParticularCase4}
gives the correlation function $
\varrho_{n}^{(z,z',\xi,\theta=2)}(x_1,\ldots ,x_n) $ as a Pfaffian
of a $2\times 2$ block matrix defined by the kernel
$\mathbb{K}_{z,z',\xi,\theta=2}(x,y)$, and the explicit formulae
for the matrix entries of $\mathbb{K}_{z,z',\xi,\theta=2}(x,y)$.
It can be checked that the kernel
$\mathbb{K}_{z,z',\xi,\theta=2}(x,y)$ is equivalent to the kernel
$\underline{\mathbb{K}}_{z,z',\xi,\theta=2}(x,y)$ of Theorem
\ref{MAINTHEOREM1}, i. e.
$$
\Pf\left[\mathbb{K}_{z,z',\xi,\theta=2}(x_i,x_j)\right]_{i,j=1}^n=\Pf\left[\underline{\mathbb{K}}_{z,z',\xi,\theta=2}(x_i,x_j)\right]_{i,j=1}^n,
$$
and that the functions $S_{z,z',\xi,\theta=2}$ and
$\underline{S}_{z,z',\xi,\theta=2}$ that define the matrix kernels
$\mathbb{K}_{z,z',\xi,\theta=2}(x,y)$ and
$\underline{\mathbb{K}}_{z,z',\xi,\theta=2}(x,y)$ are related as
\begin{equation}\label{Equation5.51}
\underline{S}_{z,z',\xi,\theta=2}(x,y)=\sqrt{(x+z+\frac{1}{2})(y+z+\frac{1}{2})}
S_{z,z',\xi,\theta=2}(x,y).
\end{equation}
\end{proof}
\section{Analytic continuation}
\subsection{Analytic properties of correlation functions}
\begin{prop}\label{Proposition4.1}
Fix an arbitrary set of Young diagrams $\mathbb{D}\subset\Y$. For
any fixed admissible pair of parameters $(z,z')$, and for
$\theta>0$, the function
$$
\xi\rightarrow\sum\limits_{\lambda\in\mathbb{D}}M_{z,z',\xi,\theta}(\lambda)
$$
which is initially defined on the interval $(0,1)$ can be extended
to a holomorphic function in the unit disk $|\xi|<1$.
\end{prop}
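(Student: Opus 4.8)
The plan is to reduce the statement to an elementary comparison of radii of convergence of power series with nonnegative coefficients. Using \eqref{EquationMzztheta}, write the summand as
$$
M_{z,z',\xi,\theta}(\lambda)=(1-\xi)^{t}\,\xi^{|\lambda|}\,c_\lambda,\qquad
c_\lambda:=\frac{(z)_{\lambda,\theta}(z')_{\lambda,\theta}}{H(\lambda,\theta)H'(\lambda,\theta)},\qquad
t=\frac{zz'}{\theta},
$$
so that $c_\lambda$ is independent of $\xi$. The first point is that for admissible $(z,z')$ one has $c_\lambda\ge 0$ for all $\lambda$: in the principal series $(z)_{\lambda,\theta}(z')_{\lambda,\theta}=|(z)_{\lambda,\theta}|^{2}\ge 0$; in the complementary series $z$ and $z'$ lie in the same open interval between consecutive points of $\Z+\Z\theta$, so each pair of factors $\bigl(z+(j-1)-(i-1)\theta\bigr)\bigl(z'+(j-1)-(i-1)\theta\bigr)$ has constant sign and the product is positive; and $H(\lambda,\theta),H'(\lambda,\theta)>0$ in all cases. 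Equivalently, $c_\lambda=(1-\xi)^{-t}\xi^{-|\lambda|}M_{z,z',\xi,\theta}(\lambda)\ge 0$ because, by Proposition \ref{PropositionSeries}, $M_{z,z',\xi,\theta}$ is a positive probability measure for admissible parameters.

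Next I would invoke the normalization identity $\sum_{\lambda\in\Y}M_{z,z',\xi,\theta}(\lambda)=1$ (Proposition \ref{Prop1.3} together with the mixing construction following \eqref{EquationMzztheta}), which for $\xi\in(0,1)$ reads
$$
\sum_{n=0}^{\infty}C_n\,\xi^{n}=(1-\xi)^{-t}<\infty,\qquad C_n:=\sum_{\lambda\in\Y_n}c_\lambda\ge 0.
$$
Thus the power series $\sum_n C_n\xi^{n}$ converges at every point of $(0,1)$, hence $\limsup_n C_n^{1/n}\le 1$ and its radius of convergence is at least $1$. Since $\Y_n$ is finite, $C_n^{\mathbb D}:=\sum_{\lambda\in\mathbb D,\,|\lambda|=n}c_\lambda$ is a finite sum satisfying $0\le C_n^{\mathbb D}\le C_n$, so $\limsup_n (C_n^{\mathbb D})^{1/n}\le 1$ as well, and
$$
g(\xi):=\sum_{n=0}^{\infty}C_n^{\mathbb D}\,\xi^{n}=\sum_{\lambda\in\mathbb D}\xi^{|\lambda|}c_\lambda
$$
defines a holomorphic function on the disk $|\xi|<1$.

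Finally, for $\xi\in(0,1)$ the absolute convergence of $\sum_{\lambda\in\mathbb D}\xi^{|\lambda|}c_\lambda$ lets us pull out the scalar factor and write $\sum_{\lambda\in\mathbb D}M_{z,z',\xi,\theta}(\lambda)=(1-\xi)^{t}g(\xi)$. Whenever $|\xi|<1$ the number $1-\xi$ lies in the open right half-plane, so $\xi\mapsto(1-\xi)^{t}=\exp\bigl(t\log(1-\xi)\bigr)$ (principal branch of the logarithm) is holomorphic on $|\xi|<1$; consequently $(1-\xi)^{t}g(\xi)$ is holomorphic there and agrees with the given sum on $(0,1)$, which is the required extension. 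The only delicate ingredient is the positivity $c_\lambda\ge 0$ — this is precisely where admissibility of $(z,z')$ is used — after which the argument is just the standard fact that termwise domination of nonnegative coefficients cannot decrease the radius of convergence.
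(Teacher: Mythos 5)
Your proof is correct and follows essentially the same route as the paper: both arguments rest on positivity of the coefficients for admissible $(z,z')$ and on domination of the partial sum over $\mathbb{D}$ by the full normalized sum, whose generating series $\sum_n \frac{(t)_n}{n!}\xi^n=(1-\xi)^{-t}$ has radius of convergence at least $1$, followed by multiplication by the holomorphic factor $(1-\xi)^{t}$. Your explicit verification that $c_\lambda\ge 0$ in the principal and complementary series is a welcome elaboration of a step the paper leaves implicit, but the underlying argument is the same.
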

\begin{proof} Comparing equations (\ref{EquationVer4zmeasuren})
and (\ref{EquationMzztheta}) we obtain
\begin{equation}\label{MzzMzzn}
M_{z,z',\xi,\theta}(\lambda)=(1-\xi)^t\frac{(t)_n}{n!}\xi^n
M_{z,z',\theta}^{(n)}(\lambda),\; n=|\lambda|.
\end{equation}
Set $\mathbb{D}_n=\mathbb{D}\cap\Y_n$. Using equation
(\ref{MzzMzzn}) we can write
$$
\sum\limits_{\lambda\in\mathbb{D}}M_{z,z',\xi,\theta}(\lambda)=(1-\xi)^t\sum\limits_{n=0}^{\infty}
\left(\sum\limits_{\lambda\in\mathbb{D}_n}M_{z,z',\theta}^{(n)}(\lambda)\right)\frac{(t)_n\xi^n}{n!}.
$$
The interior sum is nonnegative, and does not exceed $1$. On the
other hand
$$
\sum\limits_{n=0}^{\infty}\left|\frac{(t)_n\xi^n}{n!}\right|=\sum\limits_{n=0}^{\infty}\frac{(t)_n|\xi|^n}{n!}<\infty,\;
\xi\in\C,\; |\xi|<1.
$$
This shows that the function
$\xi\rightarrow\sum_{\lambda\in\mathbb{D}}M_{z,z',\xi,\theta}(\lambda)$
can be represented as a power series in $\xi$, which is convergent
in the unit disk $|\xi|<1$. Therefore, the function
$\xi\rightarrow
\sum\limits_{\lambda\in\mathbb{D}}M_{z,z',\xi,\theta}(\lambda)$ is
holomorphic in $|\xi|<1$.
\end{proof}
\begin{prop}\label{Proposition4.2}
Fix an arbitrary set of Young diagrams $\mathbb{D}\subset\Y$.
Consider the Taylor expansion of the function
$$
\xi\rightarrow\sum\limits_{\lambda\in\mathbb{D}}M_{z,z',\xi,\theta}(\lambda)
$$
at $\xi=0$,
$$
\sum\limits_{\lambda\in\mathbb{D}}M_{z,z',\xi,\theta}(\lambda)=\sum\limits_{k=0}^{\infty}G^{(\theta)}_{k,\mathbb{D}}(z,z')\xi^k.
$$
Then the coefficients $G_{k,\mathbb{D}}^{(\theta)}(z,z')$ are
polynomial functions in $z,z'$.
\end{prop}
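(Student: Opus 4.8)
The plan is to keep careful track of the $z,z'$-dependence in the factorisation of $\sum_{\lambda\in\mathbb{D}}M_{z,z',\xi,\theta}(\lambda)$ already used to prove Proposition \ref{Proposition4.1}. Starting from the explicit expression (\ref{EquationMzztheta}) and grouping the diagrams of $\mathbb{D}$ by their number of boxes, with $\mathbb{D}_n=\mathbb{D}\cap\Y_n$, I would write, for admissible $(z,z')$ and $\xi\in(0,1)$,
$$
\sum_{\lambda\in\mathbb{D}}M_{z,z',\xi,\theta}(\lambda)=(1-\xi)^t\sum_{n=0}^{\infty}c_n(z,z')\,\xi^n,\qquad
c_n(z,z')=\sum_{\lambda\in\mathbb{D}_n}\frac{(z)_{\lambda,\theta}(z')_{\lambda,\theta}}{H(\lambda,\theta)H'(\lambda,\theta)} ,
$$
the regrouping being legitimate because the summands are nonnegative (admissibility) with total sum at most $1$; by Proposition \ref{Proposition4.1} both sides extend holomorphically to the disc $|\xi|<1$.

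The key observation is that the two factors on the right are power series in $\xi$ whose coefficients are \emph{already} polynomial in $z,z'$. For the second factor, $\mathbb{D}_n$ is a finite set, the numbers $H(\lambda,\theta)$ and $H'(\lambda,\theta)$ do not depend on $z,z'$, and by definition $(z)_{\lambda,\theta}=\prod_{(i,j)\in\lambda}(z+(j-1)-(i-1)\theta)$ is a polynomial in $z$ of degree $|\lambda|=n$; hence each $c_n(z,z')$ is a polynomial in $z$ and $z'$, of degree at most $n$ in each variable. For the first factor, expanding $(1-\xi)^t=\sum_{j\geq 0}\binom{t}{j}(-\xi)^j$ with $t=zz'/\theta$, each coefficient $\binom{t}{j}=t(t-1)\cdots(t-j+1)/j!$ is a polynomial in $zz'$, hence in $z,z'$.

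It then remains to form the Cauchy product of the two series, which is justified since both converge on $|\xi|<1$ (the $(1-\xi)^t$-series trivially, and $\sum_n c_n(z,z')\xi^n$ by the estimate already obtained in the proof of Proposition \ref{Proposition4.1}). This yields
$$
G^{(\theta)}_{k,\mathbb{D}}(z,z')=\sum_{j=0}^{k}(-1)^{j}\binom{t}{j}\,c_{k-j}(z,z'),\qquad t=\frac{zz'}{\theta},
$$
which is a finite sum of products of polynomials in $z,z'$, and therefore itself a polynomial in $z$ and $z'$; one even reads off that its degree in $z$, and separately in $z'$, is at most $k$.

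I do not anticipate a genuine obstacle here: the argument is essentially a bookkeeping of degrees layered on top of Proposition \ref{Proposition4.1}. The only points requiring a little care are the initial regrouping of the series by $|\lambda|$ and the term-by-term multiplication of the two $\xi$-series, but both are covered by the absolute-convergence estimates already established for Proposition \ref{Proposition4.1}, so once those are invoked the conclusion follows immediately.
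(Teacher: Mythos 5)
Your proposal is correct and follows essentially the same route as the paper: factor out $(1-\xi)^{t}$, expand it as a binomial series in $\xi$ with coefficients polynomial in $t=zz'/\theta$, observe that the coefficient of $\xi^{n}$ in the remaining sum is a polynomial in $z,z'$ because each $\mathbb{D}_n$ is finite and $(z)_{\lambda,\theta}$ is a polynomial, and take the Cauchy product. Your explicit formula $G^{(\theta)}_{k,\mathbb{D}}=\sum_{j=0}^{k}(-1)^{j}\binom{t}{j}c_{k-j}$ is exactly the paper's expression written with $\frac{(-t)_{j}}{j!}=(-1)^{j}\binom{t}{j}$, with the added (harmless) bookkeeping of convergence and degree bounds.
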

\begin{proof}
By (\ref{EquationMzztheta})
$$
\sum\limits_{\lambda\in\mathbb{D}}M_{z,z',\xi,\theta}(\lambda)=(1-\xi)^{\frac{zz'}{\theta}}\sum\limits_{n=0}^{\infty}\sum\limits_{\lambda\in\mathbb{D}_n}
(z)_{\lambda,\theta}(z')_{\lambda,\theta}\xi^n\frac{1}{H(\lambda,\theta)H'(\lambda,\theta)}.
$$
Recall that $(z)_{\lambda,\theta}$ and $(z')_{\lambda,\theta}$ are
polynomials in variables $z$ and $z'$ correspondingly. We have
$$
(1-\xi)^{\frac{zz'}{\theta}}=\sum\limits_{m=0}^{\infty}\frac{(-\frac{zz'}{\theta})_m\xi^m}{m!}.
$$
Inserting this expansion into the righthand side of the formula
for
$\sum\limits_{\lambda\in\mathbb{D}}M_{z,z',\xi,\theta}(\lambda)$
written above we find
$$
G_{k,\mathbb{D}}^{(\theta)}(z,z')=\sum\limits_{k=0}^{\infty}\sum\limits_{\lambda\in\mathbb{D}_n}
\frac{(-\frac{zz'}{\theta})_{k-n}(z)_{\lambda,\theta}(z')_{\lambda,\theta}}{(k-n)!}\frac{1}{H(\lambda,\theta)H'(\lambda,\theta)}.
$$
Since each $\mathbb{D}_n$ is a finite set, this expression is a
polynomial in variables $z,z'$.
\end{proof}
Set
$$
\underline{K}_{z,z',\xi}(x,y)=\varphi_{z,z'}(x,y)\widehat{\underline{K}}_{z,z',\xi}(x,y),
$$
where
$$
\varphi_{z,z'}(x,y)=\frac{\sqrt{\Gamma(x+z+\frac{1}{2})
\Gamma(x+z'+\frac{1}{2})
\Gamma(y+z+\frac{1}{2})\Gamma(y+z+\frac{1}{2})}}{\Gamma(x+z'+\frac{1}{2})\Gamma(y+z+\frac{1}{2})}.
$$
Then Proposition \ref{EquationRepresentationKzz'} implies that
$\widehat{\underline{K}}_{z,z',\xi}(x,y)$ is representable as a
double contour integral involving elementary functions only.
Namely, we have
\begin{equation}\label{equation6.0}
\begin{split}
\widehat{\underline{K}}_{z,z',\xi}(x,y)=\frac{1}{(2\pi
i)^2}\oint\limits_{\{w_1\}}\oint\limits_{\{w_2\}}&
\frac{(1-\sqrt{\xi}w_1)^{-z'}(1-\frac{\sqrt{\xi}}{w_1})^{z}
(1-\sqrt{\xi}w_2)^{-z}(1-\frac{\sqrt{\xi}}{w_2})^{z'}}{w_1w_2-1}\\
&\times w_1^{-x-\frac{1}{2}}w_2^{-y-\frac{1}{2}}dw_1dw_2,
\end{split}
\end{equation}
where the contours are chosen in the same way as in the statement
of Proposition \ref{EquationRepresentationKzz'}. We also introduce
the functions $\widehat{\psi}_{\frac{1}{2}}(x;z,z',\xi)$ and
$\widehat{\psi}_{-\frac{1}{2}}(x;z,z',\xi)$ that are closely
related to the functions $\psi_{\frac{1}{2}}(x;z,z',\xi)$ and
$\psi_{-\frac{1}{2}}(x;z,z',\xi)$. The functions
$\widehat{\psi}_{\frac{1}{2}}(x;z,z',\xi)$ and
$\widehat{\psi}_{-\frac{1}{2}}(x;z,z',\xi)$ are defined by
equations
$$
\widehat{\psi}_{\frac{1}{2}}(x;z,z',\xi)=f_{z,z',\frac{1}{2}}(x)(1-\xi)^{\frac{z-z'+1}{2}}
\widehat{\psi}_{\frac{1}{2}}(x;z,z',\xi),
$$
and
$$
\widehat{\psi}_{-\frac{1}{2}}(x;z,z',\xi)=f_{z,z',-\frac{1}{2}}(x)(1-\xi)^{\frac{z'-z+1}{2}}
\widehat{\psi}_{-\frac{1}{2}}(x;z,z',\xi).
$$
Here $f_{z,z',\frac{1}{2}}$ and $f_{z,z',-\frac{1}{2}}$ are gamma
prefactors given by
$$
f_{z,z',\frac{1}{2}}(x)=
\frac{\Gamma(z)}{\sqrt{\Gamma(z)\Gamma(z')}}
\frac{\sqrt{\Gamma(x+z+\frac{1}{2})\Gamma(x+z'+\frac{1}{2})}}{\Gamma(x+z+\frac{1}{2})},
$$
and
$$
f_{z,z',-\frac{1}{2}}(x)=
\frac{\Gamma(z'+1)}{\sqrt{\Gamma(z+1)\Gamma(z'+1)}}
\frac{\sqrt{\Gamma(x+z+\frac{1}{2})\Gamma(x+z'+\frac{1}{2})}}{\Gamma(x+z'+\frac{1}{2})}.
$$
Proposition \ref{PSIa} implies that the functions
$\widehat{\psi}_{\frac{1}{2}}(x;z,z',\xi)$ and
$\widehat{\psi}_{-\frac{1}{2}}(x;z,z',\xi)$ have the contour
integral representations
\begin{equation}\label{equation6.01}
\widehat{\psi}_{\frac{1}{2}}(x;z,z',\xi)=\frac{1}{2\pi
i}\oint\limits_{\{w\}}(1-\sqrt{\xi}w)^{-z}(1-\frac{\sqrt{\xi}}{w})^{z'-1}w^{-x-\frac{1}{2}}\frac{dw}{w},
\end{equation}
and
\begin{equation}\label{equation6.02}
\widehat{\psi}_{-\frac{1}{2}}(x;z,z',\xi)=\frac{1}{2\pi
i}\oint\limits_{\{w\}}(1-\sqrt{\xi}w)^{-z'-1}(1-\frac{\sqrt{\xi}}{w})^{z}w^{-x+\frac{1}{2}}\frac{dw}{w}.
\end{equation}
Here the contour $\{w\}$ is chosen as in Proposition \ref{PSIa}.
Note that the functions $\widehat{\psi}_{\frac{1}{2}}(x;z,z',\xi)$
and $\widehat{\psi}_{-\frac{1}{2}}(x;z,z',\xi)$ are defined by
contour integrals involving elementary functions only.

It is convenient to introduce the functions
$\left(\mathcal{E}\widehat{\underline{K}}_{z,z',\xi}\right)(x,y)$,
$\left(\mathcal{E}\widehat{\psi}_{\frac{1}{2}}\right)(x;z,z',\xi)$,
and
$\left(\mathcal{E}\widehat{\psi}_{-\frac{1}{2}}\right)(x;z,z',\xi)$.
If $x-\frac{1}{2}$ is an even integer, then we set
\begin{equation}\label{equation6.1}
\left(\mathcal{E}\widehat{\underline{K}}_{z,z',\xi}\right)(x,y)=-\sum\limits_{l=0}^{\infty}
\frac{(z+x+\frac{3}{2})_{l,2}}{(z'+x+\frac{5}{2})_{l,2}}\widehat{\underline{K}}_{z,z',\xi}(x+2l+1,y),
\end{equation}
\begin{equation}\label{equation6.2}
\left(\mathcal{E}\widehat{\psi}_{\pm\frac{1}{2}}\right)(x;z,z',\xi)=-\sum\limits_{l=0}^{\infty}
\frac{(z+x+\frac{3}{2})_{l,2}}{(z'+x+\frac{5}{2})_{l,2}}\widehat{\psi}_{\pm\frac{1}{2}}(x+2l+1;z,z',\xi).
\end{equation}
If $x-\frac{1}{2}$ is an odd integer, then we set
\begin{equation}\label{equation6.3}
\left(\mathcal{E}\widehat{\underline{K}}_{z,z',\xi}\right)(x,y)=\sum\limits_{l=1}^{\infty}
\frac{(-z'-x-\frac{1}{2})_{l,2}}{(-z-x+\frac{1}{2})_{l,2}}\widehat{\underline{K}}_{z,z',\xi}(x-2l+1,y),
\end{equation}
\begin{equation}\label{equation6.4}
\left(\mathcal{E}\widehat{\psi}_{\pm\frac{1}{2}}\right)(x;z,z',\xi)=\sum\limits_{l=1}^{\infty}
\frac{(-z'-x-\frac{1}{2})_{l,2}}{(-z-x+\frac{1}{2})_{l,2}}\widehat{\psi}_{\pm\frac{1}{2}}(x-2l+1;z,z',\xi).
\end{equation}
Finally, let us introduce the function
$\widehat{\underline{S}}_{z,z',\xi,\theta=2}(x,y)$ by the formula
\begin{equation}\label{equation6.5}
\widehat{\underline{S}}_{z,z',\xi,\theta=2}(x,y)=\left(z+y+\frac{1}{2}\right)\left(\mathcal{E}\widehat{\underline{K}}_{z,z',\xi}\right)(x,y)
+(1-\xi)z'\left(\mathcal{E}\widehat{\psi}_{-\frac{1}{2}}\right)(x;z,z',\xi)\left(\mathcal{E}\widehat{\psi}_{\frac{1}{2}}\right)(y;z,z',\xi).
\end{equation}
\begin{prop}The functions
$\left(\mathcal{E}\widehat{\underline{K}}_{z,z',\xi}\right)(x,y)$,
$\left(\mathcal{E}\widehat{\psi}_{\pm\frac{1}{2}}\right)(x;z,z',\xi)$
and $\widehat{\underline{S}}_{z,z',\xi,\theta=2}(x,y)$ defined by
equations (\ref{equation6.1})-(\ref{equation6.5}) are real
analytic functions of $\sqrt{\xi}$, $0<\sqrt{\xi}<1$, that admit
holomorphic extension to the open unit disk. The Taylor
coefficients of these functions are rational  in variables $z$ and
$z'$.
\end{prop}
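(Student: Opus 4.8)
The plan is to base the whole argument on the elementary contour integrals (\ref{equation6.0}), (\ref{equation6.01}), (\ref{equation6.02}) --- whose integrands involve only the functions $(1-\sqrt{\xi}w)^{\pm z}$, $(1-\sqrt{\xi}w)^{\pm z'}$ (with $z,z'$ possibly shifted by $1$) and Laurent monomials in $w$ --- and to transplant to the level of these integrals the manipulation already used in the text around equation (\ref{KF}). First I would substitute the defining series (\ref{equation6.1})--(\ref{equation6.4}) into (\ref{equation6.0})--(\ref{equation6.02}). The term $\widehat{\underline{K}}_{z,z',\xi}(x\pm(2l\mp1),y)$ (and likewise $\widehat{\psi}_{\pm1/2}(x\pm(2l\mp1);z,z',\xi)$) differs from its $l=0$ value only by a factor $w_1^{\mp2l}$ in the integrand, so, using $(a)_{l,2}=2^{l}(a/2)_{l}$, the sum over $l$ collapses to a single Gauss hypergeometric factor inside the integrand: $F\bigl(\tfrac{z+x+3/2}{2},1;\tfrac{z'+x+5/2}{2};w_1^{-2}\bigr)$ when $x-\tfrac12$ is even, and the corresponding $F(\cdot,1;\cdot;w_1^{2})-1$ coming from (\ref{equation6.3}), (\ref{equation6.4}) when $x-\tfrac12$ is odd. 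Choosing $\{w_1\}$ in the region $|w|>1$ (resp.\ $|w|<1$) --- which is compatible with the contour constraints of Proposition \ref{EquationRepresentationKzz'} --- the hypergeometric series converges uniformly on $\{w_1\}$, so summation and integration commute; since the resulting integrand is bounded on the compact contours, this simultaneously proves that the series (\ref{equation6.1})--(\ref{equation6.4}) converge for all $x,y\in\Z'$ and all $\xi\in(0,1)$, and realizes $\bigl(\mathcal{E}\widehat{\underline{K}}_{z,z',\xi}\bigr)(x,y)$ and $\bigl(\mathcal{E}\widehat{\psi}_{\pm1/2}\bigr)(x;z,z',\xi)$ as genuine contour integrals of the same elementary type.

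Holomorphy in $\sqrt{\xi}$ can then be read off these representations. The Gauss hypergeometric factor does not involve $\xi$, so the entire $\xi$-dependence sits in the elementary factors $(1-\sqrt{\xi}w_1)^{-z'}$, $(1-\tfrac{\sqrt{\xi}}{w_1})^{z}$, $(1-\sqrt{\xi}w_2)^{-z}$, $(1-\tfrac{\sqrt{\xi}}{w_2})^{z'}$ (and their $\widehat{\psi}$-analogues). For any $\xi_0$ with $|\xi_0|<1$, fix the contours so that $\sqrt{\xi_0}$ lies in their interior and $1/\sqrt{\xi_0}$ outside --- e.g.\ an annulus $1<|w_1|<1/|\sqrt{\xi_0}|$, resp.\ a circle $|\sqrt{\xi_0}|<|w_1|<1$, together with a matching $\{w_2\}$; on these fixed contours the integrand is jointly continuous and, for each fixed $w$, holomorphic in $\sqrt{\xi}$ near $\xi_0$. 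Morera's theorem (with the uniform bound on the compact contours) or differentiation under the integral sign then gives holomorphy of $\bigl(\mathcal{E}\widehat{\underline{K}}_{z,z',\xi}\bigr)(x,y)$ and $\bigl(\mathcal{E}\widehat{\psi}_{\pm1/2}\bigr)(x;z,z',\xi)$ in $\sqrt{\xi}$ near $\xi_0$; since $\xi_0$ is arbitrary in the open unit disk and the functions are single-valued, they extend holomorphically to the whole disk, and real analyticity on $(0,1)$ is the restriction to the real axis. For $\widehat{\underline{S}}_{z,z',\xi,\theta=2}$ one invokes (\ref{equation6.5}): it equals $(z+y+\tfrac12)\bigl(\mathcal{E}\widehat{\underline{K}}_{z,z',\xi}\bigr)(x,y)$ plus $(1-\xi)z'$ times a product of two functions already shown holomorphic in $\sqrt{\xi}$, hence is holomorphic in $\sqrt{\xi}$ on the disk too.

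For the Taylor coefficients I would expand the elementary factors, $(1-\sqrt{\xi}w)^{-z'}=\sum_{a\ge0}\tfrac{(z')_a}{a!}(\sqrt{\xi})^{a}w^{a}$ and so on, directly in (\ref{equation6.0}), (\ref{equation6.01}), (\ref{equation6.02}), and perform the $w$-integrations by residues. Using only $\tfrac{1}{2\pi i}\oint w^{m}dw=\delta_{m,-1}$ together with the elementary evaluation $\tfrac{1}{(2\pi i)^2}\oint\oint\tfrac{w_1^{m}w_2^{n}}{w_1w_2-1}dw_1dw_2=\delta_{m,n}\,\mathbf{1}_{\{m\ge0\}}$ forced by the contour conditions, one sees that the coefficient of $(\sqrt{\xi})^{k}$ in $\widehat{\underline{K}}_{z,z',\xi}(x',y)$, resp.\ in $\widehat{\psi}_{\pm1/2}(x';z,z',\xi)$, is a finite sum of monomials in $z,z'$ coming from the Pochhammer factors --- hence a polynomial in $(z,z')$ --- and, crucially, that it vanishes unless $x'$ lies in a bounded window depending only on $k$ (and on $y$). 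Substituting this into (\ref{equation6.1})--(\ref{equation6.4}), only finitely many values of $l$ contribute to the $(\sqrt{\xi})^{k}$-coefficient of $\bigl(\mathcal{E}\widehat{\underline{K}}_{z,z',\xi}\bigr)(x,y)$ or $\bigl(\mathcal{E}\widehat{\psi}_{\pm1/2}\bigr)(x;z,z',\xi)$, and each contributing term is the ratio of Pochhammer products $\tfrac{(z+x+3/2)_{l,2}}{(z'+x+5/2)_{l,2}}$ (or its $x\mapsto-x$ counterpart), which is rational in $(z,z')$, times a polynomial in $(z,z')$. Therefore every Taylor coefficient is a finite sum of rational functions, i.e.\ rational. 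For $\widehat{\underline{S}}_{z,z',\xi,\theta=2}$ one finishes by noting that multiplying by the polynomial $z+y+\tfrac12$, forming the Cauchy product of the two power series for $\bigl(\mathcal{E}\widehat{\psi}_{-1/2}\bigr)(x)$ and $\bigl(\mathcal{E}\widehat{\psi}_{1/2}\bigr)(y)$, and multiplying by $(1-\xi)z'=(1-(\sqrt{\xi})^{2})z'$ all preserve rationality of the coefficients.

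The genuinely delicate step is this \emph{finite-support} phenomenon. A priori the sums (\ref{equation6.1})--(\ref{equation6.4}) are infinite, and it is not obvious that the $\sqrt{\xi}$-expansion of such a sum has rational --- rather than merely analytic --- coefficients in $z,z'$. The point is that, order by order in $\sqrt{\xi}$, the Taylor coefficients of $\widehat{\underline{K}}_{z,z',\xi}$ and of $\widehat{\psi}_{\pm1/2}$ are supported on finitely many lattice points, which collapses each defining series to a finite sum at every $\sqrt{\xi}$-order; establishing this (via the residue evaluations above) is the main obstacle, after which convergence, holomorphy in $\sqrt{\xi}$, and rationality of the coefficients all follow. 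One also has to keep the contours admissible uniformly as $\sqrt{\xi}$ ranges over compact subsets of the disk, and to check that the extra requirement $\{w_1\}\subset\{|w|>1\}$ (even case) or $\{w_1\}\subset\{|w|<1\}$ (odd case) is compatible with Proposition \ref{EquationRepresentationKzz'}; both are routine.
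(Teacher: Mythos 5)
Your proposal is correct and follows essentially the same route as the paper: the paper's own proof is a one-sentence assertion that the contour integral representations obtained from (\ref{equation6.0})--(\ref{equation6.02}) and (\ref{equation6.1})--(\ref{equation6.4}) make the statement immediate, and you carry out exactly that plan, supplying the details the paper omits (collapsing the $l$-sum into a Gauss hypergeometric factor in the integrand, holomorphy in $\sqrt{\xi}$ on fixed admissible contours, and the finite-support residue argument showing each Taylor coefficient is a finite sum of rational functions of $z,z'$).
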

\begin{proof} Formulae (\ref{equation6.0}), (\ref{equation6.01}),
(\ref{equation6.02}), and (\ref{equation6.1})-(\ref{equation6.4})
enable us to obtain contour integral representations for the
functions
$\left(\mathcal{E}\widehat{\underline{K}}_{z,z',\xi}\right)(x,y)$
and
$\left(\mathcal{E}\widehat{\psi}_{\pm\frac{1}{2}}\right)(x;z,z',\xi)$.
From these representations, and from equation (\ref{equation6.5})
the statement of the Proposition follows immediately.
\end{proof}
Now we are in position to complete the proof of Theorem
\ref{MAINTHEOREM1}.\\
\subsection{Proof of Theorem \ref{MAINTHEOREM1}}
It was shown (see Section \ref{SECTION5}) that the formula for the
correlation function
$\varrho_n^{(z,z',\xi,\theta=2)}(x_1,\ldots,x_n)$ holds true for
$z=2N$, $z'=2N+\beta-2$, where $N=1,2,\ldots,$ and $\beta>0$. We
want to extend this formula for all admissible values of
parameters $(z,z')$. Assume that $z=2N$ and $z'=2N+\beta-2$. Then
a straightforward algebra (and the fact that
$\underline{S}_{z,z',\xi}(x,y)$ is antisymmetric for $z=2N$ and
$z'=2N+\beta-2$) gives the following expressions for the matrix
elements of $\underline{\mathbb{K}}_{z,z',\xi}(x,y)$:
$$
\underline{S}_{z,z',\xi,\theta=2}(x,y)=\varphi_{z,z'}(x+1,y+1)\widehat{\underline{S}}_{z,z',\xi,\theta=2}(x,y),
$$
$$
\underline{D_+S}_{z,z',\xi,\theta=2}(x,y)=\frac{\varphi_{z,z'}(x+1,y+1)}{z'+x+\frac{3}{2}}\widehat{\underline{S}}_{z,z',\xi,\theta=2}(x+1,y),
$$
$$
\underline{SD_-}_{z,z',\xi,\theta=2}(x,y)=-\frac{1}{(z'+y+\frac{3}{2})\varphi_{z,z'}(x+1,y+1)}
\widehat{\underline{S}}_{z,z',\xi,\theta=2}(y+1,x),
$$
and
$$
\underline{D_+SD_-}_{z,z',\xi,\theta=2}(x,y)=-\frac{1}{(z'+x+\frac{3}{2})(z'+y+\frac{3}{2})\varphi_{z,z'}(x+1,y+1)}
\widehat{\underline{S}}_{z,z',\xi,\theta=2}(y+1,x+1).
$$
Computing the Pfaffian in the righthand side of equation
(\ref{equation2.111}) we see that the function
$\varphi_{z,z'}(x,y)$ (which is the gamma prefactor) is completely
cancelled out. Therefore, the righthand side of equation
(\ref{equation2.111}) has the same property as the function
$\widehat{\underline{S}}_{z,z',\xi,\theta=2}(x,y)$: it is a real
analytic function in $\sqrt{\xi}$, $0<\sqrt{\xi}<1$, that admits a
holomorphic extension to the open unit disk. Moreover, the Taylor
coefficients of this function are rational in $z, z'$. On the
other hand, Propositions \ref{Proposition4.1},
\ref{Proposition4.2} imply that the left-hand side of equation
(\ref{equation2.111}) has the same property, with $\sqrt{\xi}$
replaced by $\xi$. Thus, both sides of equation
(\ref{equation2.111}) can be viewed as holomorphic functions with
 Taylor coefficients rational in $z$ and $z'$. Since the set
 $$
 \{(z,z'): z\;\; \mbox{is  a large natural number}\;\; 2N\;\;
 \mbox{and}\;\;
 z'>2N-2\}
 $$
is a set of uniqueness of rational functions in two variables $z,
z'$, we conclude that  equation (\ref{equation2.111}) holds true
for all admissible $z, z'$. \qed
\section{Proof of Propositions \ref{Proposition2.12} and \ref{Proposition2.13}}
\subsection{Proof of Proposition\ref{Proposition2.12}}
For $N=1,2,\ldots$, let $z=2N$ and
$z'=2N+\beta-2$ with $\beta>0$. Assume that $x, y$ lie in the
subset $\Zp-2N+\frac{1}{2}\in\Z'$. Then the formula for
$\underline{S}_{z,z',\xi,\theta=2}(x,y)$ is obtained form
Proposition \ref{Proposition4.2}, and equations
(\ref{Equation5.131}), (\ref{Equation5.51}). Thus the Proposition
is proved for these specific values of the parameters $z$ and
$z'$. Now we claim that the expression in the righthand side of
the formula for $\underline{S}_{z,z',\xi,\theta=2}(x,y)$ in the
statement of the Proposition is identically equal to the
expression in the righthand side of formula for
$\underline{S}_{z,z',\xi,\theta=2}(x,y)$ in the statement of
Theorem \ref{MAINTHEOREM1}. Indeed, using the identity
\begin{equation}
\begin{split}
&\frac{1}{(2\pi
i)^2}\sqrt{\frac{\Gamma(x+z+\frac{1}{2})\Gamma(y+z'+\frac{1}{2})}{\Gamma(x+z'+\frac{1}{2})\Gamma(y+z+\frac{1}{2})}}
\\
&\times \oint\limits_{\{w_1\}}\oint\limits_{\{w_2\}}
\frac{(1-\sqrt{\xi}w_1)^{-z'}(1-\frac{\sqrt{\xi}}{w_1})^{z}
(1-\sqrt{\xi}w_2)^{-z}(1-\frac{\sqrt{\xi}}{w_2})^{z'}}{w_1w_2-1}
\frac{dw_1}{w_1^{x+\frac{1}{2}}}\frac{dw_2}{w_2^{y+\frac{1}{2}}}\\
&=\frac{1}{(2\pi
i)^2}\sqrt{\frac{\Gamma(y+z+\frac{1}{2})\Gamma(x+z'+\frac{1}{2})}{\Gamma(y+z'+\frac{1}{2})\Gamma(x+z+\frac{1}{2})}}\\
&\times \oint\limits_{\{w_1\}}\oint\limits_{\{w_2\}}
\frac{(1-\sqrt{\xi}w_1)^{-z'}(1-\frac{\sqrt{\xi}}{w_1})^{z}
(1-\sqrt{\xi}w_2)^{-z}(1-\frac{\sqrt{\xi}}{w_2})^{z'}}{w_1w_2-1}
\frac{dw_1}{w_1^{y+\frac{1}{2}}}\frac{dw_2}{w_2^{x+\frac{1}{2}}}
\end{split}
\nonumber
\end{equation}
(which follows from Proposition \ref{EquationRepresentationKzz'},
and from the fact that
$\underline{K}_{z,z',\xi}(x,y)=\underline{K}_{z,z',\xi}(y,x)$) we
can rewrite the righthand side of equation (\ref{Equation2.121})
as a double contour integral. In this way we arrive to formula
(\ref{Equation5.51}) for $\underline{S}_{z,z',\xi,\theta=2}(x,y)$,
where the function  $S_{z,z',\xi,\theta=2}(x,y)$ is given in
Proposition \ref{PropositionParticularCase1}, 3). The expressions
for $\underline{S}_{z,z',\xi,\theta=2}(x,y)$ and
$S_{z,z',\xi,\theta=2}(x,y)$ hold now for all admissible $z$ and
$z'$. Repeating the algebraic calculations as in Propositions
\ref{PropositionParticularCase2}-\ref{PropositionParticularCase4}
we obtain the formula for $S_{z,z',\xi,\theta=2}(x,y)$ stated in
Theorem \ref{MAINTHEOREM1}. \qed
\subsection{Proof of Proposition \ref{Proposition2.13}}
Using formulae of Theorem \ref{MAINTHEOREM1}, we check by direct
calculations that
$\varrho_n^{(z,z'-1,\xi,\theta=2)}(x_1,\ldots,x_n)$ can be written
as in the statement of  Proposition  \ref{Proposition2.13}. In
particular, $\varrho_n^{(z,z'-1,\xi,\theta=2)}(x_1,\ldots,x_n)$ is
determined by the kernel
$\widehat{\underline{S}}_{z,z'=z-1,\xi,\theta=2}(x,y)$ defined by
equation (\ref{ContorS4z'=z-1}). However, the contours $\{w_1\}$
and $\{w_2\}$ in equation (\ref{ContorS4z'=z-1}) still must be
chosen  according to the parities of $x-\frac{1}{2}$ and
$y-\frac{1}{2}$. In particular, we can choose $\{w_1\}$, $\{w_2\}$
to be circular contours such that $|w_1|>1$ in the case when
$x-\frac{1}{2}$ is even, $|w_1|<1$ in the case when
$x-\frac{1}{2}$ is odd, $|w_2|>1$ in the case of an even
$y-\frac{1}{2}$, and $|w_2|<1$ in the case of an odd
$y-\frac{1}{2}$. In addition, we require that all three conditions
in Proposition \ref{EquationRepresentationKzz'} on the contours
$\{w_1\}$, $\{w_2\}$ are satisfied. In particular, in the case
when both $x-\frac{1}{2}$, $y-\frac{1}{2}$ are even,  we choose
$|w_1|>1$ and $|w_2|>1$, so the statement of the Proposition holds
true for even $x-\frac{1}{2}$ and $y-\frac{1}{2}$.

Now we are going to show that  equation (\ref{ContorS4z'=z-1})
with $|w_1|>1$ and $|w_2|>1$ holds true no matter what parities of
$x-\frac{1}{2}$ and $y-\frac{1}{2}$ are. Assume, for example, that
$x-\frac{1}{2}$ is even, and $y-\frac{1}{2}$ is odd. Then the
contours $\{w_1\}$ and $\{w_2\}$ must be chosen in formula
(\ref{ContorS4z'=z-1}) such that $|w_1|>1$, and $|w_2|<1$. Let us
transform the integral in the righthand side of equation
(\ref{ContorS4z'=z-1}): keeping the contour $\{w_1\}$ unchanged we
move $\{w_2\}$ outside the circle of the radius $1$. As a result
contributions from the residues of the function
$w_2\rightarrow\frac{1}{w_2^2-1}$ will arise. These contributions
are
\begin{equation}
\begin{split}
&\frac{1}{2\pi
i}\oint\limits_{\{w_1\}}\frac{(1-\sqrt{\xi}w_1)^{-z}(1-\frac{\sqrt{\xi}}{w_1})^z}{w_1-1}
\frac{(1-w_1)}{2(w_1^2-1)}\frac{dw_1}{w_1^{x-1/2}}\\
&+\frac{(-1)^{y-1/2}}{2\pi
i}\oint\limits_{\{w_1\}}\frac{(1-\sqrt{\xi}w_1)^{-z}(1-\frac{\sqrt{\xi}}{w_1})^z}{-w_1-1}
\frac{(-1-w_1)}{-2(w_1^2-1)}\frac{dw_1}{w_1^{x-1/2}}=0,
\end{split}
\nonumber
\end{equation}
since $y-1/2$ is odd. Other cases can be considered in the same
way. \qed

\end{document}